\newcommand{\R}{\mathbb{R}}
\newcommand{\eps}{\epsilon}
\renewcommand{\input}{P}
\newcommand{\nerve}{\mathrm{Nrv}}
\newcommand{\Clos}{\mathrm{Clos}}
\newcommand{\setP}{\mathcal{P}}
\newcommand{\setQ}{\mathcal{Q}}
\newcommand{\im}{\mathrm{im}}
\newcommand{\thick}{\mathcal{T}}
\newcommand{\voronoi}{\mathrm{Vor}}
\newcommand{\critical}[1]{\mathrm{crit}(#1)}
\newcommand{\distance}{\mathrm{dist}}
\newcommand{\ignore}[1]{}
\newtheorem{theorem}{Theorem}
\newtheorem{lemma}[theorem]{Lemma}
\newtheorem{definition}[theorem]{Definition}
\title{The Offset Filtration of Convex Objects\thanks{
Work by D.H.\ and D.S.\ has been supported in part by the Israel Science Foundation 
(grant no. 1102/11), by the German-Israeli Foundation 
(grant no. 1150-82.6/2011), and by the Hermann Minkowski--Minerva
Center for Geometry at Tel Aviv University. 
M.K. acknowledges support by the Max Planck Center of Visual 
Computing and Communication.
}
}
\author[1]{Dan Halperin\thanks{danha@post.tau.ac.il}}
\author[2]{Michael Kerber\thanks{mkerber@mpi-inf.mpg.de}}
\author[1]{Doron Shaharabani\thanks{doron.s@hotmail.com}}
\affil[1]{Tel Aviv University, Tel Aviv, Israel}
\affil[2]{Max Planck Institute for Informatics, Saarbr\"ucken, Germany}
\date{}
\begin{document}

\maketitle
\begin{abstract}
We consider offsets of a union of convex objects.
We aim for a filtration, a sequence of nested
cell complexes, that captures the topological evolution of 
the offsets for increasing radii. 
We describe methods to compute a filtration based on the
Voronoi partition with respect to the given convex objects.
We prove that, in two and three dimensions, 
the size of the filtration 
is proportional to the size of the Voronoi diagram.
Our algorithm runs in $\Theta(n \log{n})$ in the $2$-dimensional case 
and in expected time $O(n^{3 + \eps})$, for any $\eps > 0$, in the 
$3$-dimensional case.
Our approach is inspired by alpha-complexes for point sets, 
but requires more involved machinery and analysis primarily since
Voronoi regions of general convex objects do not form a
good cover.
We show by experiments that our approach results in a similarly fast
and topologically more stable method for computing a filtration compared
to approximating the input by point samples.
\end{abstract}

\section{Introduction}
\paragraph{Motivation}
The theory of \emph{persistent homology} has led to a new way
of understanding data through its topological properties, commonly
referred as \emph{topological data analysis}.
The most common setup assumes that the data is given as a finite set of points
and analyzes the sublevel sets of the distance function to the point set.
An equivalent formulation is to take offsets of the point sets with increasing
offset parameter and to study the changes in the hole structure of the shape
obtained by the union of the offset balls; see Figure~\ref{fig:barcode_example} 
for an illustration and informal description. Notice that we
postpone the exposition of formal topology background to
the next section.

We pose the question how to generalize the default framework for point sets to
more general input shapes. While there is no theoretical obstacle to consider
distance functions from shapes rather than points (at least for reasonably ``nice''
shapes), it raises computational questions: How can critical points of that
distance functions be computed efficiently? And how can the topological
information be encoded in a combinatorial structure of small size?

With the wealth of applications of persistence of point set data,
and together with the challenges raised by the extension
from point sets to sets of convex objects, we believe that
the latter is a logical next step of
investigation. Our attention to this problem originates
from the increasingly popular application of 3D printing. A
common problem in this context is that often available models of
shapes contain features that complicate the printing
process, or turn it impossible altogether.  A
ubiquitous example is the presence of thin features which
may easily break, and call for thickening. One work-around
is to offset the model by a small value to stabilize it,
but the optimal offset parameter is unclear, as it should
get rid of many spurious features of the model without
introducing too many new ones. Moreover, one would prefer
{\it local thickening}~\cite{DBLP:journals/tog/StavaVBCM12}, and
possibly thickening by different offset size in different
parts of the model. A by-product of our work here is a step toward
automatically detecting target regions for local thickening
that do not incur spurious artifacts. 
Persistent homology provides a \emph{barcode} which constitutes
a summary of the hole structure of the offset shape for any parameter value
(Figure~\ref{fig:barcode_example})
which is clearly helpful for the choice of a good offset value.
We are especially interested in an exact method in this context because
any approximate barcode 
(obtained, for instance, by approximating the shapes by point set)
introduces artificial topological noise
which are difficult to discern from short-ranged features. 
This, in turn, makes it even more difficult to choose a suitable offset radius.

\paragraph{Problem definition and contribution}
We design, analyze, implement, and experimentally evaluate algorithms for computing persistence barcodes
of convex input objects. 
More precisely, we concentrate on the problem of computing a \emph{filtration}, a sequence
of nested combinatorial cell complexes that undergoes the same topological changes as the offset shapes.
Since the input objects are convex, the nerve theorem asserts that the intersection patterns of the offsets
(called the \emph{nerve}) reveal the entire topological information.
This leads to the generalization of \emph{\v{C}ech} filtrations from point sets to our scenario.
The resulting filtration has a size of $O(n^{d+1})$, where $n$ is the number of input objects contained in $d$-dimensional
Euclidean space.
This size is already problematic for small $d$ and 
a natural idea to reduce its size is to consider \emph{restricted offsets}, that is, intersecting the offset of an input object
to the \emph{Voronoi region}, the portion of the space which is closest to the object. 
This approach is again inspired by the analogue
case of point sets, where \emph{alpha-complexes} are preferred over the \v{C}ech complexes for small dimensions.
However, the approach for point sets does \emph{not} directly carry over to arbitrary convex objects:
Voronoi regions of convex objects are not necessarily convex and can intersect in non-contractible patterns
which prevents the application of the nerve theorem.

\begin{figure}
\centering
\includegraphics[width=2.2cm]{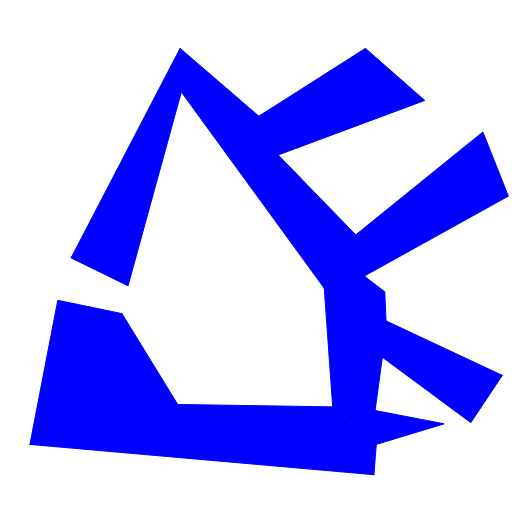}
\includegraphics[width=2.2cm]{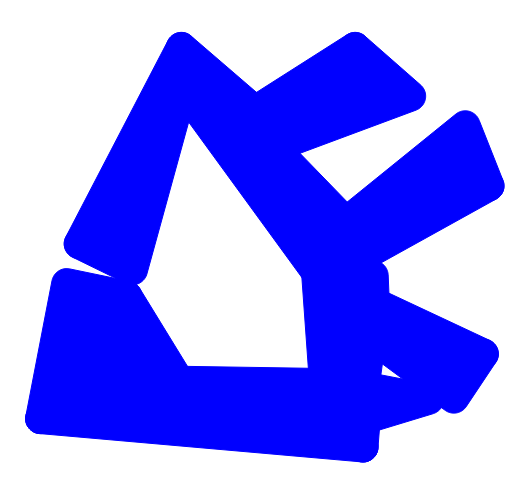}
\includegraphics[width=2.2cm]{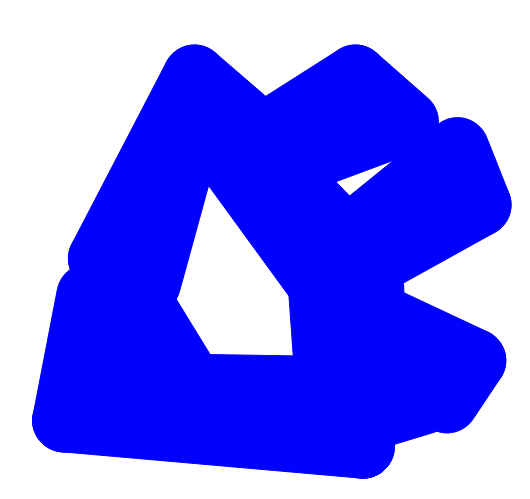}
\includegraphics[width=2.2cm]{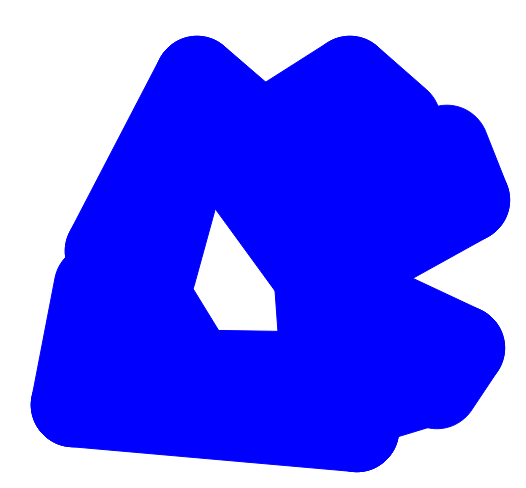}
\includegraphics[width=4.4cm]{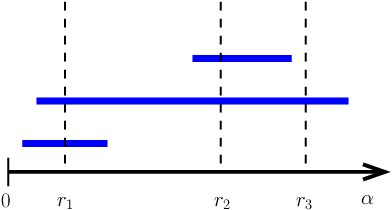}
\caption{From left to right, we see an example shape, three offsets
with increasing radii $r_1<r_2<r_3$, and the $1$-barcode of the shape. 
While being simply-connected initially, two holes have been formed at radius $r_1$,
one of which disappears for a slightly larger offset value while the other one \emph{persists}
for a large range of scales. At $r_2$, we see the formation of another rather short-lived hole.
The barcode summarizes these facts by displaying one bar per hole.
The bar spans over the range of offset radii for which the hole is present.
We can read off the number of holes for radius $\alpha$ by counting the number of bars that have $\alpha$
in their $x$-range.
}
\label{fig:barcode_example}
\vspace{-3mm}
\end{figure}

Our first result is that in $\R^2$, the non-contractibility does not really cause problems:
the barcode of convex polygons is encoded in the barcode of the nerve of their restricted offsets,
despite the presence of non-contractible intersections.
The analogue statement in $\R^3$ is not true.
The result implies the existence of a linear-sized filtration, as opposed to the cubical
size obtained from using the unrestricted nerves. The filtration can be computed in time 
$O(n \log{n})$, ruled by the computation time of the Voronoi diagram.
While the proof ultimately still relies on the nerve theorem, it requires a deeper investigation
of the structure of Voronoi diagrams of convex objects.
Moreover, it requires a slight generalization of a result in~\cite{co-towards}
(see Theorem~\ref{thm:nerve_inclusion_commute} below),
showing that the nerve isomorphism commutes with inclusions, to the case of filtrations connected
by certain simplicial maps.

Our second result is a general construction of a cell complex with the desired barcode in three dimensions. 
Our construction scheme computes the Voronoi diagram of the input sites as a preprocessing step
and cuts (subdivides) the lower-dimensional cells of the Voronoi diagrams into smaller pieces in a controlled way.
The resulting refinement of the Voronoi diagram gives rise to a dual cell complex whose size is asymptotically
equal to the complexity of the Voronoi diagram of the input sites. As the latter is known to be bounded by $O(n^{3+\eps})$,
our filtration is significantly smaller than $O(n^4)$, as obtained by a \v{C}ech-like filtration. 
The time for computing the filtration is bounded from above by $O(n^{3+\eps})$.
The correctness proof works by (conceptually) ``thicken up'' lower-dimensional cells
of the Voronoi diagram to obtain a good cover of the space, for which the nerve theorem applies.

We have implemented our algorithm for polygons using the \textsc{Cgal} library and report on extensive experimental
evaluation. In particular, we compare our approach with the natural alternative to replace
the input polygons with sufficiently dense point samples. 
Although the point sample approach yields very close approximations to the exact barcode in a comparable running time,
we demonstrate that the approximation error induced by the sampling results in additional noise on a large range of scales
and therefore makes the topological analysis of the offset filtration more difficult.

\paragraph{Related work}
Since its introduction in~\cite{elz-topological}, persistent homology has become
an active area of research, including theoretical, algorithmic, and application results;
we refer to the textbook~\cite{eh-computational} and the surveys~\cite{em-survey, carlsson-topology} for an overview.
The information gathered by persistence is usually displayed either in terms of a barcode (as in this work)
or, equivalently, via a \emph{persistence diagram}~\cite{eh-computational}.

The textbook~\cite{eh-computational} describes the most common approaches for computing filtrations of point sets, 
including \v{C}ech- and alpha-complexes mentioned above. 
Another common construction
is the \emph{Vietoris-Rips} complex which approximates the \v{C}ech complex in the sense that it is nested between two \v{C}ech complexes
on similar scales; precisely, the Rips complex at scale $\alpha$
contains the \v{C}ech complex at scale $\alpha$, and is contained in the
\v{C}ech complex at scale $\sqrt{2}\alpha$.
However, it is easy to see that this property does not carry over to the case of arbitrary convex objects.

Topological methods for shape analysis have been extensively studied: a commonly used concept are \emph{Reeb graphs} which yield a skeleton representing the connectivity 
of the shape and can be seen as a special case of persistent homology in dimension $0$; see~\cite{reeb-survey} for ample applications.
The full theory of persistent homology has also been applied to various tasks in shape analysis, including shape segmentation~\cite{socg-persistence}
and partial shape similarity~\cite{fl-persistent}. While these works study the \emph{intrinsic} properties of a shape through descriptor functions
independent of the embedding, our problem setup rather asks about \emph{extrinsic} properties, that is, how the shape is embedded in ambient space.

Voronoi diagrams are one of the most basic objects in computational geometry~\cite{ak-chapter,dutch,fortune-voronoi}. Efficient algorithms for the case
of point sets have been designed and implemented in 2D and 3D~\cite{pt-3d,shewchuk-delaunay,yvinec-2d} and higher dimensions~\cite{hb-efficient}.
In the plane, generalizations to convex objects~\cite{ky-voronoi} and line segments and circular arcs~\cite{held-vroni,yap-vd} have been presented.
Generalizations in three dimensions include Voronoi diagrams of (infinite) lines~\cite{hsh-constructing}
and approximating the Voronoi diagram of one polyhedron~\cite{milenkovic-robust}. However, an exact and efficient method
for a set of (convex) polyhedra in $\R^3$ is still missing; we refer to~\cite{ysl-towards} for a discussion of the difficulties.

\paragraph{Outline} 
We describe barcodes of convex objects, generalizing \v{C}ech complexes, in Section~\ref{sec:topology},
introducing basic topological concepts. We define our generalization of alpha-complexes in Section~\ref{sec:restricted_barcodes} and 
show how it leads to more efficient filtrations for $d=2$ (Section~\ref{sec:restricted_barcodes_2d}) and $d=3$ (Section~\ref{sec:higher_dim}).
We report on experimental evaluations for the planar case in Section~\ref{sec:experiments}.
We conclude in Section~\ref{sec:conclusion}.

\section{Topological background}
\label{sec:topology}
\label{sec:barcodes_of_shapes}
We review standard notation in persistent homology and dualizations of set covers through nerves.
We assume familiarity with basic topological notions, in particular simplicial complexes and homology groups;
the necessary background is covered by the textbook~\cite{eh-computational} and in more detail by~\cite{hatcher,munkres}.

\paragraph{Persistent homology}
A \emph{persistence module} is sequence of vector spaces $(V_\alpha)_{\alpha\geq 0}$ with linear maps $F_{\alpha,\alpha'}:V_\alpha\rightarrow V_{\alpha'}$ for $\alpha\leq\alpha'$
that satisfy $F_{\alpha',\alpha''}\circ F_{\alpha,\alpha'} = F_{\alpha,\alpha''}$ and $F_{\alpha,\alpha}$ is the identity function on $V_\alpha$~\cite{ccggo-proximity}.
We say that $\alpha$ is a \emph{critical value} if $V_{\alpha-\eps}$ is not isomorphic to $V_{\alpha}$ for $\eps>0$.
We assume the usual \emph{tameness conditions} that each $V_\alpha$ has finite rank, 
and the number of critical values is finite.
A generator (basis element) $\gamma$ of $V_\alpha$ is \emph{born at $\alpha$} if $\gamma \notin\im F_{\alpha-\eps,\alpha}$ for any $\eps>0$.
A generator $\gamma$ born at $\alpha$ \emph{dies at $\beta$}, if $\gamma\in\im F_{\alpha-\eps,\beta}$, but $\gamma\notin \im F_{\alpha-\eps,\beta-\eps}$.
In this way, every generator in $V_\alpha$ is assigned to a birth-death interval 
and the length of the birth-death interval is called the \emph{persistence} of the generator.
The \emph{barcode} of the persistence module is the set of these intervals.

We call a collection of spaces $(Q_\alpha)_{\alpha\geq 0}$ with the property that $Q_{\alpha}\subseteq Q_{\alpha'}$ whenever $\alpha\leq\alpha'$
a \emph{filtration induced by inclusion}.
A standard way to obtain a persistence module is to apply the \emph{homology functor} on such a filtration, that is, 
to consider $(H_p(Q_\alpha))_{\alpha\geq 0}$ where $H_p(\cdot)$ is the $p$-th homology group over an arbitrary fixed base field, with $p\geq 0$.
Indeed, the inclusion map from $Q_\alpha$ to $Q_{\alpha'}$ induces a map $F_{\alpha,\alpha'}H_p(Q_\alpha)\rightarrow H_p(Q_{\alpha'})$ satisfying
the required properties. This construction works generally for filtrations of simplicial complexes (using simplicial homology),
of CW-complexes (using cellular homology) and of subsets of $\R^d$ (using singular homology).

For simplicial complexes, we will also use a more general construction: a \emph{simplicial map} $F:S_1\rightarrow S_2$ between
two simplicial complexes $S_1$ and $S_2$ is a map induced by mapping vertices of $S_1$ to vertices of $S_2$. We call a sequence
of simplicial complexes $(S_\alpha)_{\alpha\geq 0}$ together with simplicial maps $F_{\alpha,\alpha'}$ 
a \emph{filtration induced by simplicial maps}. Such a filtration gives rise to a persistence module, and thus a barcode, in the same
way as in the inclusion case, by applying the functor $H_p(\cdot)$.
We call the barcode of the persistence module of a filtration using $H_p(\cdot)$ the \emph{$p$-barcode} of the filtration

\paragraph{Nerves}
Let $\setP:=\{P^1,\ldots,P^n\}$ be a collection of non-empty sets in a common domain.
The \emph{underlying space} is defined as $|\setP|:=\bigcup_{i=1,\ldots,n} P^i$.
We call a non-empty subset $\{P^{i_1},\ldots,P^{i_k}\}\subseteq\setP$ \emph{intersecting}, 
if $\bigcap_{j=1}^{k}P^{i_j}\neq\emptyset$.
The \emph{nerve} $\nerve(\setP)$ of $\setP$ is the collection of all intersecting subsets.
It is clear by definition that every singleton set $\{P^i\}$ is in the nerve,
and that any non-empty subset of an intersecting set is intersecting.
The latter property implies that the nerve is a \emph{simplicial complex}:
the singleton sets $\{P_i\}$ are the \emph{vertices} of that complex.
We call $\setP$ a \emph{good cover} if all sets in the collection
are closed and triangulable, and any intersecting subset yields a contractible intersection.
For example, any collection of closed convex sets forms a good cover.
\begin{theorem}[Nerve Theorem]
\label{thm:nerve_theorem}
If $\setP$ is a good cover, $|\setP|$ is homotopically equivalent to $\nerve(\setP)$.
In particular, $H_p(|\setP|)=H_p(\nerve(\setP))$ for all $p\geq 0$.
\end{theorem}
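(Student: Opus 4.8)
The plan is to prove the classical nerve lemma through the \emph{Mayer--Vietoris blow-up complex}, a space $B$ that sits over both $|\setP|$ and $|\nerve(\setP)|$ via two projections, each of which I will argue is a homotopy equivalence. First I would reduce to the case of an \emph{open} good cover: since each $P^i$ is closed and the ambient space is triangulable, one thickens $P^i$ to an open neighborhood $U^i$ that deformation retracts onto $P^i$ and creates no new intersection patterns, so $\nerve(\{U^i\})=\nerve(\setP)$, $\bigcup_i U^i \simeq |\setP|$, and each intersection $U^\sigma:=\bigcap_{i\in\sigma}U^i$ deformation retracts onto $P^\sigma:=\bigcap_{i\in\sigma}P^i$, hence is contractible for every $\sigma\in\nerve(\setP)$. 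The open cover $\{U^i\}$ additionally admits a subordinate partition of unity $\{\rho_i\}$, which we will use below. It therefore suffices to prove the statement for $\{U^i\}$.

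Define the blow-up complex
\[
B \;:=\; \bigcup_{\sigma\in\nerve(\setP)} U^\sigma \times \Delta_\sigma \;\subseteq\; \Bigl(\bigcup_i U^i\Bigr)\times |\nerve(\setP)|,
\]
where $\Delta_\sigma\subseteq|\nerve(\setP)|$ is the closed geometric simplex spanned by the vertices of $\sigma$, and let $\pi_1\colon B\to\bigcup_i U^i$ and $\pi_2\colon B\to|\nerve(\setP)|$ be the coordinate projections. The map $\pi_1$ is a homotopy equivalence, and this step does \emph{not} use contractibility of intersections: writing $v_i$ for the nerve vertex of $U^i$, the formula $s(x):=\bigl(x,\sum_i\rho_i(x)\,v_i\bigr)$ defines a section of $\pi_1$ landing in $B$ (because $\rho_i(x)\neq 0$ forces $x\in U^i$, so $v_i$ is a vertex of the simplex $\Delta_{\sigma_x}$ with $\sigma_x=\{i: x\in U^i\}$); since the fiber of $\pi_1$ over $x$ is the convex set $\{x\}\times\Delta_{\sigma_x}$, the fiberwise straight-line homotopy from the identity of $B$ to $s\circ\pi_1$ stays inside $B$, giving $\pi_1\circ s=\mathrm{id}$ and $s\circ\pi_1\simeq\mathrm{id}_B$.

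The map $\pi_2$ is a homotopy equivalence, and this is where the good-cover hypothesis enters. I would argue by induction over the skeleta of $\nerve(\setP)$: set $B^{(k)}:=\pi_2^{-1}(|N^{(k)}|)$, where $N^{(k)}$ is the $k$-skeleton. Passing from $B^{(k-1)}$ to $B^{(k)}$ attaches, for each $k$-simplex $\sigma$, the piece $U^\sigma\times\Delta_\sigma$ along $U^\sigma\times\partial\Delta_\sigma$; since $U^\sigma$ is contractible, $U^\sigma\times\Delta_\sigma$ is homotopy equivalent, relative to $U^\sigma\times\partial\Delta_\sigma$, to the way $\Delta_\sigma$ is attached to $|N^{(k-1)}|$ along $\partial\Delta_\sigma$ in building $|N^{(k)}|$. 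The gluing lemma for homotopy equivalences along cofibrations then promotes the equivalence $\pi_2^{(k-1)}\colon B^{(k-1)}\to|N^{(k-1)}|$ to $\pi_2^{(k)}\colon B^{(k)}\to|N^{(k)}|$; taking $k=\dim\nerve(\setP)$ gives that $\pi_2$ is a homotopy equivalence. (Equivalently, $B$ is a model for the homotopy colimit of the diagram $\sigma\mapsto U^\sigma$ over the face poset of $\nerve(\setP)$, and contractibility of all $U^\sigma$ identifies this hocolim with that of the constant one-point diagram, namely $|\nerve(\setP)|$.)

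Combining the two projections yields
\[
|\setP| \;\simeq\; \bigcup_i U^i \;\xleftarrow{\ \pi_1\ }\; B \;\xrightarrow{\ \pi_2\ }\; |\nerve(\setP)|,
\]
a chain of homotopy equivalences, hence $|\setP|\simeq\nerve(\setP)$; applying singular homology gives $H_p(|\setP|)=H_p(\nerve(\setP))$ for all $p\geq 0$. I expect the main obstacle to be the inductive verification that $\pi_2$ is a homotopy equivalence: it is the only place the good-cover assumption is used, the statement is false without it, and making the skeletal induction rigorous requires care in setting up the homotopy pushouts and invoking the gluing lemma so that the contractibility of each $U^\sigma$ is correctly converted into the assertion that the fibered construction over $|\nerve(\setP)|$ carries the homotopy type of $|\nerve(\setP)|$ itself.
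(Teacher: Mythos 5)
The paper does not prove the Nerve Theorem at all; it is stated as a classical background result and used as a black box (the blow-up/Mayer--Vietoris argument it appeals to elsewhere, e.g.\ the reference to \cite[Sec.~4G]{hatcher} in the appendix, is exactly the proof you are sketching). So there is no ``paper's own proof'' to compare against; your job here is simply to be correct, and you largely are: the blow-up complex $B$ sitting over both $\bigcup_i U^i$ and $|\nerve(\setP)|$, with $\pi_1$ handled by the partition-of-unity section and fiberwise straight-line homotopy (convex fibers $\{x\}\times\Delta_{\sigma_x}$, and the segment from any $(x,t)\in B$ to $s(x)$ stays inside $U^{\tau\cup\sigma_x}\times\Delta_{\tau\cup\sigma_x}\subseteq B$), and $\pi_2$ handled by skeletal induction plus the gluing lemma, is the standard and correct argument, and you rightly isolate that the good-cover hypothesis enters only in the $\pi_2$ step.

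The one step deserving more care is the initial reduction from closed to open covers. You assert that each $P^i$ can be thickened to $U^i$ so that no new intersections appear and each $U^\sigma$ deformation retracts onto $P^\sigma$. This is true under the paper's hypotheses (the $P^i$ are closed and triangulable subsets of a common domain, hence the pair $(\bigcup U^i, \bigcup P^i)$ and all intersections are CW/ANR pairs), but it is not automatic for arbitrary closed sets: without some regularity, thickenings can create spurious intersections or fail to retract. Since the paper's definition of ``good cover'' builds triangulability into the hypotheses precisely to license this kind of open/closed passage, your reduction is fine, but the proof should say explicitly that it is using that hypothesis at this point rather than only at the contractibility step.
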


The following lemma is a slightly modified version of~\cite[Lemma 3.4]{co-towards}
and asserts that the isomorphisms from the nerve theorem between
$H_p(|\setP|)$ and $H_p(\nerve(\setP))$ commute with inclusions.

\begin{theorem}[Chazal and Oudot]
\label{thm:nerve_inclusion_commute}
Let $\setP:=\{P^1,\ldots,P^n\}$ and $\setQ:=\{Q^1,\ldots,Q^n\}$ be good covers
with $P^i\subseteq Q^i$ for all $i=1,\ldots,n$.
Then, the isomorphisms $\phi_{\setP}: H_p(|\setP|)\rightarrow H_p(\nerve(\setP))$
and $\phi_{\setQ}: H_p(|\setQ|)\rightarrow H_p(\nerve(\setQ))$ commute with the
maps $i^\ast: H_p(|\setP|)\rightarrow H_p(|\setQ|)$ and $j^\ast: H_p(\nerve(\setP))\rightarrow H_p(\nerve(\setQ))$
that are induced by canonical inclusions, that is, $j^\ast\circ\phi_{\setP}=\phi_{\setQ}\circ i^\ast$.
\end{theorem}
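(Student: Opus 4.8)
The plan is to reduce the statement to the functoriality of the nerve theorem as established in~\cite{co-towards}, the only new ingredient being that the two covers $\setP$ and $\setQ$ share the \emph{same} index set and are related componentwise by inclusion. The standard tool here is the \emph{Mayer--Vietoris blow-up complex} (also called the nerve diagram or the homotopy colimit of the cover). For a good cover $\setP=\{P^1,\dots,P^n\}$ one forms the space
\[
\widehat{\setP}\;:=\;\bigcup_{\sigma\in\nerve(\setP)}\Bigl(\bigcap_{i\in\sigma}P^i\Bigr)\times|\sigma|\;\subseteq\;|\setP|\times\Delta^{n-1},
\]
where $|\sigma|$ denotes the geometric simplex spanned by the vertices in $\sigma$. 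There are two canonical projections $p_{\setP}:\widehat{\setP}\to|\setP|$ and $q_{\setP}:\widehat{\setP}\to|\nerve(\setP)|$, and the good cover hypothesis guarantees (via the fact that all nonempty fibers of $p_{\setP}$ are contractible, resp.\ all nonempty fibers of $q_{\setP}$ are contractible — the latter using that each $\bigcap_{i\in\sigma}P^i$ is contractible) that both $p_{\setP}$ and $q_{\setP}$ are homotopy equivalences. The isomorphism $\phi_{\setP}$ of the nerve theorem is then $(q_{\setP})_*\circ (p_{\setP})_*^{-1}$ on homology.

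First I would make this construction explicit and observe that it is \emph{functorial under componentwise inclusion}: since $P^i\subseteq Q^i$ for every $i$, every intersecting subset of $\setP$ is also intersecting in $\setQ$, so $\nerve(\setP)\subseteq\nerve(\setQ)$ as abstract simplicial complexes over the common vertex set $\{1,\dots,n\}$, and $\bigcap_{i\in\sigma}P^i\subseteq\bigcap_{i\in\sigma}Q^i$ for every such $\sigma$. Consequently there is a canonical inclusion $\widehat{\iota}:\widehat{\setP}\hookrightarrow\widehat{\setQ}$ which is compatible with the two pairs of projections, i.e.\ the diagram
\[
\begin{array}{ccccc}
|\setP| & \xleftarrow{\;p_{\setP}\;} & \widehat{\setP} & \xrightarrow{\;q_{\setP}\;} & |\nerve(\setP)|\\[2pt]
\big\downarrow{\scriptstyle i} & & \big\downarrow{\scriptstyle \widehat{\iota}} & & \big\downarrow{\scriptstyle j}\\[2pt]
|\setQ| & \xleftarrow{\;p_{\setQ}\;} & \widehat{\setQ} & \xrightarrow{\;q_{\setQ}\;} & |\nerve(\setQ)|
\end{array}
\]
commutes \emph{on the nose} (not merely up to homotopy), where $i$ and $j$ are the inclusions of the statement. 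Applying $H_p(\cdot)$ to this commuting diagram and using that the horizontal maps become isomorphisms, one reads off directly that $j^*\circ\phi_{\setP}=j^*\circ(q_{\setP})_*\circ(p_{\setP})_*^{-1}=(q_{\setQ})_*\circ\widehat{\iota}_*\circ(p_{\setP})_*^{-1}=(q_{\setQ})_*\circ(p_{\setQ})_*^{-1}\circ i^*=\phi_{\setQ}\circ i^*$, which is the claim.

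The one genuine subtlety — and where the argument has to lean on~\cite{co-towards} rather than on the bare nerve theorem — is the verification that $p_{\setP}$ and $q_{\setP}$ are homotopy equivalences, not just that source and target have isomorphic homology; this requires the standard ``gluing'' / nerve-of-a-cover arguments for the blow-up complex (a partition-of-unity or a cellular induction over the skeleta of $\nerve(\setP)$, using that every finite intersection is contractible and triangulable so the relevant pushouts are homotopy pushouts). The hard part is therefore purely the bookkeeping of making the blow-up construction simultaneously natural in the cover and compatible with both projections; once the commuting square of spaces above is in hand, the passage to homology is immediate. I expect no difficulty beyond transcribing the relevant portion of~\cite[Lemma~3.4 and its proof]{co-towards} and noting that the common index set makes $\widehat\iota$ well defined; the ``slight modification'' alluded to in the paper is exactly that the original statement is phrased for a single cover refined by another, whereas here we keep the indexing fixed and only enlarge each member, which if anything simplifies the naturality check.
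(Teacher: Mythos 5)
Your proposal is correct and uses exactly the machinery the paper relies on: the paper itself only cites this statement (it is Chazal--Oudot's Lemma~3.4 up to reindexing), but its Appendix~\ref{app:monster_proof} proof of the generalization to simplicial maps (Theorem~\ref{thm:connection}) runs the same Mayer--Vietoris blow-up construction with the projections $p_i$, $q_i$ and the naturality of the induced map between blow-up spaces. The only cosmetic difference is that the paper routes $q_i$ through the barycentric subdivision of the nerve and must first replace the closed cover by a small open thickening before invoking Hatcher's Prop.~4G.2--4G.3, a technicality your sketch correctly flags but does not spell out.
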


\paragraph{Barcodes of shapes}
We let $\distance(\cdot, \cdot)$ denote the Euclidean distance function.
For a point set $A\subset\R^d$ and $x\in\R^d$,
we set $\distance(x,A):=\min_{y\in A} \distance(x,y)$. Then, $\distance(\cdot,A):\R^d\rightarrow\R$ is called the \emph{distance function from $A$}
and $A_\alpha:=\{x\in\R^d\mid \distance(x,A)\leq\alpha\}$ is called the \emph{$\alpha$-offset} of $A$.
With $\setP$ as above, we write $\setP_\alpha:=\{P^1_\alpha,\ldots,P^n_\alpha\}$ for the collection $\alpha$-offsets of $\setP$.
In particular, $\setP_0=\setP$.
We call $(|P_\alpha|)_{\alpha\geq 0}$ the \emph{offset-filtration} of $\setP$.
We pose the question of how to compute the barcode of the offset filtration of convex objects efficiently.
See Figure~\ref{fig:barcode_example} for an illustration of these concepts.

We define the analogue of \v{C}ech filtrations:
We call $(\nerve(\setP_\alpha))_{\alpha\geq 0}$ the \emph{nerve filtration}
of $\setP$;
it is indeed a filtration because for $\alpha_1\leq \alpha_2$, 
$\nerve(\setP_{\alpha_1})\subseteq\nerve(\setP_{\alpha_2})$.

\begin{theorem}
\label{barcode_equivalence}
Let $\setP$ be a collection of convex objects. Then, the $p$-barcodes of offset filtration and nerve filtration of $\setP$
are equal for all $p\geq 0$.
\end{theorem}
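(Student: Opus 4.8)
The plan is to establish an isomorphism of persistence modules between $(H_p(|\setP_\alpha|))_{\alpha\geq 0}$ and $(H_p(\nerve(\setP_\alpha)))_{\alpha\geq 0}$ that is natural with respect to the inclusions $|\setP_{\alpha_1}|\hookrightarrow|\setP_{\alpha_2}|$ and $\nerve(\setP_{\alpha_1})\hookrightarrow\nerve(\setP_{\alpha_2})$. The point is that equality of barcodes does not follow merely from a level-wise isomorphism of homology groups; one needs the isomorphisms to commute with the structure maps of the two modules, so that the two modules are isomorphic as persistence modules and hence have the same decomposition into interval modules.

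First I would observe that for every fixed $\alpha\geq 0$, the offsets $\setP_\alpha=\{P^1_\alpha,\ldots,P^n_\alpha\}$ of convex objects are themselves convex (an offset of a convex set is convex), hence closed and triangulable with contractible intersections, so $\setP_\alpha$ is a good cover. Theorem~\ref{thm:nerve_theorem} then gives an isomorphism $\phi_{\setP_\alpha}:H_p(|\setP_\alpha|)\rightarrow H_p(\nerve(\setP_\alpha))$ for each $\alpha$ and each $p\geq 0$. Next, for $\alpha_1\leq\alpha_2$ we have $P^i_{\alpha_1}\subseteq P^i_{\alpha_2}$ for every $i$, so the pair of good covers $\setP_{\alpha_1}$ and $\setP_{\alpha_2}$ satisfies the hypothesis of Theorem~\ref{thm:nerve_inclusion_commute}. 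That theorem yields exactly the commuting square
\[
j^\ast\circ\phi_{\setP_{\alpha_1}}=\phi_{\setP_{\alpha_2}}\circ i^\ast,
\]
where $i^\ast:H_p(|\setP_{\alpha_1}|)\rightarrow H_p(|\setP_{\alpha_2}|)$ is induced by the inclusion of underlying spaces and $j^\ast:H_p(\nerve(\setP_{\alpha_1}))\rightarrow H_p(\nerve(\setP_{\alpha_2}))$ is induced by the simplicial inclusion of nerves. But $i^\ast$ is precisely the structure map of the persistence module of the offset filtration and $j^\ast$ is precisely the structure map of the persistence module of the nerve filtration, so the collection $(\phi_{\setP_\alpha})_{\alpha\geq 0}$ is an isomorphism of persistence modules.

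Finally I would invoke the standard fact that isomorphic (tame) persistence modules have identical barcodes: the barcode is a complete invariant of the isomorphism type of a pointwise-finite-dimensional persistence module over a field, by the structure theorem. Since we assume the tameness conditions stated earlier, this applies and gives equality of the $p$-barcodes for all $p\geq 0$. The only point requiring any care—and the closest thing to an obstacle—is making sure the hypotheses of Theorem~\ref{thm:nerve_inclusion_commute} are met uniformly across the whole one-parameter family, i.e.\ that each $\setP_\alpha$ really is a good cover and that the inclusions are the canonical ones; this is where convexity of offsets is essential, and where one should note that there are only finitely many critical values so the module is genuinely tame and the indexing over $\alpha\in[0,\infty)$ causes no difficulty.
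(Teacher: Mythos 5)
Your proof is correct and follows essentially the same route as the paper: the nerve theorem gives a levelwise isomorphism, Theorem~\ref{thm:nerve_inclusion_commute} makes it commute with the inclusion-induced maps, and the persistence equivalence (structure) theorem then gives equality of barcodes. The extra care you take in checking that each $\setP_\alpha$ is a good cover and that $P^i_{\alpha_1}\subseteq P^i_{\alpha_2}$ is exactly the implicit content of the paper's one-line proof.
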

\begin{proof}
The nerve theorem yields an isomorphism of the homology groups for any parameter $\alpha$
and Theorem~\ref{thm:nerve_inclusion_commute} asserts that these isomorphisms commute with
inclusion. Using the \emph{persistence equivalence theorem}~\cite[p.159]{eh-computational},
the barcodes are equal.
\end{proof}

The nerve only changes for values where a collection of individual polyhedron offsets  
becomes intersecting. We call such an offset value \emph{nerve-critical}.
Since $P\subset\R^d$, we can restrict to collections of size at most $d+1$
since the $p$-barcode is known to be trivial 
for $p\geq d$ (since the $p$-th homology group is trivial for all $\alpha$).
Sorting the nerve-critical values $0=\alpha_0<\alpha_1<\ldots<\alpha_m$
and setting $K_i:=\nerve(\setP_{\alpha_i})$, the nerve filtration simplifies to the finite filtration
$K_0\subset K_1\subset\ldots\subset K_m$
whose barcode can be computed using standard methods; see~\cite{elz-topological,zc-computing} or~\cite{bkr-clear} for
an optimized variant.
Clearly, since $K_m$ contains a simplex for any subset of $\setP$ of size at most $d+1$,
its size is $\Theta(n^{d+1})$.

\section{Restricted barcodes}
\label{sec:restricted_barcodes}
Let $\setP:=\{P^1,\ldots,P^n\}$ be convex polyhedra in $\R^d$, that is, each $P^i$ is the intersection of finitely many half-spaces.
The major disadvantage of the construction of Section~\ref{sec:barcodes_of_shapes}
is the sheer size of the resulting filtration, $\Theta(n^{d+1})$. 
Our goal is to come up with a filtration that yields the same barcode
and is substantially smaller in size. Our approach is reminiscent
of alpha-complexes for point sets, but it requires additional ideas for being applicable
to convex objects.

From now on, we make the following assumptions for simplicity:
We refer to the elements of $\setP$ as \emph{sites}.
We restrict our attention to $d\in\{2,3\}$, that is, sites are polygons ($d=2$)
or polyhedra ($d=3$).
We assume the sites to be pairwise disjoint and in \emph{general position}, that is,
for any pair $P^i, P^j$ of sites, there is a unique pair of points $x^i\in\partial P^i$, $x^j\in\partial P^j$
that realizes the distance between the sites.
Moreover, we assume that the number of vertices, edges and faces of each site is bounded by a constant.
For a point $p\in\R^d$,
the site $P^k$ is \emph{closest} if $\distance(p,P^k)\leq \distance(p,P^\ell)$ for any $1\leq\ell\leq n$.
We assume for simplicity the generic case that no point has more than $d+1$ closest sites.
As a notational shortcut, we will frequently write $\distance(x):=\distance(x,|\setP|)$.

The \emph{Voronoi diagram} $\voronoi(\setP)$ is the partition of the space into maximal 
connected components with the same set of closest sites.
The Voronoi diagram is an \emph{arrangement} in $\R^d$, and its \emph{combinatorial complexity}
is the number of cells.
The \emph{Voronoi region} of $P^k$, denoted by $V^k$, is the (closed) set of points for which $P^k$ 
is one of its closest sites.
For a cell $\sigma$ of $\voronoi(\setP)$, we call $\critical{\sigma}:=\inf_{x\in \sigma} \distance(x)$ the \emph{critical value} of $\sigma$
(recall that $\distance(x)=\distance(x,|\setP|)$)
and a point $x$ that attains this infimum a \emph{critical point} of $\sigma$. Note that critical points of a cell may lie on its boundary.

For any two sites $P^i$, $P^j$, the \emph{bisector} $B$ is the set of points $x$ that satisfy $\distance(x,P^i)=\distance(x,P^j)$.
By general position of the sites, there is a unique point on $B$ that minimizes $\distance(\cdot,P^i)$. 
More generally, for $\alpha\geq 0$, let $B_\alpha:=\{x\in B\mid \distance(x,P^i)\leq \alpha\}$. We will frequently use the fact
that for any $\alpha$, $B_\alpha$ is empty or contractible.
This is implied by the following statement, which generalizes the well-known
\emph{pseudodisk-property}~\cite{klps-union}, \cite[Thm.13.8]{dutch}.
We will prove the case $d=3$ in Appendix~\ref{app:pseudodisk}.

\begin{theorem}
\label{thm:pseudodisk_maintext}
For $d\in\{2,3\}$, let $P_1$, $P_2$ be two convex disjoint polytopes in $\R^d$ 
in general position and let $B$ be the unit ball. 
Then, $\partial(P_1\oplus B) \cap \partial(P_2\oplus B)$
is either empty, a single point, or homeomorphic to $(d-2)$-sphere.
\end{theorem}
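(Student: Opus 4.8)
The plan is to reduce the statement to a property of one scalar function on the bisector hypersurface, and then invoke elementary Morse theory. Since $P_i$ is closed and convex, $P_i\oplus B=\{x:\distance(x,P_i)\le 1\}$ is a convex body with $\partial(P_i\oplus B)=\{x:\distance(x,P_i)=1\}$, so the object in question is $\Sigma:=\{x\in\R^d:\distance(x,P_1)=\distance(x,P_2)=1\}$. Put $\delta:=\distance(P_1,P_2)$. If $\delta>2$ then $\distance(x,P_1)+\distance(x,P_2)\ge\delta>2$ for all $x$, so $\Sigma=\emptyset$; if $\delta=2$, equality there forces $x$ to be the midpoint of a shortest segment between the sites, which is unique by general position, so $\Sigma$ is a single point. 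Assume henceforth $\delta<2$. Then $\Sigma$ lies on the bisector $\beta:=\{x:\distance(x,P_1)=\distance(x,P_2)\}$, and if $r\colon\beta\to[0,\infty)$ denotes the common value $r(x)=\distance(x,P_1)$, then $\Sigma=r^{-1}(1)$. It remains to show: (i) $\beta$ is homeomorphic to $\R^{d-1}$; (ii) $r$ is proper, and its only critical point is its global minimum $m$, attained at the midpoint of the shortest connector with value $\delta/2$; and (iii) consequently $r^{-1}(c)\cong S^{d-2}$ for every $c>\delta/2$, so that $\Sigma=r^{-1}(1)\cong S^{d-2}$. (For $d=2$ this reproves the classical pseudodisk property; the new content is $d=3$.)

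For (i), choose coordinates so the (unique) shortest connector runs along the $x_d$-axis, with $P_1\subseteq\{x_d\le 0\}$ and $P_2\subseteq\{x_d\ge\delta\}$. Recall that $\distance(\cdot,P_i)$ is $C^1$ with locally Lipschitz gradient outside $P_i$, namely $\nabla\distance(\cdot,P_i)=\nu_i$ where $\nu_i(x)=(x-\pi_i(x))/|x-\pi_i(x)|$ and $\pi_i(x)$ is the nearest point of $P_i$. Every zero of $g:=\distance(\cdot,P_1)-\distance(\cdot,P_2)$ lies on $\beta$, hence off $P_1\cup P_2$, where $g$ is $C^1$; and at $x\in\beta$ (where both distances equal some $r>0$) one computes $\partial g/\partial x_d=\langle e_d,\nu_1(x)-\nu_2(x)\rangle=\bigl((\pi_2(x))_d-(\pi_1(x))_d\bigr)/r\ge\delta/r>0$. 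Along any line $\{(y,t):t\in\R\}$ one has $\distance((y,t),P_i)=t-\max_{p\in P_i}p_d+o(1)$ as $t\to+\infty$ and $=-t+\min_{p\in P_i}p_d+o(1)$ as $t\to-\infty$, so $g(y,\cdot)$ tends to $\max_{P_2}x_d-\max_{P_1}x_d\ge\delta>0$ at $+\infty$ and to $\min_{P_1}x_d-\min_{P_2}x_d\le-\delta<0$ at $-\infty$. A continuous function on $\R$ tending to a negative limit at $-\infty$ and a positive one at $+\infty$ that is strictly increasing at each of its zeros has exactly one zero; hence $\beta$ is the graph of a map $\psi\colon\R^{d-1}\to\R$, whence $\beta\cong\R^{d-1}$, and $\psi$ is $C^1$ and $\beta$ a $C^1$ hypersurface since $\nabla g\neq 0$ near $\beta$. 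Properness of $r$ is clear since its sublevel sets $\beta\cap(P_1\oplus cB)$ are bounded; and $2r\ge\distance(\cdot,P_1)+\distance(\cdot,P_2)\ge\delta$ with equality only at the midpoint $m$ of the shortest connector (by general position), so $m$ is the global minimum, of value $\delta/2<1$.

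Step (ii) carries the real work and is the main obstacle, since a priori $r|_\beta$ could have many critical points. Now $\nabla(r|_\beta)(x)=0$ iff $\nu_1(x)$ is normal to $\beta$ at $x$, iff $\nu_1(x)\parallel\nabla g(x)=\nu_1(x)-\nu_2(x)$, iff $\nu_2(x)=-\nu_1(x)$ — the alternative $\nu_2(x)=\nu_1(x)$ being impossible, as it would put $\pi_1(x)=\pi_2(x)\in P_1\cap P_2=\emptyset$. So a critical point is a point $x=\tfrac12(\pi_1(x)+\pi_2(x))$ for which the segment $[\pi_1(x),\pi_2(x)]$ meets $P_1$ and $P_2$ perpendicularly at its two ends, with both normals pointing outward (an ``outer double normal''). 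If such a segment has unit direction $u$ and endpoints $p\in\partial P_1$, $q=p+Lu\in\partial P_2$, then $p$ maximizes and $q$ minimizes $\langle\cdot,u\rangle$ over $P_1$ and $P_2$ respectively, so $L=\min_{P_2}\langle\cdot,u\rangle-\max_{P_1}\langle\cdot,u\rangle$ equals the width of the slab with normal $u$ separating the sites. The shortest connector $[p_0,q_0]$ crosses that slab, so $\delta=|p_0-q_0|\ge\langle q_0-p_0,u\rangle\ge L$; but also $L=|p-q|\ge\delta$, since $[p,q]$ joins $P_1$ to $P_2$. Hence $L=\delta$ and $[p,q]$ is itself a shortest connector, so $[p,q]=[p_0,q_0]$ by general position, forcing $x=m$. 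Thus $m$ is the unique critical point. (Checking general position carefully — e.g.\ that $p_0,q_0$ lie in the relative interiors of faces whose Minkowski combination is a facet of $P_1\oplus(-P_2)$, so that the shortest connector is a nondegenerate outer double normal — is routine but necessary.)

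For (iii): $r$ is a $C^1$, proper function on $\beta\cong\R^{d-1}$ whose only critical point is its global minimum $m$. The flow of $\nabla(r|_\beta)/|\nabla(r|_\beta)|^2$ is well defined, unique, and complete on each compact set $r^{-1}([a,b])$ with $a>\delta/2$, and it exhibits $\beta\setminus\{m\}=r^{-1}((\delta/2,\infty))$ as $r^{-1}(c_0)\times(\delta/2,\infty)$ for any fixed $c_0>\delta/2$. On the other hand $\beta\setminus\{m\}\cong\R^{d-1}\setminus\{\mathrm{pt}\}\cong S^{d-2}\times\R$. Comparing, the compact $(d-2)$-manifold $r^{-1}(c_0)$ satisfies $r^{-1}(c_0)\times\R\cong S^{d-2}\times\R$, which for $d\in\{2,3\}$ forces $r^{-1}(c_0)\cong S^{d-2}$ — a compact $0$- or $1$-manifold whose product with a line is connected must be $S^0$, resp.\ $S^1$. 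Taking $c_0=1$, legitimate since $1>\delta/2$, yields $\Sigma\cong S^{d-2}$, which finishes the argument.
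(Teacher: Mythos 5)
Your proof is correct and takes a genuinely different route than the paper's. The paper works directly with the convex body $K=(P_1\oplus B)\cap(P_2\oplus B)$: it partitions the unit sphere of directions into $E_1$ (more extreme for $P_1$), $E_2$ (more extreme for $P_2$), and the ``equally extreme'' set $E_{12}$, shows $E_{12}$ is a closed cycle by reducing to the 2D pseudodisk fact on each great circle through the poles, and then transports that cycle to $\partial K$ via the Gauss map of a smoothed expansion $K_\eps$, finally ruling out self-intersections. Your argument instead locates $\Sigma$ as a level set $r^{-1}(1)$ of the common distance function $r$ on the bisector $\beta$, proves $\beta\cong\R^{d-1}$ via the monotonicity of $g=\distance(\cdot,P_1)-\distance(\cdot,P_2)$ along the separating direction, identifies critical points of $r|_\beta$ with outer double normals (hence, by general position, only the midpoint of the unique shortest connector), and closes with the gradient-flow argument. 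The two proofs lean on convexity in orthogonal ways --- the paper through supporting hyperplanes and the Gauss map, you through the $C^{1,1}$ distance function and Morse theory --- and each has its merits: yours is dimension-uniform (it literally reproves the $d=2$ case rather than citing it), connects more directly to the bisector, which is the central object elsewhere in the paper, and the unique-critical-point fact you isolate is implicitly used in the paper's discussion of bisectors; the paper's approach avoids the regularity bookkeeping needed to run a gradient flow for a function that is only $C^{1,1}$, which in your write-up is correctly flagged but deserves the explicit note that Lipschitz gradients give uniqueness of flow lines (Picard--Lindel\"of) and that level sets away from $m$ are $C^1$ manifolds by the implicit function theorem.
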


The \emph{restricted $\alpha$-offset} of $P^k$ is defined as $Q^k_\alpha:=P^k_\alpha\cap V^k$.
We set $\setQ_\alpha:=\{Q^1_\alpha,\ldots,Q^n_\alpha\}$ and $\setQ:=\setQ_0$.
In the same way as in Section~\ref{sec:barcodes_of_shapes}, 
we define the \emph{restricted nerve filtration} as $(\nerve(\setQ_\alpha))_{\alpha\geq 0}$
and \emph{$\setQ$-critical} values as those values where a simplex enters the restricted
nerve filtration.
The restricted nerve filtration can be expressed by a finite sequence of simplicial complexes 
that changes precisely at the $\setQ$-critical values.
The size of the filtration is bounded by the combinatorial complexity of the Voronoi diagram.
Moreover, the $\setQ$-critical value of a simplex associated with a Voronoi cell $\sigma$
equals the critical value of $\sigma$.

\begin{wrapfigure}[6]{r}{2.4cm}
\vspace{-.3cm}
\includegraphics[width=2.4cm]{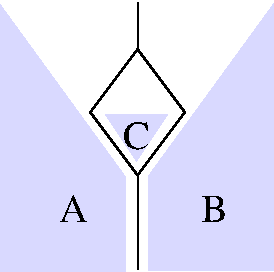}
\end{wrapfigure}
Restricting the offsets to Voronoi regions brings a problem: 
$\setQ_\alpha$ is not necessarily a collection of convex sets,
since $V^k$ is not convex in general. 
Even worse, $\setQ_\alpha$ might not be a good cover.
For instance, on the right we see three sites
$A,B,C$ and the induced Voronoi diagram (in black).
We see that the Voronoi regions of $A$ and $B$ intersect in two segments.
This means that the proof strategy of Theorem~\ref{barcode_equivalence} breaks down
since the nerve theorem does not apply.

\section{Restricted barcodes in 2D}
\label{sec:restricted_barcodes_2d}
We first restrict to the case $d=2$, that means, our input sites are interior-disjoint convex
polygons in the plane. While the restriction of offsets invalidates the proof of 
Theorem~\ref{barcode_equivalence}, it does not invalidate the statement, at least in
dimensions $0$ and $1$.

\begin{theorem}
\label{thm:1-barcode_equivalence}
For convex polygonal sites in $\R^2$, the $0$- and $1$-barcode of the restricted nerve filtration are equal to the $0$- and $1$-barcode
of the offset filtration, respectively.
\end{theorem}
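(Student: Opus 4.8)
The plan is to reduce the statement to the $2$-dimensional nerve theorem by correcting for the only obstruction identified above: Voronoi regions $V^i, V^j$ of adjacent sites may meet in a disconnected bisector piece rather than a contractible one. The key observation is that in $\R^2$ this failure is mild and controllable. For a fixed $\alpha$, consider the bisector $B$ between two adjacent sites $P^i$ and $P^j$; by Theorem~\ref{thm:pseudodisk_maintext} (the $d=2$ case), $\partial(P^i_\alpha)\cap\partial(P^j_\alpha)$ is empty, a point, or a $0$-sphere, i.e.\ at most two points. Hence the Voronoi edge shared by $V^i$ and $V^j$, intersected with the offset $P^i_\alpha$, is empty, a single arc, or a disjoint union of \emph{two} arcs --- never more. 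So $Q^i_\alpha\cap Q^j_\alpha$ has at most two connected components, each contractible. Triple intersections $Q^i_\alpha\cap Q^j_\alpha\cap Q^k_\alpha$ correspond to Voronoi vertices and, by general position, are single points (or empty). Thus $\setQ_\alpha$ fails to be a good cover only through these doubled edges.

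First I would make this precise and extract the combinatorial consequence: the nerve $\nerve(\setQ_\alpha)$ is a simplicial complex that differs from the ``true'' intersection pattern only in that some $1$-simplices ought morally to be doubled edges. The standard fix is to pass to a \emph{subdivided} cover: refine the Voronoi diagram by splitting each Voronoi edge that carries two offset-arcs into two sub-edges (introducing a dummy Voronoi vertex in the middle of the separating gap), so that each resulting region still deformation-retracts to its site-portion and every pairwise intersection of the refined regions becomes connected and contractible. The refined collection $\widetilde{\setQ}_\alpha$ is then a genuine good cover of $|\setQ_\alpha| = |\setP_\alpha|$ (the underlying space is unchanged, since we only cut cells), so the nerve theorem (Theorem~\ref{thm:nerve_theorem}) applies and $\nerve(\widetilde{\setQ}_\alpha)\simeq |\setP_\alpha|$. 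Because cutting a region into two pieces along a common face is done compatibly as $\alpha$ grows, the refinements assemble into a filtration, and the collapse map sending the two copies of a split region back to a single vertex is a simplicial map $\nerve(\widetilde{\setQ}_\alpha)\to\nerve(\setQ_\alpha)$ that is a homotopy equivalence on each level (it collapses at most an edge at a time). Here I would invoke the generalized commutation result, Theorem~\ref{thm:nerve_inclusion_commute}, in its version for filtrations connected by simplicial maps (announced in the introduction), to conclude that the induced maps on homology commute with the structure maps at every scale, for $p=0$ and $p=1$. Combining the two levels of equivalence --- offset filtration $\simeq$ refined-nerve filtration $\simeq$ restricted-nerve filtration --- and applying the persistence equivalence theorem~\cite[p.159]{eh-computational} yields equality of the $0$- and $1$-barcodes.

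The step I expect to be the main obstacle is verifying that the subdivided collection $\widetilde{\setQ}_\alpha$ is \emph{uniformly} a good cover across all scales simultaneously, and in a way compatible with the filtration structure. One must check: (i) that a Voronoi edge between $V^i$ and $V^j$ never carries \emph{more than} two offset-arcs --- this is exactly where the $\R^2$ pseudodisk bound (Theorem~\ref{thm:pseudodisk_maintext}) is essential, and is false in $\R^3$, explaining why the analogous statement fails there; (ii) that the ``gap'' between the two arcs on a doubled edge is monotone (the arcs only grow, the gap only shrinks) so that a single choice of dummy vertex works for all $\alpha$ up to the scale where the gap closes, at which point the two sub-edges merge and the simplicial map becomes an honest identification; and (iii) that higher intersections introduce no further defects, i.e.\ that once pairwise intersections are fixed, no triple or quadruple intersection of refined regions is non-contractible --- which follows since those live on Voronoi vertices and are points by general position. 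Once these geometric facts about planar Voronoi diagrams of convex polygons are nailed down, the homological bookkeeping is routine.
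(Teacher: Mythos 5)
There are two genuine gaps here, one of which is fatal. First, your opening claim that $Q^i_\alpha\cap Q^j_\alpha$ has at most two connected components does not follow from Theorem~\ref{thm:pseudodisk_maintext} and is in fact false. The pseudodisk property controls $B_\alpha$, the sublevel set of the \emph{full} bisector $B$ of $P^i$ and $P^j$ (it is a single arc), but $Q^i_\alpha\cap Q^j_\alpha=B_\alpha\cap V^i\cap V^j$, and $V^i\cap V^j$ may consist of arbitrarily many Voronoi edges when other sites sit between $P^i$ and $P^j$ and split their bisector: place $k$ small sites in a row between two long sites and you get $k+1$ components for $\alpha$ large. This by itself is repairable by refining along every component rather than just two.

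The fatal gap is the assertion that the collapse map $\nerve(\widetilde{\setQ}_\alpha)\to\nerve(\setQ_\alpha)$ is a homotopy equivalence because ``it collapses at most an edge at a time.'' Identifying two parallel $1$-cells joining the same pair of vertices (coming from two components of $Q^i_\alpha\cap Q^j_\alpha$) is exactly the operation that can change $H_1$: the two edges bound a circle in the refined nerve, and under the nerve theorem that circle represents the loop in $|\setP_\alpha|$ encircling the bounded region enclosed between the two components; the plain nerve kills this class unconditionally. Whether that is harmless is the geometric heart of the theorem: one must prove that whenever two restricted offsets enclose such a \emph{surrounded region}, the region is already entirely covered by the $\alpha$-offsets, so the loop is null-homotopic in $|\setP_\alpha|$ and killing it is correct. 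This is the paper's key lemma (proved via the pseudodisk property applied to the bisector segment inside the region plus convexity of the distance function), and it is also where the restriction to $p\leq 1$ enters, since the analogous statement for $H_2$ fails. Your proposal contains no substitute for it; as written, your argument would ``prove'' the same equivalence for an arbitrary cover with doubled edges, which is false (an unfilled surrounded region is a genuine hole of the offset that the plain nerve misses). For comparison, the paper argues in the opposite direction: instead of refining to a good cover, it \emph{deletes} the sites inside surrounded regions (which leaves $|\setQ_\alpha|$ unchanged precisely because those regions are filled), applies the nerve theorem to the remaining free sites, and then shows that the deletion induces an isomorphism on $H_1$ and commutes with the filtration maps. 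The commutativity machinery you invoke at the end is indeed the right remaining ingredient, but it cannot rescue a level-wise map that is not an isomorphism to begin with.
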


As a consequence of this theorem, we obtain a filtration of size $O(n)$ that has the same
barcode as the offset filtration; the size follows from the fact that the complexity
of the Voronoi diagram is $O(n)$.
This is much smaller than the $O(n^3)$ filtration obtained by the unrestricted nerve.
The construction time is dominated by computing the Voronoi diagram and thus bounded by $O(n\log n)$~\cite{yap-vd}.

\begin{wrapfigure}[10]{r}{2.5cm}
\vspace{-.5cm}
\includegraphics[width=2.5cm]{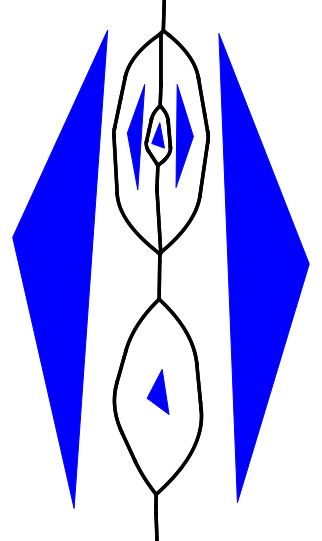}
\end{wrapfigure}
We provide a sketch of the proof of Theorem~\ref{thm:1-barcode_equivalence}, 
and refer the reader to Appendix~\ref{app:monster_proof} for the complete proof. 
We concentrate on the $1$-barcode for the proof; the (simpler) statement for the $0$-barcode follows with similar arguments.
We first analyze more carefully what causes two restricted offsets to have a non-contractible intersection. 
Such a non-contractible intersection only happens when two restricted offsets intersect in several
components, and thus the corresponding two sites contribute two or more bisector segments to the Voronoi diagram. 
This, in turn, only happens if their bisector is \emph{split} by another site that ``sits in-between''.
Precisely, observe that the intersection of two restricted offsets consists of $k$ connected components (with $k\geq 1$)
if and only if the complement of their union induces $k-1$ bounded regions. We call these bounded regions 
\emph{surrounded region} induced by the restricted offsets of two sites.
A surrounded region contains at least one and potentially more sites and can therefore contain nested surrounded regions
(induced by two restricted offsets within the surrounded regions). 
We call a surrounded region \emph{simple} if it does not contain any other surrounded region.
On the right, we see an example where the left and right
sites induce two surrounded regions (for $\alpha$ large enough), the upper not being simple because it contains a nested surrounded region. 

Fix two sites $A$ and $B$ and assume that their restricted $\alpha$-offsets leave some surrounded region $R$ for some fixed $\alpha$.
The crucial observation is that whenever this happens, $R$ is always already ``filled'':

\begin{lemma}
The (unrestricted) $\alpha$-offsets of $A$ and $B$ contain $R$. In particular, the restricted $\alpha$-offsets of sites in $R$ fill out the entire region $R$.
\end{lemma}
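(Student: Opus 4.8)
The plan is to argue by contradiction: suppose the $\alpha$-offset of $A$ (or of $B$) does not contain all of $R$. Since $R$ is a bounded region of the complement of $Q^A_\alpha \cup Q^B_\alpha$, its boundary $\partial R$ consists of arcs lying on $\partial Q^A_\alpha$ and $\partial Q^B_\alpha$, together possibly with pieces of the bisector $B_{AB}$ where the two restricted offsets meet. The key structural fact I would first establish is that walking along $\partial R$, one sees (at least) two maximal arcs of $\partial Q^A_\alpha$ and two maximal arcs of $\partial Q^B_\alpha$ (this is what ``surrounded region'' means — the two offsets pinch off $R$ from two sides); call the two $A$-arcs $\gamma_1^A, \gamma_2^A$. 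Each such arc is a subarc of $\partial P^A_\alpha$, which is a convex curve (the boundary of the offset of the convex polygon $P^A$). On $\partial P^A_\alpha$, an arc where $P^A$ is \emph{not} the closest site — i.e., the portion that got cut away by the Voronoi region $V^A$ — is exactly the portion that lies outside $V^A$; but for points just outside $V^A$ on the $A$-side near $R$, the closest site is precisely $B$ (since $R$ is surrounded by $A$ and $B$ only, up to sites strictly inside $R$). So along $\partial R$ the relevant cut is always the $A$/$B$ bisector.

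Next I would exploit the pseudodisk property, Theorem~\ref{thm:pseudodisk_maintext}: for $d=2$ the boundaries $\partial P^A_\alpha$ and $\partial P^B_\alpha$ intersect in at most two points (they are ``pseudodisks''), and the bisector segment $B_\alpha$ between $A$ and $B$ is empty or contractible (a single segment). This forces the combinatorics of how the full offsets $P^A_\alpha$ and $P^B_\alpha$ can sit relative to $R$ to be very rigid. Concretely, I would show: the two arcs $\gamma_1^A,\gamma_2^A$ on $\partial R$ both lie on $\partial P^A_\alpha$, and they bound (together with the part of $\partial R$ on the $B$ side) the region $R$; since $P^A_\alpha$ is the \emph{interior} side of the convex curve $\partial P^A_\alpha$, and since $\partial P^A_\alpha$ meets $\partial P^B_\alpha$ in at most two points, the region $R$ must lie on the ``inside'' of $\partial P^A_\alpha$, i.e., $R \subseteq P^A_\alpha$ — otherwise $\partial P^A_\alpha$ would have to re-enter and exit $R$, creating more than two intersection points with $\partial P^B_\alpha$, contradicting the pseudodisk bound. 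Symmetrically $R \subseteq P^B_\alpha$. This is the crux, and I expect the bookkeeping of ``which side of which convex arc'' to be the main technical obstacle; one has to be careful that the cut-away portions of $\partial P^A_\alpha$ near $R$ really are charged to the single bisector $B_{AB}$ and not to some third site, which is where the hypothesis that $R$ is \emph{surrounded specifically by the offsets of $A$ and $B$} (so no third site's Voronoi region intrudes on $\partial R$) is used.

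Once $R\subseteq P^A_\alpha$ and $R\subseteq P^B_\alpha$ are established, the ``in particular'' follows quickly. Every point $x\in R$ has some closest site $P^k$; by definition $x\in V^k$, and since $R$ is bounded while the sites $P^A,P^B$ and their Voronoi regions reach outside $R$, the closest site of an interior point $x$ of $R$ is some site $P^k$ with $P^k\subseteq \overline{R}$ (a site contained in the surrounded region) — or it could be $A$ or $B$ themselves, but in either case $\distance(x,|\setQ|)\le \distance(x,P^A)\le\alpha$ since $R\subseteq P^A_\alpha$ gives $\distance(x,P^A)\le\alpha$, and $x\in V^k$ for the actual closest $k$, so $x\in Q^k_\alpha = P^k_\alpha\cap V^k$. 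Hence every point of $R$ lies in some restricted $\alpha$-offset, and since $R$ is exactly the set of points of $\overline R$ not already in $Q^A_\alpha\cup Q^B_\alpha$, these must be covered by the restricted offsets of sites lying in $R$; that is, the restricted $\alpha$-offsets of the sites in $R$ fill out $R$ entirely, which is the claim. I would close by remarking that this lemma is precisely what lets us later ``cap off'' each simple surrounded region and recover a good cover after local modification, so that the nerve theorem applies despite the non-contractible pairwise intersections.
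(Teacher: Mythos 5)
Your proposal is based on a misreading of the statement, and the resulting stronger claim is actually false. The lemma asserts that the \emph{union} $P^A_\alpha\cup P^B_\alpha$ contains $R$ (this is made explicit in the appendix version: ``it suffices to show that the unrestricted $w$-offsets of $A$ and $B$ \emph{cover} $R$''). You try to prove $R\subseteq P^A_\alpha$ \emph{and} $R\subseteq P^B_\alpha$ separately, and your chain-of-inequalities in the last paragraph hinges on ``$R\subseteq P^A_\alpha$ gives $\distance(x,P^A)\le\alpha$''. This is not true in general. For a concrete counterexample, take $A$ and $B$ to be two long parallel vertical segments at $x=\pm 10$ (say for $|y|\le 10$) and $C$ a single point at the origin. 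The Voronoi vertices of $A,B,C$ are at $(0,\pm 10)$, so $w=10$, and for $\alpha$ slightly above $10$ the surrounded region $R$ is roughly $V^C$. The point $(4,0)\in R$ is at distance $14$ from $A$ and hence outside $P^A_\alpha$ for any $\alpha<14$. Only the union covers $R$, and that is all that the ``in particular'' sentence needs: $\distance(x)\le\min(\distance(x,P^A),\distance(x,P^B))\le\alpha$ already follows from $x$ being in one of the two offsets.

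A second structural error: you assert that ``along $\partial R$ the relevant cut is always the $A$/$B$ bisector.'' It is not. The arcs of $\partial R$ on the $A$-side lie either on $\partial P^A_\alpha$ or on bisectors of $A$ with sites \emph{inside} $R$ (in the example above, on the $A$--$C$ bisector, a parabola). The $A$--$B$ bisector meets $\partial R$ only at the two points $v_1,v_2$; otherwise it runs through the \emph{interior} of $R$. Because of this, your Jordan-curve step (``$\partial P^A_\alpha$ would have to re-enter and exit $R$, creating more than two intersection points with $\partial P^B_\alpha$'') does not go through: $\partial P^A_\alpha$ can perfectly well re-enter $R$ across an arc of $\partial R$ that lies on a third bisector, with no effect on $|\partial P^A_\alpha\cap\partial P^B_\alpha|$. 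The pseudodisk bound on the number of boundary intersections is not what is needed here.

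The paper's argument is different and uses the pseudodisk theorem in a different way. It uses the fact that the $A$--$B$ bisector has a segment joining $v_1$ and $v_2$ \emph{through the interior of $R$}, splitting $R$ into a ``half-region'' $R_A$ bounded by the $a$-side of $\partial R$ together with this bisector segment, and a symmetric $R_B$. Theorem~\ref{thm:pseudodisk_maintext} is invoked only to guarantee that the sublevel sets of $\distance(\cdot,P^A)$ on the bisector are connected, so that the bisector segment from $v_1$ to $v_2$ has $\distance(\cdot,P^A)\le w:=\max(\distance(v_1),\distance(v_2))\le\alpha$. The $a$-side of $\partial R_A$ also has $\distance(\cdot,P^A)\le\alpha$, since it lies on $\partial Q^A_\alpha$. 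With the entire boundary of $R_A$ in the $\alpha$-sublevel set of the \emph{convex} function $\distance(\cdot,P^A)$, convexity gives $R_A\subseteq P^A_\alpha$. Symmetrically $R_B\subseteq P^B_\alpha$, so $R=R_A\cup R_B\subseteq P^A_\alpha\cup P^B_\alpha$. This is the containment you need, and your closing paragraph can then be repaired to use the union rather than $P^A_\alpha$ alone.
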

\begin{proof}
Let $a$ and $b$ denote the boundary curves of $A$ and $B$. Let $v_1$, $v_2$ denote the points
on the boundary of $R$ that lie in $a\cap b$. 
Assume wlog that $\distance(v_1)\leq \distance(v_2)=:w\leq\alpha$. 
The bisector of $A$ and $B$ has a segment within $R$ that connects $v_1$ and $v_2$. 
Since sublevel sets on the bisector are connected, we have that $\distance(x)\leq w$ for all $x$ on that bisector segment.
Moreover, for any $x$ on the part of $a\setminus b$ that bounds $R$, we must have $\distance(x)\leq w$ as well. Combining these two properties,
the ``half-region'' of $R$ bounded by $a\setminus b$ and the bisector segment satisfies $\distance(x)\leq w$ on its boundary and by convexity
of the distance function, $\distance(x)\leq w\leq\alpha$ in the whole region. Applying the same argument on the other half-region, we get the result.
\end{proof}

For a fixed $\alpha$, we call a site \emph{surrounded} if it lies in some surrounded region, and \emph{free} otherwise. Recall that we write
$\setQ_\alpha$ for the restricted $\alpha$-offsets of the sites, and we let $\setQ_\alpha^\ast\subseteq\setQ_\alpha$ denote the restricted $\alpha$-offsets
of the \emph{free} sites. The previous lemma implies that disregarding the surrounded sites does not change the offset, 
so $|\setQ_\alpha|=|\setQ_\alpha^\ast|$. Moreover, the free sites form a good cover because all surrounded regions have been removed by construction.
It follows that the nerve theorem applies and $H_P(|\setQ_\alpha|)=H_p(|\setQ_\alpha^\ast|)=H_p(\nerve(\setQ_\alpha^\ast))$ for all $p\geq 0$.

The first major technical result is that $H_1(\nerve(\setQ_\alpha^\ast))=H_1(\nerve(\setQ_\alpha))$. Combined with the previous statement, this implies
that for any $\alpha$, the first homology group of the restricted $\alpha$-offsets is isomorphic to the first homology group of its nerve.
The construction of the isomorphism is iterative, always removing the sites in an innermost, and thus simple, surrounded region at a time. Let us denote by $S$ the set of sites that are not removed yet. Initially, $S$ is the set of all sites
and at the end, $S$ is the set of free sites. Let us fix a simple surrounded region $R$, and let $A$, $B$ be the sites surrounding it.
Let $M_R$ be the sites within $R$ and let $S_R:=S\setminus M_R$. We define a map from $S$ to $S_R$ that maps all sites in $M_R$ to $A$, and each remaining site to itself.
This map assigns vertices of $\nerve(S)$ to vertices of $\nerve(S_R)$, and induces a simplicial map $\phi$ between the nerves because $\{A,B\}$ separates $M_R$ from all other sites in the nerve.
Being a simplicial map, $\phi$ induces a map $\phi^\ast:H_1(\nerve(S))\rightarrow H_1(\nerve(S_R))$ of homology groups. 

\begin{lemma}
$\phi^\ast$ is an isomorphism.
\end{lemma}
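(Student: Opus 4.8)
The plan is to show that the simplicial map $\phi\colon \nerve(S)\to \nerve(S_R)$ is a deformation retraction on the level of geometric realizations, or more precisely that it induces an isomorphism on $H_1$ by exhibiting an explicit inverse-up-to-homotopy. The conceptual picture is that collapsing all sites of $M_R$ onto the single vertex $A$ is harmless for first homology because $M_R$, together with the ``gateway'' pair $\{A,B\}$, forms a piece of the nerve that is attached to the rest only through the edge $\{A,B\}$, and moreover the intersection patterns inside $R$ are themselves topologically trivial enough (by the preceding lemma, $R$ is entirely filled, so the restricted offsets of $M_R\cup\{A,B\}$ cover a disk) that they contribute nothing to $H_1$ that is not already visible after the collapse.

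**First I would** make precise the separation statement already asserted in the text: $\{A,B\}$ separates $M_R$ from $S\setminus(M_R\cup\{A,B\})$ in $\nerve(S)$, meaning no simplex of $\nerve(S)$ contains both a vertex of $M_R$ and a vertex outside $M_R\cup\{A,B\}$. This is because any such simplex would correspond to a common intersection point of the corresponding restricted offsets, which would have to lie both inside $R$ (to meet a site in $M_R$, whose restricted offset is contained in $\Clos(R)$) and outside $R$ (to meet a site strictly outside, whose restricted offset lies in the complement of $R$), forcing the point onto $\partial R\subseteq \partial Q^A_\alpha\cup\partial Q^B_\alpha$, i.e. into a simplex of the form $\{A\}$ or $\{B\}$ or $\{A,B\}$ joined with the $M_R$-part. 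Consequently $\nerve(S)$ decomposes as a union $N_{\mathrm{in}}\cup N_{\mathrm{out}}$ glued along the complex $N_{\mathrm{gate}}$ supported on $\{A,B\}$, where $N_{\mathrm{in}}$ is the full subcomplex on $M_R\cup\{A,B\}$ and $N_{\mathrm{out}}$ the full subcomplex on $(S\setminus M_R)$. The map $\phi$ is the identity on $N_{\mathrm{out}}$ and collapses $N_{\mathrm{in}}$ onto $N_{\mathrm{gate}}$ (the edge $\{A,B\}$ if it is present, the vertex $\{A\}$ otherwise).

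**Next I would** invoke a Mayer--Vietoris argument on both sides simultaneously. Applying $H_1$ to $\nerve(S)=N_{\mathrm{in}}\cup N_{\mathrm{out}}$ and to $\nerve(S_R)=\phi(N_{\mathrm{in}})\cup N_{\mathrm{out}} = N_{\mathrm{gate}}\cup N_{\mathrm{out}}$, and noting that $\phi$ is the identity on $N_{\mathrm{out}}$ and on $N_{\mathrm{in}}\cap N_{\mathrm{out}}=N_{\mathrm{gate}}$, the five lemma reduces the claim to showing that $\phi$ induces isomorphisms $H_1(N_{\mathrm{in}})\to H_1(N_{\mathrm{gate}})$ and $H_0(N_{\mathrm{in}})\to H_0(N_{\mathrm{gate}})$. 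The $H_0$ statement is immediate: $N_{\mathrm{in}}$ is connected since every restricted offset in $R$ touches $\partial R$ and hence meets $Q^A_\alpha$ or $Q^B_\alpha$, and $N_{\mathrm{gate}}$ (an edge or vertex) is connected. For $H_1(N_{\mathrm{in}})=0$ I would use the nerve theorem one more time: the collection $\setQ^{\mathrm{in}}_\alpha$ consisting of the restricted offsets of $M_R\cup\{A,B\}$, restricted further to $\Clos(R)$, is (after again discarding any sites sitting inside nested surrounded regions, using that $R$ is simple so there are none properly inside to worry about — or by induction on the nesting depth if one prefers not to assume simplicity) a good cover of the topological disk $\Clos(R)$, so its nerve, which is exactly $N_{\mathrm{in}}$, has $H_1=0$. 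Since $N_{\mathrm{gate}}$ also has trivial $H_1$, the induced map on $H_1$ is trivially an isomorphism, and the five lemma closes the argument.

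**The main obstacle** I anticipate is the claim that $\setQ^{\mathrm{in}}_\alpha$ forms a good cover of $\Clos(R)$: the restricted offsets inside $R$ are still intersections of convex offsets with (possibly non-convex) Voronoi regions, so one must verify that all pairwise and triple intersections relevant inside $R$ are contractible. This is where the pseudodisk-type statement (Theorem~\ref{thm:pseudodisk_maintext}) and the fact that $R$ is \emph{simple} — so that no two sites inside $R$ themselves induce a surrounded region, meaning their restricted offsets meet in at most one component inside $R$ — must be combined carefully; if $R$ were not simple one would first recurse into the innermost nested regions, which is exactly why the text processes innermost regions first. A secondary subtlety is bookkeeping the degenerate case where $\{A,B\}$ is not an edge of $\nerve(S)$ (the restricted offsets of $A$ and $B$ do not yet meet at parameter $\alpha$), in which case $N_{\mathrm{gate}}$ is a single vertex and $\phi$ collapses $N_{\mathrm{in}}$ to a point; this only makes the $H_1$-vanishing statement easier, so it causes no real trouble but should be stated explicitly.
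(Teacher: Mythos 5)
Your overall skeleton---decompose $\nerve(S)$ as $N_{\mathrm{in}}\cup N_{\mathrm{out}}$ glued along the gate $N_{\mathrm{gate}}$ supported on $\{A,B\}$, then compare the two Mayer--Vietoris sequences via the five lemma---is correct and is a cleaner packaging of what the paper's proof does by explicitly decomposing and rerouting $1$-cycles. Both strategies, however, bottom out on the same single fact, namely $H_1(N_{\mathrm{in}})=0$, and it is exactly there that your argument has a real gap.

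You assert that the restricted offsets of $M_R\cup\{A,B\}$, clipped to $\Clos(R)$, form a good cover of the disk $\Clos(R)$, and then apply the nerve theorem. This is false precisely because of $A$ and $B$ themselves: the very fact that $R$ is a surrounded region induced by $A$ and $B$ means that the restricted offsets of $A$ and $B$ meet in (at least) two components---the two points $v_1,v_2$ of $a\cap b$ on $\partial R$ (or two small arcs, if you thicken to a neighborhood of $R$ instead of clipping to $\Clos(R)$). A non-contractible pairwise intersection is exactly what the good-cover hypothesis forbids, and the nerve of a non-good cover of a disk can carry nontrivial $H_1$; three pairwise-but-not-triply intersecting sets covering a disk would already give a hollow triangle. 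Your ``main obstacle'' paragraph worries only about pairs of sites inside $R$, and your ``secondary subtlety'' worries about $\{A,B\}$ failing to be an edge---but that case cannot occur (if $A$ and $B$ did not meet, there would be no surrounded region), and it is not the danger. The danger is that $\{A,B\}$ is an edge whose corresponding intersection is disconnected.

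This is exactly the point where the paper does real work. It replaces $A,B$ by local thickenings $A',B'$ around $R$ and then \emph{cuts open} one of the two components of $A'\cap B'$, choosing the cut so that the pairwise intersections with the adjacent member $C\in M_R$ survive. The resulting collection $M''$ is a genuine good cover of a disk, so $H_1(\nerve M'')=0$ by the nerve theorem; and $\nerve M''$ sits inside $\nerve M_R^{\mathrm{ext}}$ with the same $1$-skeleton, differing by at most the single $2$-simplex $ABC$. Since adding a $2$-simplex along an already-present boundary cycle cannot increase $H_1$, $H_1(\nerve M_R^{\mathrm{ext}})=0$ follows. Once you splice this cut-and-compare step in to justify $H_1(N_{\mathrm{in}})=0$, your Mayer--Vietoris / five-lemma route closes and is arguably tidier than the paper's explicit loop-by-loop replacement.
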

\begin{proof}
We set $M_R^{\mathrm{ext}}:=M_R\cup\{A,B\}$ and argue that $H_1(\nerve(M_R^{\mathrm{ext}}))=0$: We can restrict $A$ and $B$ to a neighborhood around $R$ without changing the nerve.
Since $R$ is simple, $M_R^{\mathrm{ext}}$ is ``almost'' a good cover~-- the only obstacle is that $A$ and $B$ intersect in two components. However, we can cut one of the two
intersections open. This will cause at most one triangle in the nerve to be removed
(losing this triangle is the reason why the statement does not extend to $H_2$). 
However, since $R$ is filled, the underlying space is a disk even after the cut, and the nerve theorem asserts that $H_1(\nerve(M_R^{\mathrm{ext}}))=0$.

The proof works now by fixing a cycle in $\nerve(S)$ and transforming it to a cycle that does not contain any element of $M_R$ anymore.
The idea is to ``reroute'' any path that enters $R$ such that it runs entirely in $A\cup B$. Because the nerve of $M_R\cup\{A,B\}$ has trivial $1$-homology,
such a rerouting does not change the homology of the cycle. We skip further details of this elementary construction.
\end{proof}

It follows that our construction yields a sequence of isomorphisms connecting $H_1(\nerve(\setQ_\alpha^\ast))$ and $H_1(\nerve(\setQ_\alpha))$, thus proving the equivalence
of the $1$-homology for every $\alpha$. This statement, however, does \emph{not} immediately imply the equivalence of the $1$-barcodes of both, 
because it is not clear that the constructed isomorphisms commute with inclusion. 
This is the second major technical result needed for the proof.
With $\alpha_1\leq\alpha_2$, we have the diagram

\begin{eqnarray}
\xymatrix{
H_1(|\setQ_{\alpha_1}|) \ar@{=}[r]\ar @{^{(}->}[d] & H_1(|\setQ_{\alpha_1}^\ast|)\ar @{^{(}->}[d]  & H_1(\nerve \setQ_{\alpha_1}^\ast)\ar[l]^{\theta^\ast}\ar[d]^{\phi^\ast} & H_1(\nerve \setQ_{\alpha_1})\ar @{^{(}->}[d]  \ar [l]^{\phi^\ast}\\
H_1(|\setQ_{\alpha_2}|) \ar@{=}[r] & H_1(|\setQ_{\alpha_2}^\ast|)  & H_1(\nerve \setQ_{\alpha_2}^\ast)\ar[l]^{\theta^\ast} & H_1(\nerve \setQ_{\alpha_2})  \ar [l]^{\phi^\ast}\\
}
\end{eqnarray}
where $\theta^\ast$ is the isomorphism from the nerve theorem and $\phi^\ast$ a composition of the isomorphisms between the nerves as constructed above.
The first and the third square commute as one can easily verify. The difficulty lies in the the middle square.
Note that Theorem~\ref{thm:nerve_inclusion_commute} does not apply here because the nerve filtration of free sites is not induced
by inclusion. Indeed, when $\alpha$ increases, sites may change their status from free to surrounded, and thus disappear
from the set $\setQ^\ast$. However, the map $\phi$ as defined above induces a natural simplicial map,
so the nerves of free sites form a filtration induced by simplicial maps. The remainder of the proof consists of investigating the
internals of the nerve isomorphism, similar to the proof of~\cite[Lemma 3.4]{co-towards}, and to show that it commutes
with the simplicial map $\phi$ (on the level of chain groups). We skip further details.
With the commutativity of the diagram, Theorem~\ref{thm:1-barcode_equivalence}
follows from the persistence equivalence theorem~\cite[p.159]{eh-computational}.

\begin{wrapfigure}[9]{r}{2.5cm}
\vspace{-.5cm}
\includegraphics[width=2.5cm]{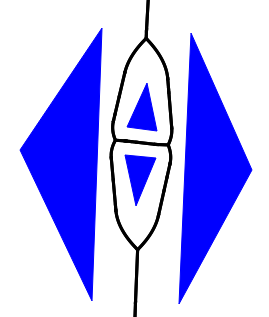}
\caption{A ghost sphere}
\label{fig:ghost_sphere}
\end{wrapfigure}
Theorem~\ref{thm:1-barcode_equivalence} does not generalize to the $2$-barcode.
For instance, in Figure~\ref{fig:ghost_sphere}, 
we see four sites in $\R^2$ where every triple of Voronoi regions intersects, but there is no common intersection
of all four of them. Consequently, their nerve consists of the four boundary triangles of a tetrahedron
and therefore carries non-trivial $2$-homology. We refer to such homology classes as ``ghost features''.
In the planar case, the offset filtration can clearly not form any void (a 
$3$-dimensional hole) and we can therefore safely ignore all ghosts.
In $\R^3$, however, the $2$-barcode carries information about the offset and the ghosts need to be distinguished
from real features. 
At first glance, one might hope that ghost features have infinite persistence
(as opposed to real features). However, we can quite easily
extend the situation of Figure~\ref{fig:ghost_sphere}
to four prisms in $\R^3$ that create a ghost with finite persistence
This shows that considering only the nerve is problematic in dimensions
higher than $2$.

\section{Restricted barcodes in 3D}
\label{sec:higher_dim}
As Theorem~\ref{thm:1-barcode_equivalence} does not generalize to higher-dimensions,
we now present a refinement of the nerve construction for three-dimensional space.
Reconsidering the ``ghost example'' from Section~\ref{sec:restricted_barcodes},
it seems attractive to pass to the multi-nerve~\cite{multinerve}, that is, introducing a distinct
simplex for each lower-dimensional cell of the Voronoi diagram.
However, this approach is not sufficient for two reasons:
\begin{inparaenum}
\item Voronoi cells might be non-simply connected;
\item even if the Voronoi cells form a good cover, this may not be true for the restricted offsets at all scales $\alpha$.
\end{inparaenum}
See Figure~\ref{fig:not_simple_3_d} for an illustration.

\begin{figure} %
	\centering
	\includegraphics[width=0.25\textwidth] {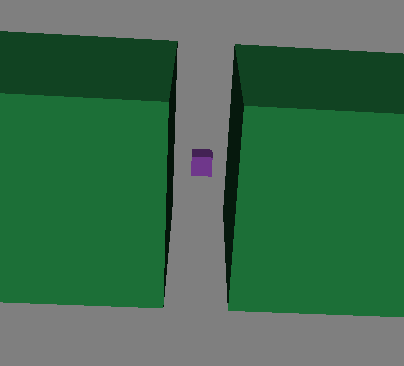}
	\hspace{1cm}
	\includegraphics[width=0.25\textwidth] {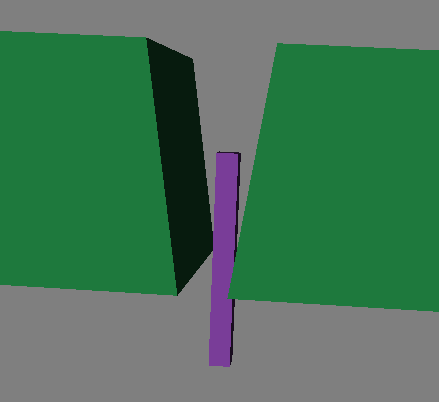}
	
	\caption{
	Left: Two large polyhedra with a small cube in between 
	them. The Voronoi cell of the two large polyhedra will contain an unbounded face with a hole 
	in its middle. 
	Right: Two large polyhedra that are closer at their bottom than at their top. 
	At the time when their restricted offsets first intersect, the intersection
	will be composed of two connected components at their bottom, one at each side of the purple 
	polyhedron, despite the fact that the Voronoi cells form a good cover.
	This and the remaining figures are best viewed in color.
	}
	\vspace{-3mm}
	\label{fig:not_simple_3_d}
\end{figure}

We use the following definitions: 
an arrangement $\mathcal{A}$ in $\R^3$ is a \emph{refinement} 
of $\voronoi(\setP)$ in $\R^3$
if every $0$-, $1$-, $2$-, or $3$-dimensional cell of $\mathcal{A}$ is contained in a cell of $\voronoi(\setP)$.
For a cell $\sigma\in\mathcal{A}$ and $\alpha\geq 0$, define the \emph{restricted cell} $\sigma_\alpha:=\{x\in\sigma\mid \distance(x)\leq \alpha\}$.
We call $\sigma$ \emph{stratified} if for all $\alpha\geq 0$, $\sigma_\alpha$ is empty or contractible. Note that in particular, 
a stratified cell is contractible.
We call an arrangement $\mathcal{A}$ a \emph{stratified refinement} of $\voronoi(\setP)$,
if $\mathcal{A}$ is a refinement of $\voronoi(\setP)$ and every cell of $\mathcal{A}$ is stratified.
As before, we define the critical value of a cell $\sigma\in\mathcal{A}$ as $\critical{\sigma}:=\inf\{\alpha\in\R\mid \sigma_\alpha\neq\emptyset\}$.

\paragraph{Co-arrangements}
An arrangement $\mathcal{A}$ in $\R^3$ gives rise to a dual structure in a natural way:
fixing two cells $\sigma,\tau$ of $\mathcal{A}$ with $\dim(\sigma)<\dim(\tau)$, 
we have that either $\sigma$ is contained in or completely disjoint from the boundary of $\tau$. 
In the former case, we say that $\sigma$ is \emph{incident} to $\tau$. 
If $\sigma$ is incident to $\tau$, $\critical{\sigma}\geq\critical{\tau}$.
The $\emph{co-arrangement}$ $\mathcal{A}^\ast$ of $\mathcal{A}$ is defined as follows:
for every cell $\sigma$ of $\mathcal{A}$, $\mathcal{A}^\ast$ has a \emph{co-cell} $\sigma^\ast$ such that their dimensions add up to $3$. The \emph{boundary} of a co-cell $\sigma^\ast$ of dimension $\delta$,
$\partial(\sigma^\ast)$, is the set of all co-cells $\tau^\ast$ of dimension $\delta-1$ such that $\sigma$ is incident to $\tau$. See Figure~\ref{fig:thickening} (left) for an illustration of the corresponding concept in the planar case.

\begin{figure}[t]
\centering
\includegraphics[width=11cm]{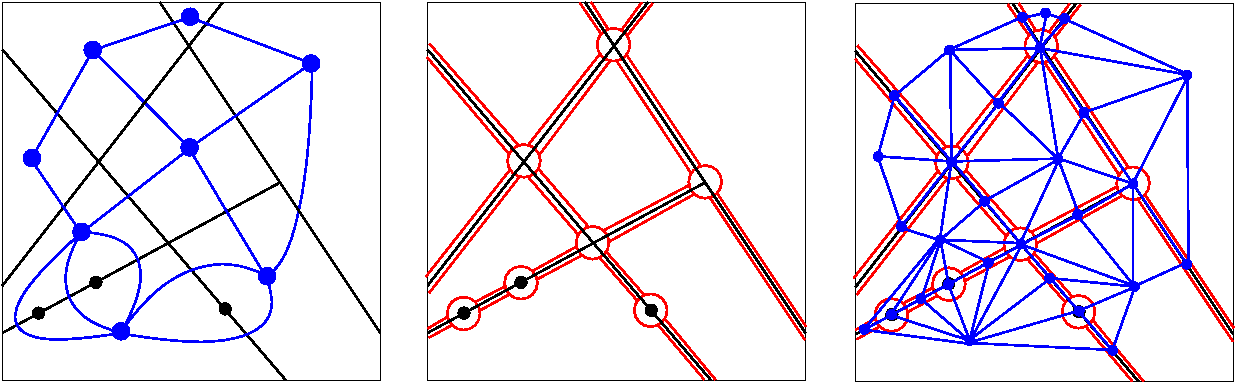}
\caption{
Left: An arrangement $\mathcal{A}$ (black) and the co-arrangement (blue). Middle: The thickening of $\mathcal{A}$. Right: The nerve of the thickening.}
\vspace{-3mm}
\label{fig:thickening}
\end{figure}

The critical value of a co-cell is defined as the critical value of its primal counterpart. 
This turns the co-arrangement into a filtered cell complex, since any co-cell has a critical value not smaller than
any co-cell in its boundary. For $\alpha\in[0,\infty]$, we let $\mathcal{A}^\ast_\alpha$ denote the collection of co-cells with critical value at most $\alpha$. 
Since $\partial\partial(\sigma)=0$ for any $\sigma$, there is a well-defined homology group for each $\mathcal{A}^\ast_\alpha$, and therefore, a barcode of the co-arrangement $\mathcal{A}^\ast$%
(equivalently, we could apply the \emph{cohomology functor} on the \emph{cofiltration} $\mathcal{A}_\alpha$).
We can now state the main result which allows us to express the barcode of the offset of three-dimensional 
shapes as well, in terms of a combinatorial structure.

\begin{theorem}
\label{thm:coarrangement}
Let $\setP$ be a collection of convex polyhedra in $\R^3$, and
let $\mathcal{A}$ be a stratified refinement of $\voronoi(\setP)$. Then, the barcode of the offset filtration of $\setP$
equals the barcode of $\mathcal{A}^\ast$.
\end{theorem}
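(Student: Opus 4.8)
The strategy is the same as for Theorem~\ref{barcode_equivalence}: for every $\alpha$ we produce a \emph{good} cover of the offset $|\setP_\alpha|$ whose nerve is the combinatorial object we want, and which behaves well under inclusion, so that the persistence equivalence theorem applies. The covers $\setP_\alpha$ are good but of size $\Theta(n^{d+1})$, and the covers $\setQ_\alpha$ have size $O(|\voronoi(\setP)|)$ but are not good; we obtain the best of both by \emph{thickening} the stratified refinement $\mathcal{A}$, as indicated in Figure~\ref{fig:thickening}. Fix a small $\eps>0$. For a cell $\sigma$ of $\mathcal{A}$ let $\hat\sigma$ be its barycenter and let $\thick_\sigma$ be the open star of $\hat\sigma$ in the barycentric subdivision $\mathrm{sd}(\mathcal{A})$ — equivalently, the set of points of $\R^3$ whose carrier chain of cells of $\mathcal{A}$ contains $\sigma$, a neighborhood of $\sigma$ that pokes a little into every cell incident to $\sigma$ and is pulled back from every cell on the boundary of $\sigma$. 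The relevant elementary facts about this standard "derived" cover of $\R^3$ are: $\thick_{\sigma_0}\cap\dots\cap\thick_{\sigma_p}\neq\emptyset$ iff $\sigma_0,\dots,\sigma_p$ form a chain under the face relation of $\mathcal{A}$, in which case the intersection deformation retracts onto the smallest of the $\sigma_i$; hence $\nerve(\{\thick_\sigma\})=\mathrm{sd}(\mathcal{A})$. We then set, for $\alpha\geq 0$,
\[
   \thick_\sigma(\alpha)\;:=\;\thick_\sigma\,\cap\,N_\eps(\sigma_\alpha)\,\cap\,|\setP_\alpha|,
\]
where $N_\eps(\cdot)$ is the closed $\eps$-neighborhood and $\sigma_\alpha$ the restricted cell, with $\eps$ chosen (using general position) so small that $\sigma_\alpha\subseteq\thick_\sigma(\alpha)$ for all $\sigma,\alpha$. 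Then $\thick_\sigma(\alpha)\neq\emptyset$ exactly when $\sigma_\alpha\neq\emptyset$, i.e.\ exactly when $\critical\sigma\leq\alpha$; since the cells $\sigma_\alpha$ partition $|\setP_\alpha|$, the family $\setQ^{(\alpha)}:=\{\thick_\sigma(\alpha)\mid \critical\sigma\leq\alpha\}$ covers $|\setP_\alpha|$; and $\thick_\sigma(\alpha)\subseteq\thick_\sigma(\alpha')$ for $\alpha\leq\alpha'$, so the $\setQ^{(\alpha)}$ are nested.

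\textbf{The cover is good, and its nerve is $\mathcal{A}^\ast_\alpha$.} Incomparable cells $\sigma,\tau$ already have $\thick_\sigma\cap\thick_\tau=\emptyset$, so only face-chains contribute simplices; and each nonempty intersection $\bigcap_i\thick_{\sigma_i}(\alpha)$ indexed by a chain $\sigma_0<\dots<\sigma_p$ deformation retracts onto $(\sigma_0)_\alpha$, which is empty or contractible \emph{precisely because $\mathcal{A}$ is a stratified refinement}. By Theorem~\ref{thm:nerve_theorem}, $|\setP_\alpha|\simeq\nerve(\setQ^{(\alpha)})$, so $H_p(|\setP_\alpha|)\cong H_p(\nerve(\setQ^{(\alpha)}))$ for all $p$. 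Combining the two facts above, $\nerve(\setQ^{(\alpha)})$ is exactly the order complex of the poset of cells $\sigma$ of $\mathcal{A}$ with $\critical\sigma\leq\alpha$, ordered by the face relation. Reversing this order yields the poset of co-cells $\sigma^\ast$ with $\critical{\sigma^\ast}\leq\alpha$ ordered by the boundary relation of $\mathcal{A}^\ast$. This is the only place where passing to the \emph{co}-arrangement matters: since $\sigma$ incident to $\tau$ implies $\critical\sigma\geq\critical\tau$, the set of cells of critical value $\leq\alpha$ is downward-closed for the boundary relation of $\mathcal{A}^\ast$ — it is the genuine subcomplex $\mathcal{A}^\ast_\alpha$ — whereas it is \emph{not} downward-closed in $\mathcal{A}$ itself. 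An order complex is unchanged by reversing the order, so $\nerve(\setQ^{(\alpha)})$ is precisely the barycentric subdivision of $\mathcal{A}^\ast_\alpha$, and the entry value of each simplex agrees on both sides; hence $H_p(|\setP_\alpha|)\cong H_p(\mathcal{A}^\ast_\alpha)$ for every $p$ and every $\alpha$.

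\textbf{From pointwise isomorphism to equality of barcodes.} The covers $\setQ^{(\alpha)}$ are nested and the inclusions $\mathcal{A}^\ast_\alpha\hookrightarrow\mathcal{A}^\ast_{\alpha'}$ are (injective) simplicial maps, so the generalization of the Chazal--Oudot commutation result stated as Theorem~\ref{thm:nerve_inclusion_commute} shows that the nerve isomorphisms $H_p(|\setP_\alpha|)\cong H_p(\mathcal{A}^\ast_\alpha)$ commute with all the maps induced by inclusion. The persistence equivalence theorem~\cite[p.159]{eh-computational} then yields that the offset filtration of $\setP$ and the co-arrangement $\mathcal{A}^\ast$ have equal barcodes.

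\textbf{Where the difficulty lies.} The routine-looking but technically demanding step is the good-cover claim: that the sliced, localized thickenings $\thick_\sigma(\alpha)$ and their chain-intersections are contractible or empty. Contractibility of the bare restricted cell $(\sigma_0)_\alpha$ is exactly what "stratified" buys us (recall Figure~\ref{fig:not_simple_3_d}, where an unrefined Voronoi face restricts to an annulus), but one must still show that the thickening — which deliberately reaches into the higher-dimensional cells incident to $\sigma_0$, whose own restrictions $\tau_\alpha$ may be complicated — retracts onto $(\sigma_0)_\alpha$ after being cut down by $|\setP_\alpha|$ and by $N_\eps(\sigma_\alpha)$, and that these retractions can be chosen simultaneously for all members of a chain and compatibly as $\alpha$ increases. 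Making this work — choosing $\eps$ in terms of the finitely many generic distances determined by $\mathcal{A}$ and appealing to the tameness of the semialgebraic sets involved — is the bulk of the effort; more bookkeeping than obstacle are the open/closed discrepancy with the statement of Theorem~\ref{thm:nerve_theorem} (use the open-cover version, or replace open stars by closed derived neighborhoods with the same intersection pattern) and the check that $\thick_\sigma(\alpha)$ becomes nonempty at the exact value $\critical\sigma$ even when the critical point of $\sigma$ lies on $\partial\sigma$.
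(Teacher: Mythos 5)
Your strategy differs from the paper's in a structurally significant way, and the difference is exactly where the difficulty lies.

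The paper does \emph{not} cover $|\setP_\alpha|$ directly. It builds a fixed good cover $\thick$ of space by thickenings $\thick_\sigma$ of cells (using radii $\eps_0\gg\eps_1\gg\eps_2$ by dimension), takes the subcollection $\thick_\alpha$ of thickenings of active cells, and observes this is automatically good because it is a subcollection of a good cover --- the pieces are just unsliced derived neighborhoods whose chain-intersections retract onto the lowest cell, which is contractible by stratification, \emph{with no dependence on $\alpha$}. The price for this simplicity is that $|\thick_\alpha|\neq|\setP_\alpha|$; the paper pays it with a separate two-phase deformation (retract $|\setP_\alpha|$ out of the thickenings of inactive cells, then expand into the full thickenings of active cells), and with a final CW-subdivision argument to pass from the nerve (a barycentric subdivision of $\mathcal{A}^\ast_\alpha$) back to $\mathcal{A}^\ast_\alpha$ itself.

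You instead slice each thickening by $N_\eps(\sigma_\alpha)$ and $|\setP_\alpha|$ to get a genuine cover of $|\setP_\alpha|$. This buys you a cleaner endgame: the covers are literally nested, so Theorem~\ref{thm:nerve_inclusion_commute} applies as stated and no separate deformation or CW-subdivision step is needed. But it moves all the difficulty into the good-cover claim, which now must hold \emph{for every $\alpha$} and for pieces whose shape depends on the interplay of $\thick_\sigma$, the fixed $\eps$, and the $\alpha$-sublevel set of $\distance$. You acknowledge this is "the bulk of the effort," but note that stratification only gives contractibility of the bare $\sigma_\alpha$; it says nothing directly about the $\eps$-slab of $|\setP_\alpha|$ hugging $\sigma_\alpha$ inside the wedges $\thick_{\sigma_0}\cap\dots\cap\thick_{\sigma_p}$. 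Nothing in the hypotheses controls the behaviour of $\distance$ on a uniform $\eps$-tube around a lower-dimensional cell across all scales; the retractions you posit would have to be constructed, compatibly in $\alpha$ and across chains, from tameness alone. That is precisely the step the paper's authors chose to route \emph{around} by keeping the cover pieces $\alpha$-independent and doing the $\alpha$-dependence as a deformation. So this is a real gap, not bookkeeping: you have not established the good-cover claim, and it is not clear it follows without work comparable to (or harder than) the deformation argument it replaces. The rest of the proposal --- the identification of the nerve with the order complex of the downward-closed sub-poset of $\mathcal{A}^\ast$, and the role of passing to the co-arrangement --- matches the paper and is correct.

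A smaller point: the paper's proof also needs the final step that $\mathcal{A}^\ast_\alpha$ and its barycentric subdivision have the same filtered homology (a CW-complex argument). Your route sidesteps this because Theorem~\ref{thm:nerve_inclusion_commute} already lands you on the subdivided complex, but you should still note explicitly that a barycentric subdivision of a filtered CW-complex has the same barcode; you implicitly use this when you declare "$H_p(|\setP_\alpha|)\cong H_p(\mathcal{A}^\ast_\alpha)$" rather than "$\cong H_p(\mathrm{sd}(\mathcal{A}^\ast_\alpha))$."
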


\begin{proof}
For each cell $\sigma\in\mathcal{A}$, 
we define a \emph{thickening of $\sigma$} by transforming $\sigma$ into a $3$-dimensional cell.
The construction yields a collection $\thick$ of thickenings which are closed and interior-disjoint.
The construction idea is to define the thickening of $\sigma$ by an offset of $\sigma$
with a sufficiently small radius. More precisely, we choose radii $\eps_0>\eps_1>\eps_2$ and proceed in increasing dimensions.
The thickening of an $i$-cell $\sigma$ is its $\eps_i$-offsets, restricted to the space that has not been occupied yet
by thickenings of lower-dimensional cells. Choosing $\eps_0$ small enough and $\eps_0\gg\eps_1\gg\eps_2$, we can ensure
that two thickenings intersect if and only if the corresponding two cells are incident.
More generally, a set of thickenings has a non-empty common intersection if and only the involved cells $\sigma_1,\ldots,\sigma_k$
form a \emph{flag}, that is, $\sigma_i$ is incident to $\sigma_{i+1}$ for $1\leq i\leq k-1$.
Finally, for sufficiently small $\eps_i$, the intersection of such a thickened flag has a homotopy-preserving 
deformation to $\sigma_1$, the lowest dimensional cell of the flag. 
Since all cells are contractible, this implies that the thickenings form a good cover.
We define $\thick$ as the collection of the thickening of all cells in $\mathcal{A}$.
See Figure~\ref{fig:thickening} (middle) for an illustration.
For any $\alpha\geq 0$, we define the $\alpha$-thickening $\thick_\alpha\subseteq\thick$ as the subset consisting of all 
thickenings of cells of $\mathcal{A}$ with a critical value of at most $\alpha$.
$\thick_\alpha$ forms a good cover as well just because it is a subset of the good cover $\thick$.
Recall that $|\thick_\alpha|$ denotes the underlying space of~$\thick_\alpha$.

The first part of the proof is to show that the filtration $(|\thick_\alpha|)_{\alpha\geq 0}$ has the same barcode as the offset filtration $(|\setP_\alpha|)_{\alpha\geq 0}$.
For that, we define a continuous transformation from $|\setP_\alpha|$ to $|\thick_\alpha|$ through a sequence of expansions and retractions.
This transformation ensures that the spaces are homotopically equivalent 
(more formally the intersection is a \emph{deformation retract} for both spaces). 
To construct the deformation, 
we let $X$ denote the deformed shape; initially, $X$ equals $|\setP_\alpha|$.
We proceed in two phases, 
a \emph{retraction phase} and an \emph{expansion phase}. For a fixed $\alpha$, we call a cell $\sigma$ \emph{active} if $\critical{\sigma}\leq\alpha$,
and \emph{inactive} otherwise.

For the retraction phase, let $\sigma$ be an inactive cell. 
Being inactive, the $\alpha$-offset did not reach a point on $\sigma$ yet~-- however, $X$ may well intersect the thickening of $\sigma$ already, 
because the offset of several sites are coming very close to $\sigma$ (with $c=\critical{\sigma}$ and $\delta=\dim(\sigma)$,
such an event occurs only in the small range $[c-\eps_\delta,c]$). The whole purpose of the retraction phase is to remove such features from $X$,
such that $X$ has no points inside the thickenings of inactive cells empty. 
The retraction works as follows: let $\gamma$ be a threshold value, starting at $\alpha$ and continuously decreasing towards zero.
The restriction of the distance function $\distance$ to an inactive cell $\sigma$ induces decreasing sublevel sets with respect to threshold $\gamma$.
This defines a retraction path for every point which has to hit the boundary, where it either stops (if we reach the boundary of an active cell)
or further retracts (otherwise). Since this operation does not affect any point on any cell $\sigma$ (active or inactive),
no tearing is happening, thus the transformation is a deformation retract.

The expansion works similarly in the other direction. We let $\gamma$ progress from $\alpha$ towards $\infty$. We let $X$ grow
within each active cell simultaneously by increasing the sublevel set threshold with respect to $\gamma$
(in informal terms, for active cells,
$X$ antedates the changes happening in the cell in the future and performs them at scale $\alpha$ all at once).
Unlike in the retraction phase, this operation affects points on cells $\sigma$ and we have to ensure that no unwanted gluing occurs. 
However, since $\sigma$ is active, there is at least one point on $\sigma$ in $X$ initially, for threshold value $\alpha$. Since by assumption,
the sublevel set $\sigma_\gamma$ stays contractible, we are only expanding existing intersection patterns, which is enough to argue
that the deformation preserves the homotopy type. 
At the end of the expansion, $X$ is precisely the union of all active cells, thus equal to $|\thick_\alpha|$. Also, the constructed deformations commute with inclusion on a homotopy level. 
Consequently, the two filtrations have the same barcode, which concludes the first part of the 
proof.

By duality, the flags of $\mathcal{A}_\alpha$ equal the flags of $\mathcal{A}^\ast_\alpha$ in the sense that any flag of $\mathcal{A}_\alpha$ is in one-to-one correspondence
to a sequence of co-cells $\sigma^\ast_1,\ldots,\sigma^\ast_k$ such that $\sigma^\ast_{i+1}$ is a face of $\sigma^\ast_{i}$ for $1\leq i\leq k$.
That implies that the nerve of the $\alpha$-thickenings equals the (abstract) barycentric subdivision of $\mathcal{A}^\ast_\alpha$. 
Since the $\alpha$-thickenings are a good cover, the filtrations $(|\thick_\alpha|)_{\alpha\geq 0}$
and $(\nerve(\thick_\alpha))_{\alpha\geq 0}$ have the same barcode. Putting everything together,
the barcode of the offsets has the same barcode as the barycentric subdivision filtration of $\mathcal{A}^\ast$.

To finish the proof, it suffices to argue that the barcode of $\mathcal{A}^\ast$ and its barycentric subdivision coincide. 
Note $\mathcal{A}^\ast$ has a natural geometric realization in $\R^3$ as the dual of an arrangement $\mathcal{A}$ in $\R^3$. 
Moreover, its barycentric subdivision permits the same geometric realization, by picking one point in the interior of each cell
and subdividing according to incidence relations. 
We can furthermore note that both the arrangements and the co-arrangements are CW-complexes. 
Since cellular and singular homology
coincide for this class of complexes, the two complexes have the same homology. 
The argument also holds for any $\alpha$ and the underlying map commutes with inclusions. This shows
the equality of the barcodes and proves the theorem.
\end{proof}

\paragraph{Stratified refinements}
We are left with the question of how to obtain a stratified refinement of a Voronoi diagram of convex polyhedra.
We remark that for the case $d=2$, the Voronoi diagram of convex polygons is already stratified,
so no refinement is needed. This follows directly from the pseudodisk property (Theorem~\ref{thm:pseudodisk_maintext}),
ensuring that the sublevel set of each bisector stays connected for each $\alpha$.
The resulting co-arrangement is precisely the multi-nerve as defined in~\cite{multinerve}.
However, we note that in light of Theorem~\ref{thm:1-barcode_equivalence},
there is no need to consider co-arrangements at all in the planar case.

We turn to the case $d=3$. Here, the Voronoi diagram is generally not stratified, for the reasons given at the beginning 
of the section. To stratify, we will cut every cell into stratified pieces. We consider the case of a single connected component 
of a bisector (that is, a $2$-cell of the Voronoi diagram) in isolation: as it will turn out, our stratification method will 
ensure that all trisectors (i.e., points with the same distance to three sites) on its boundary will become stratified as well. 

So, let us fix a $2$-cell $\sigma$, contained in the bisector $B$ of two sites $P^i, P^j\in\setP$.
The boundary of $\sigma$ consists of a collection of $1$-cells where each $1$-cell belongs to a trisector.
The boundary splits into connected components;
since $B$ is homeomorphic to a plane, we can distinguish between an \emph{outer boundary} of $\sigma$
(which might be unbounded if $\sigma$ is unbounded)
and an arbitrary number of closed \emph{inner boundaries}.
The presence of inner boundaries turns $\sigma$ non-simply connected, and thus $\sigma$ is non-stratified
(an example is given by Figure~\ref{fig:not_simple_3_d} (left)).

Note that for $x\in\sigma$, we have that $\distance(x)=\distance(x,|\setP|)=\distance(x,P^i)=\distance(x,P^j)$ by definition.
Therefore, $\distance$ cannot have more than one local minimum \emph{in the interior} of $\sigma$ since two such minima
would imply two minima on the bisector, which is impossible by Theorem~\ref{thm:pseudodisk_maintext}.
However, it is well possible that $\distance$ restricted to $\sigma$ has local minima on $\partial(\sigma)$,
both on inner and outer boundary components. The presence of such local minima turns the restricted cell $\sigma_\alpha$
disconnected for a certain range of scales and turns $\sigma$ non-stratified as well (Figure~\ref{fig:not_simple_3_d} (right)).

We now define a stratified refinement of $\sigma$ into $2$-cells by introducing cuts in $\sigma$. 
We start by cutting $\sigma$ along a curve $\psi$ on $B$ that goes through $o$ and that is unimodal for $\distance$,
that is, has a minimum at $o$ and $\distance$-monotone otherwise. Moreover, we require $\psi$
to intersect every $1$-cell in the boundary of $\sigma$ only a constant number of times. Such a curve indeed exists
since we can find two monotone paths from $o$ to the outer boundary that avoid all inner boundaries.

Having cut using $\psi$, we introduce further cuts as follows:
for any value $\beta>0$, the \emph{$\beta$-isoline} is the curve defined by all points $p$ on the bisector 
with $\distance(p) = \beta$. 
By Theorem~\ref{thm:pseudodisk_maintext}, a $\beta$-isoline is a closed cycle on the bisector 
that loops around $o$. Restricting $\distance(\cdot)$ to 
$\partial \sigma$, we get that $\distance(\cdot)$ attains local maxima and minima at points on the (inner and outer) 
boundary curves, which we refer to as critical points. For a critical point $p$ with $\beta=\distance(p)$, 
an \emph{isoline segment at $p$} is a maximal connected piece of the $\beta$-isoline within $\sigma$ with $p$ on its boundary. 
An isoline segment may degenerate into a point.
We introduce cuts along all isoline segments for all critical points. 
Since $\psi$ is unimodal, unbounded, and goes through $o$, it intersects any isoline twice. 
Hence, isoline segments cannot be closed curves. 
An illustration is shown in Figure~\ref{fig:stratified_refinement}.
Let $\hat{\sigma}\subseteq\sigma$ be any cell of the obtained refinement of $\sigma$.

\begin{figure} %
	\centering
	\subfloat[][The original face $\sigma$]{
		\includegraphics[width=0.3\textwidth] {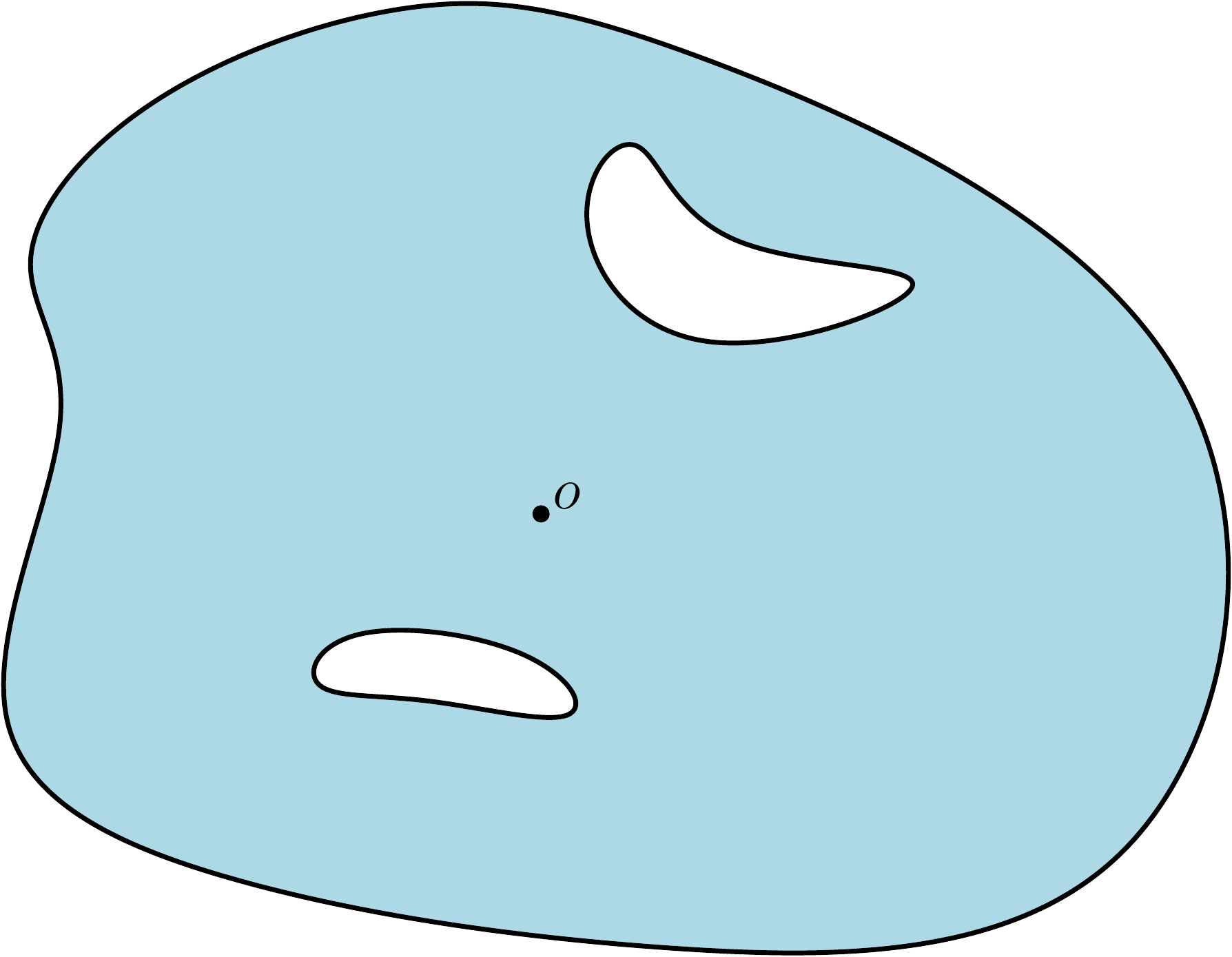}
		\label{subfig:stratified_original}
	}
	\hspace{0.3cm}
	\subfloat[][The isoline cuts over $\sigma$] {
		\includegraphics[width=0.3\textwidth]{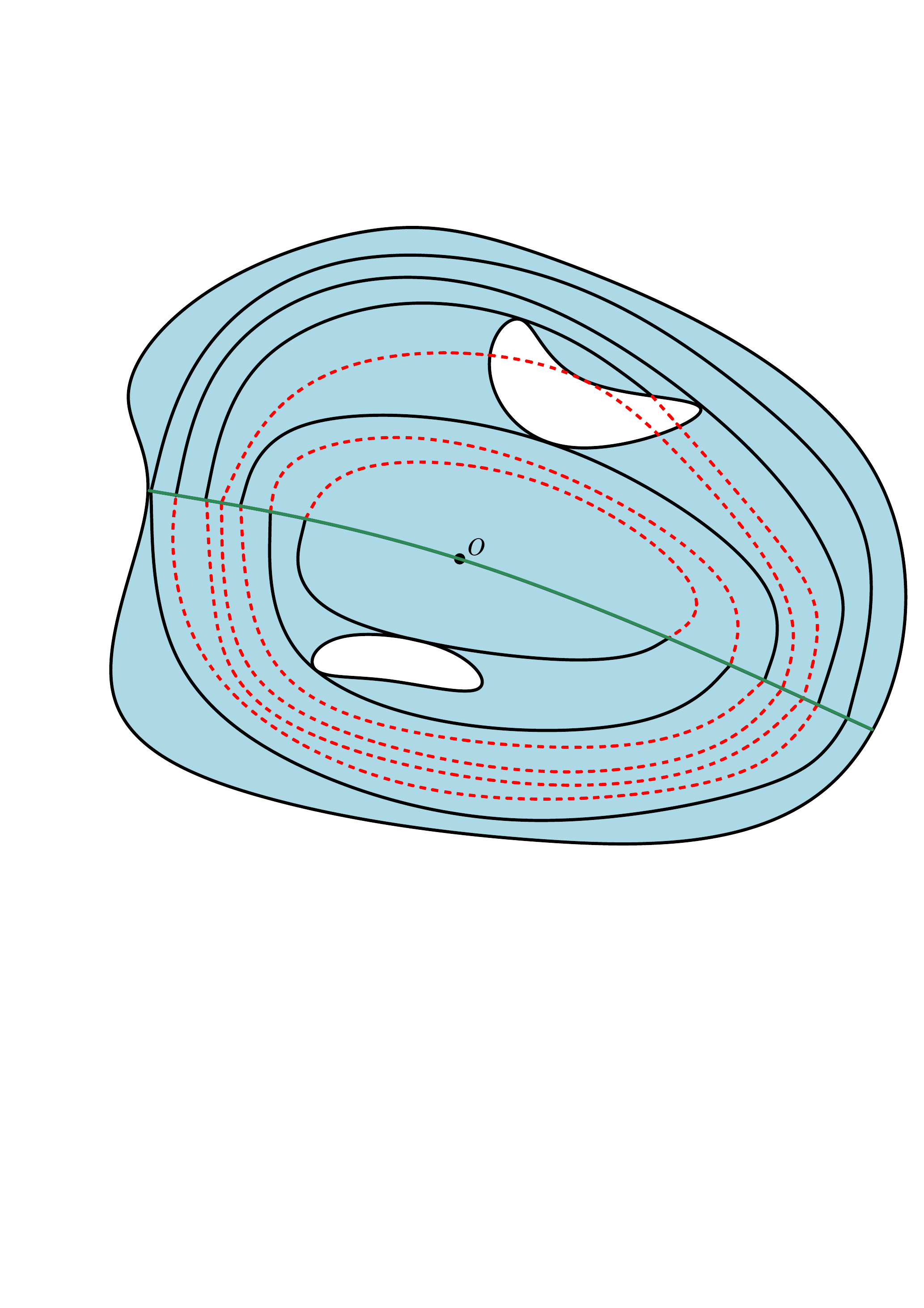}
		\label{subfig:stratified_with_isolines}
	}
	\hspace{0.3cm}
	\subfloat[][The stratified refinement] {
		\includegraphics[width=0.3\textwidth]{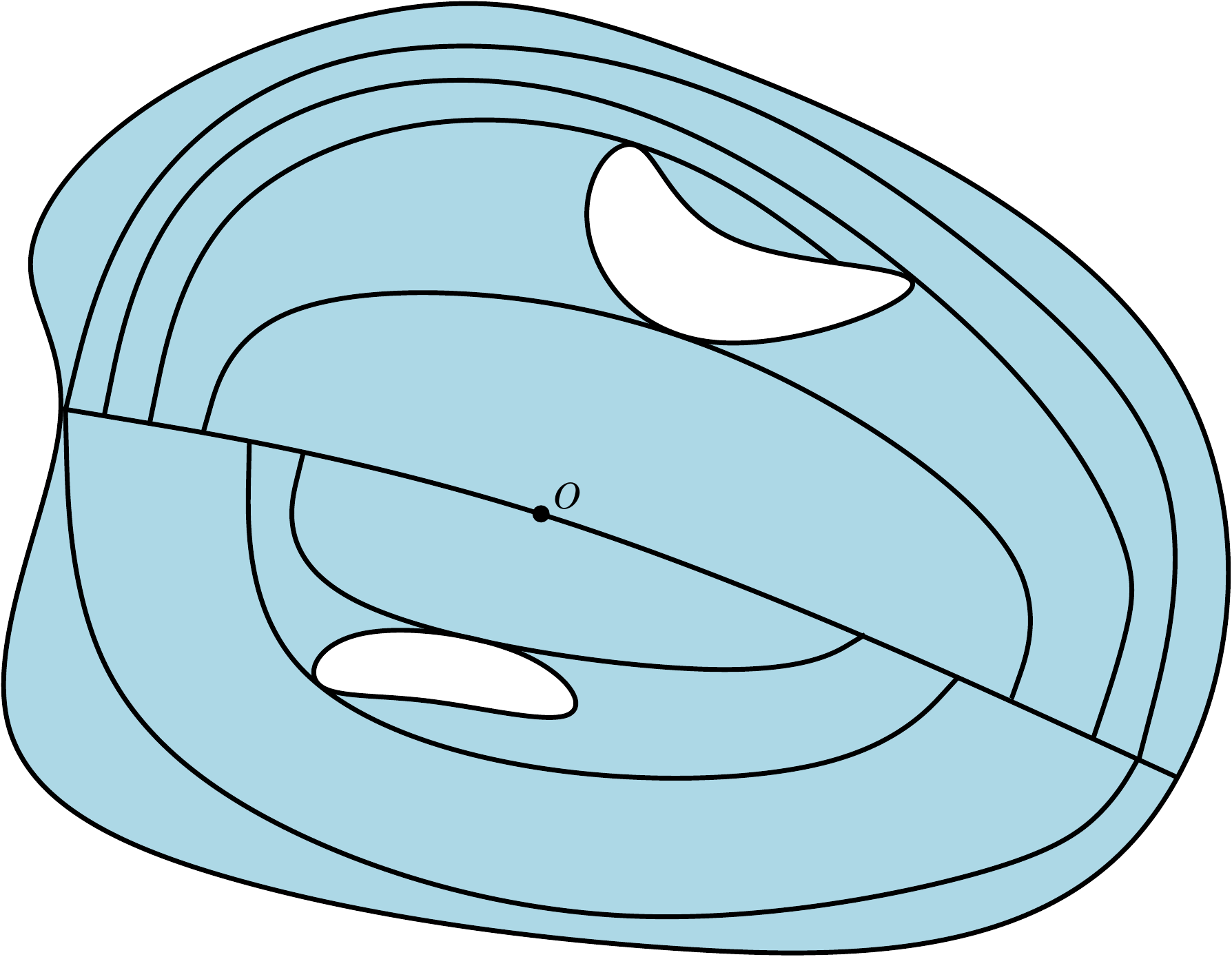}
		\label{subfig:stratified_final}
	}
	
	\caption{
	A stratified refinement of a $2$-cell. The original $2$-cell $\sigma$ is shown 
	in~$\protect\subref{subfig:stratified_original}$. In~$\protect\subref{subfig:stratified_with_isolines}$ we add the cut induced by 
	the unbounded curve $\psi$ (green) and cuts induced by the isoline segments at 
	critical points (black). The part of an isoline that is not an isoline segment is shown in red.
	The final stratification is shown in~$\protect\subref{subfig:stratified_final}$.
	}
	\label{fig:stratified_refinement}
	\vspace{-3mm}
\end{figure}

\begin{lemma}
Any sublevel set of a refined cell $\hat{\sigma}$ is contractible.
\end{lemma}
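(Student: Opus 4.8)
The plan is to show that every refined $2$-cell $\hat\sigma$ is a topological disk on which $\distance$ behaves like a generalized unimodal function, so that each nonempty sublevel set $\hat\sigma_\alpha=\{x\in\hat\sigma\mid\distance(x)\le\alpha\}$ is a point or a disk, hence contractible. (For the $1$- and $0$-cells created by the refinement the statement is immediate: $\distance$ is constant along an isoline cut, and strictly monotone along any sub-arc of $\psi$ or of an original $1$-cell.) First I would record what Theorem~\ref{thm:pseudodisk_maintext} gives on the bisector $B$: for every $\beta\ge 0$ the set $B_\beta$ is empty, a point, or a topological disk, and every nonempty $\beta$-isoline is a point or a Jordan curve enclosing $o$. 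Consequently $\distance|_B$ has a single local minimum, at $o$, and no interior local maximum (an interior local maximum would force a hole in some $B_\beta$). Since $\sigma\subseteq B$, the only ``critical'' behaviour of $\distance|_\sigma$ is the unique interior minimum $o$ (which lies on $\psi$) together with the local extrema of $\distance$ along $\partial\sigma$ (which are exactly the points used as isoline-cut sites).

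The main step is to prove that $\hat\sigma$ is a disk. Viewed inside $B\cong\R^2$, the cell $\sigma$ is a planar region with an outer boundary and finitely many inner boundary components (holes). Every hole is a Jordan curve, so $\distance$ restricted to it attains a local maximum; the isoline segment emanating from that point cannot return to the same hole (this would contradict maximality) and cannot close up into a loop (it meets $\psi$, which by unimodality crosses every isoline exactly twice), so it must reach another boundary component or $\psi$ --- effectively ``opening'' the hole. The arc $\psi$ in turn opens the annular neighbourhood of $o$. A finite bookkeeping argument on the resulting cut graph --- using that $\psi$ and the isoline segments meet $\partial\sigma$ and one another only finitely often --- then shows that each piece $\hat\sigma$ of the refinement has trivial first homology, i.e.\ is a topological disk. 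I expect this to be the hardest part: one must check that this specific family of cuts is \emph{exactly} enough to destroy every nontrivial loop of $\sigma$, so that no hidden annulus survives trapped between two isoline arcs.

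Next I would show that $\partial\hat\sigma$ splits into two arcs on each of which $\distance$ is weakly monotone, i.e.\ that $\distance|_{\partial\hat\sigma}$ has exactly one local minimum and one local maximum. The boundary is a cyclic concatenation of three kinds of arcs: sub-arcs of $\psi$, on which $\distance$ is strictly monotone ($\psi$ is unimodal and $o$ is a vertex); sub-arcs of isolines, on which $\distance$ is constant; and sub-arcs of the original $1$-cells of $\partial\sigma$, on which $\distance$ is strictly monotone, since every local extremum of $\distance|_{\partial\sigma}$ was used as a cut site and hence appears as a vertex, never in the interior of such an arc. Because an isoline cut is inserted precisely at, and only at, the values where a boundary level set would change topology, no cell can contain two boundary maxima separated by a boundary minimum --- that minimum would have been cut away --- and with the general-position hypotheses this forces the single-max, single-min picture. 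Combined with the fact that the interior of $\hat\sigma$ contains no critical point of $\distance|_\sigma$ (the only interior extremum $o$ lies on $\partial\hat\sigma$), this yields a ``doubly monotone disk''.

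Finally a sweep in $\alpha$ finishes the argument. Just above $\critical{\hat\sigma}$, the set $\hat\sigma_\alpha$ is a point or a small disk at the boundary vertex realizing the minimum. For intermediate $\alpha$, the level set $\{x\in\hat\sigma\mid\distance(x)=\alpha\}$ must be a single arc running between the two monotone sides of $\partial\hat\sigma$: it cannot be a loop, since that would enclose an interior local extremum, and it cannot break into several arcs, since that would create extra local extrema of $\distance$ along $\partial\hat\sigma$. Hence, as $\alpha$ crosses the $\distance$-value of a boundary vertex, $\hat\sigma_\alpha$ only grows by gluing on a collar along this arc and stays a disk; and $\hat\sigma$ itself is a disk by the step above. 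Therefore every nonempty sublevel set of $\hat\sigma$ is a disk, in particular contractible. Once the doubly-monotone-disk structure is established, this last sweep is routine; the real obstacle is the combinatorial analysis showing that the cut curve $\psi$ together with the isoline segments at all boundary critical points produces only simply connected, doubly monotone pieces.
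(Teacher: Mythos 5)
Your proof follows essentially the same route as the paper's: the paper likewise notes that the only interior minimum $o$ is removed by the cut along $\psi$, that $\partial\hat{\sigma}$ decomposes into $\distance$-monotone arcs and constant isoline segments, and then sweeps an isoline segment anchored on the two monotone neighbours of the minimal isoline segment, concluding that the sweep exhausts $\hat{\sigma}$ and that every sublevel set is contractible. Your version merely front-loads the simple-connectivity of $\hat{\sigma}$ and the single-minimum/single-maximum boundary structure as explicit preliminary steps, which the paper leaves implicit in the claim that the sweep traces out the entire cell; the substance and level of detail are the same.
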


\begin{proof}
Fix a cell $\hat{\sigma}$ and consider $\distance$
restricted to $\hat{\sigma}$. $\hat{\sigma}$ cannot have any (local) minimum
in its interior, since such a local minimum would be a local minimum
of the bisector and thus must be equal to $o$. 
However, $\psi$ cuts through $o$, so $o$ is not in the interior of any cell.
Let $p\in\hat{\sigma}$ be a point with minimal $\distance$-value in $\hat{\sigma}$.
$p$ must lie on some (possibly degenerate) isoline segment $\gamma$
because otherwise, $p$ would lie either on a segment of $\psi$ 
(which is $\distance$-monotone)
or on the boundary of $\sigma$ 
(which we have cut into $\distance$-monotone pieces).
Following the boundary of $\hat{\sigma}$, the neighboring segments of $\gamma$
are two monotone curves $\xi_1$, $\xi_2$ (in fact, the neighbor
of $\gamma$ might also be the second isoline segment 
emanating from the same critical
in which case we consider $\gamma$ to be their union).

We sweep an isoline segment in a $\distance$-increasing direction which is anchored at points on $\xi_1$ and $\xi_2$,
starting with $\gamma$.
This traces out a subset of $\hat{\sigma}$ 
with the property that every sublevel set is contractible.
There are three cases possible: 
\begin{inparaenum}
\item $\xi_1$ and $\xi_2$ intersect
in which case $\hat{\sigma}$ is bounded by $\gamma$, $\xi_1$ and $\xi_2$.
\item the sweep isoline segment hits a point that is 
not in the interior of $\sigma$. Then, it follows that this intersection
point must lie on an isoline segment $\gamma'$ as well 
(it cannot lie on a monotone curve by minimality of the isovalue). 
By construction, $\gamma'$ connects $\xi_1$ and $\xi_2$
\item if none of the first two cases arise, $\hat{\sigma}$ is unbounded.
\end{inparaenum}
In all cases, we observe that the sweeping process traces out the entire
set $\hat{\sigma}$, which implies that every sublevel set is contractible.
\end{proof}

The cuts induce a subdivision of the $1$-cells as well. It is easy to see that no refined $1$-cell can be a cycle --
any closed curve is cut at its maxima and minima, and newly introduced isoline segments cannot not form a cycle.
Therefore, all $1$-cells in $\sigma$ are simply-connected as well. This implies that the construction yields a 
stratified refinement of the Voronoi diagram.

\paragraph{Size of the stratified refinements}

The described refinement only increases the complexity of the arrangement by a constant factor, 
as we shall now show.
Recall that a trisector is the curve of points
that have the same distance to three sites. The distance function restricted to a trisector
defines a function on this curve. In the refinement procedure, we consider the local extrema
of this curve. We first show the following statement.

\begin{lemma}
\label{lem:local_extrema}
The number of local extrema of a trisector is in $O(1)$.
\end{lemma}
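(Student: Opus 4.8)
The plan is to set up coordinates so that the trisector of three sites $P^i,P^j,P^k$ becomes the solution set of an algebraic system, bound the degree of that system, and then bound the number of local extrema of the distance function restricted to the trisector by a resultant / Bézout-type argument. First I would recall that a point $x$ lies on the trisector iff $\distance(x,P^i)=\distance(x,P^j)=\distance(x,P^k)$. Because the sites are convex polytopes with a constant number of faces, the distance function $\distance(\cdot,P^i)$ is piecewise of a very simple form: within each Voronoi-type subregion of $\R^3$ (according to which face/edge/vertex of $P^i$ is closest), $\distance(\cdot,P^i)$ is either an affine function (plane is closest), the distance to a line (edge is closest), or the distance to a point (vertex is closest). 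Squaring to clear the radicals, in each such cell the condition $\distance(x,P^i)^2=\distance(x,P^j)^2$ is a polynomial of degree at most $2$ in $x$. Hence the trisector, restricted to any one of the $O(1)$ combinatorial cells of the overlay of the three ``feature diagrams'', is a piece of the intersection of two quadrics, i.e.\ an algebraic curve of degree $O(1)$. Since there are only $O(1)$ such cells (each site has constantly many features), the whole trisector is an algebraic curve of total degree $O(1)$, with $O(1)$ breakpoints where it passes from one cell to another.

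Next I would analyze the distance function $f(x):=\distance(x)$ along the trisector. On the interior of each combinatorial cell, $f$ agrees with $\distance(\cdot,P^i)$, which after squaring is a polynomial of degree $2$. A local extremum of $f$ along the curve, in the interior of a cell, is a point where the gradient of $f^2$ is orthogonal to the tangent of the curve; writing the curve locally as the common zero set of two polynomials $g_1,g_2$ of degree $O(1)$, this is the vanishing of a $3\times 3$ determinant $\det(\nabla f^2, \nabla g_1, \nabla g_2)=0$, again a polynomial of degree $O(1)$. So the extrema interior to a cell are the common zeros of three polynomials of bounded degree in $\R^3$; by Bézout's theorem (or by the fact that the curve has bounded degree and we are intersecting it with a bounded-degree surface) there are $O(1)$ of them per cell, hence $O(1)$ over the whole curve. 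Adding the $O(1)$ breakpoints between cells — each of which can contribute at most one additional extremum as the analytic formula for $f$ switches — gives a total of $O(1)$ local extrema.

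The main obstacle I expect is being careful about two points. First, one must make sure the ``pieces'' are genuinely of bounded degree and bounded number: this rests exactly on the standing assumption that each site has a constant number of vertices, edges, and faces, so that the decomposition of $\R^3$ according to the nearest feature of $P^i$ has $O(1)$ cells, each bounded by $O(1)$ algebraic surfaces of degree $\le 2$; the overlay of three such decompositions still has $O(1)$ cells. Second, one has to handle the transition between cells correctly — $f$ restricted to the trisector is only piecewise smooth, and a local extremum could occur at a breakpoint even though neither one-sided analytic continuation has a critical point there. Since there are only constantly many breakpoints, each such event is charged $O(1)$, so it does not affect the asymptotic bound; but it is the place where one must argue rather than compute. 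Everything else is a routine algebraic-geometry degree count, and I would not grind through the explicit polynomials.
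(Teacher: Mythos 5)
Your proposal is correct and follows essentially the same route as the paper: decompose the trisector into pieces according to which feature (face/edge/vertex) of each polyhedron is nearest, observe each piece is defined by bounded-degree algebraic equations after squaring, and phrase the local-extremum condition along each piece algebraically (your $3\times 3$ determinant is exactly the Lagrange condition the paper invokes), giving $O(1)$ extrema per piece and $O(1)$ pieces overall. The paper's writeup omits the discussion of breakpoints where the analytic formula switches; your explicit attention to that case is a minor but welcome tightening of the same argument rather than a different approach.
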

\begin{proof}
A trisector is defined as the set of points of equal distance to three polyhedra. 
Each such point is of equal distance to a face, an edge or a vertex of each of the polyhedra. 
This implies that a trisector of polyhedra is composed of pieces of trisectors of points, lines, and planes that 
support the appropriate features of the polyhedra. 
Each such trisector can be defined by a set of algebraic equations of constant degree, using the squared distances 
from a point to a point, a line, or a plane in $\R^3$. 
Maximizing or minimizing the (squared) distance function on such a trisector can be phrased
as a Lagrange optimization problem with objective function and constraints being algebraic curves of constant
degree. In this case, the condition of being a maximum can again be expressed
as an algebraic equation of constant degree, defining a surface. Cutting this surface with the trisector
yields a constant number of points, assuming that the sites are in generic position.
\end{proof}

With that lemma, it is straight-forward to prove the size bound

\begin{theorem}
For a set $\setP$ of convex, disjoint polyhedra in generic position, 
let $\pi(n)$ be the complexity of $\voronoi(\setP)$.
Then, there exists a stratified refinement of $\voronoi(\setP)$ with $O(\pi(n))$ cells.
\end{theorem}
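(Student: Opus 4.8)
The plan is to bound the number of cuts introduced by the stratification procedure of the previous paragraph, and to argue that each cut only contributes a constant-factor blowup to the arrangement complexity. Recall that for each $2$-cell $\sigma$ of $\voronoi(\setP)$ we introduce two kinds of cuts: the single unimodal curve $\psi$ through the minimum $o$ of $\sigma$, and the isoline segments emanating from critical points (local extrema of $\distance$ restricted to $\partial\sigma$). The first kind contributes exactly one curve per $2$-cell, hence $O(\pi(n))$ curves in total; since $\psi$ is required to cross every $1$-cell on $\partial\sigma$ only a constant number of times, each such curve splits a constant number of faces and edges and adds only a constant number of new vertices, edges, and faces per incident cell. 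So the $\psi$-cuts alone increase the complexity by $O(\pi(n))$.

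The main work is bounding the isoline-segment cuts. An isoline segment is anchored at a critical point $p$ of $\distance|_{\partial\sigma}$, and the critical points of $\partial\sigma$ lie on the $1$-cells bounding $\sigma$, i.e.\ on trisectors. By Lemma~\ref{lem:local_extrema}, each trisector carries only $O(1)$ local extrema of its distance function; since each $1$-cell of $\voronoi(\setP)$ is a piece of a trisector and there are $O(\pi(n))$ such $1$-cells, the total number of critical points, and hence the total number of isoline segments introduced over the whole Voronoi diagram, is $O(\pi(n))$. It remains to check that each isoline segment, when added as a cut, increases the arrangement complexity only by a constant. For this we must bound how many times a single isoline segment crosses the existing $1$-cells of $\sigma$ (and the $\psi$-cut). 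An isoline segment lies on a $\beta$-isoline of the bisector $B$; intersecting it with a $1$-cell (a trisector piece) amounts to solving a system of algebraic equations of constant degree — $\distance(\cdot,P^i)=\distance(\cdot,P^j)=\distance(\cdot,P^k)=\beta$ with the face/edge/vertex feature assignment fixed — which by general position has $O(1)$ solutions. Likewise, an isoline crosses $\psi$ at most twice (as already noted, since $\psi$ is unimodal through $o$), and two isoline segments coming from distinct critical points cross $O(1)$ times. Hence adding all $O(\pi(n))$ isoline segments, each splitting $O(1)$ cells and creating $O(1)$ new cells, increases the total complexity by $O(\pi(n))$.

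Putting the two bounds together: starting from $\voronoi(\setP)$ of complexity $\pi(n)$, the $\psi$-cuts and the isoline-segment cuts together add $O(\pi(n))$ new cells, so the refined arrangement $\mathcal{A}$ has $O(\pi(n))$ cells. By the lemmas of the previous paragraph, every $2$-cell and every $1$-cell of $\mathcal{A}$ is simply connected and has contractible sublevel sets, and the $3$-cells inherit contractibility of sublevel sets from the corresponding property of bisector sublevel sets (Theorem~\ref{thm:pseudodisk_maintext}); likewise the $0$-cells are trivially stratified. Hence $\mathcal{A}$ is a stratified refinement of $\voronoi(\setP)$ with $O(\pi(n))$ cells, which is the claim.

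The step I expect to be the main obstacle is the crossing bound for isoline segments against the existing $1$-cells: one must make sure that an isoline segment does not wind around and meet the same trisector piece more than a constant number of times, and that the feature-assignment bookkeeping (which face/edge/vertex of which polyhedron realizes each distance) does not blow up the algebraic degree. This is where the constant bound on the combinatorial complexity of each site, together with general position, is essential, and it is the place where the argument could become delicate if one wants an explicit constant rather than an $O(1)$ bound.
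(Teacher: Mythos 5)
Your proof follows essentially the same charging argument as the paper: one $\psi$-curve per $2$-cell contributing $O(1)$ new cells each, and isoline cuts charged to local extrema of trisectors via Lemma~\ref{lem:local_extrema}, with each cut adding $O(1)$ cells. One small hole in your accounting: you count the critical points of $\distance|_{\partial\sigma}$ only as local extrema of trisectors, but $\distance$ restricted to a boundary component of $\sigma$ can also attain a local extremum at a $0$-cell of the Voronoi diagram where two $1$-cells meet, without that point being a critical point of either trisector. These must be charged separately; the paper does so by noting that each vertex lies on at most $\binom{4}{2}=6$ bisectors and hence triggers $O(1)$ isoline cuts, and since there are $O(\pi(n))$ vertices the bound survives. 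Also, the crossing bound you single out as the main obstacle is largely moot: an isoline segment is by definition a \emph{maximal} connected piece of the $\beta$-isoline inside $\sigma$, so it terminates on $\partial\sigma$ (or on a previous cut) rather than crossing $1$-cells transversally, and it meets $\psi$ at most twice by unimodality; each such segment therefore splits a constant number of cells, which is all that is needed.
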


\begin{proof}
We charge the cuts of a cell $\sigma$ to $0$- and $1$-cells on its boundary.
We start with the cuts introduced by $\psi$: every such cut is between two $1$-cells, and we can
charge the cut to either of them (if $\psi$ cuts through a $0$-cell, we can charge the cut
to one of the incident $1$-cells on the boundary).
Since a $1$-cell can only participate in the boundary of three $2$-cells, and we assume
that $\psi$ hits any $1$-cell only a constant number of times, every $1$-cell
is charged a constant number of times.
Note that a cut only introduces a constant number of new cells in the arrangement, so the complexity
stays the same after all cuts induced by the $\psi$-curves.

Any isoline cut on the boundary of $\sigma$ is triggered by a local extremum $p$ at some boundary component.
We charge the cut to the $0$- or $1$-cell that $p$ belongs to.
Note that every vertex is charged for at most $6$ isoline cuts,
because the vertex can only participate in $\binom{4}{2}=6$ bisectors, and thus, $6$ $2$-cells.
Each $1$-cell is only charged a constant number of times, according to Lemma~\ref{lem:local_extrema},
because we cut at most $3$ times for every local extremum of the trisector that contains the $1$-cell.
Since each isoline segment induces only a constant number of additional cells, every cut can be charged
to a Voronoi cell and every cell is charged only a constant number of times, we obtain the result.
\end{proof}

Together with Theorem~\ref{thm:coarrangement}, it follows that we can find a cellular filtration
of size $O(\pi(n))$ whose barcode equals the barcode of the original offset filtration.
The exact value of $\pi(n)$ is far from being settled but we know that it is bounded from below by 
$\Omega(n^2)$ (by a straightforward construction even for n points) and from above by 
$O(n^{3+\eps})$~\cite{Agarwal97computingenvelopes,upper-bound-lower-envelope}, 
so that the obtained filtration is significantly smaller than the unrestricted nerve filtration 
of size $O(n^4)$.

\paragraph{Efficient computation of the stratified refinement}

We describe the procedure for constructing a refined stratification.
We assume that the following primitives are readily available: 
\begin{inparaenum}
\item obtaining all critical values on a trisector;
\item finding intersection points of an isoline and a boundary curve; and
\item a boolean primitive defining a consistent order of any two points on a given isoline.
\end{inparaenum}
As a preprocessing step we cut all boundary curves so that they are $\distance$-monotone.
Since the number of critical values on a trisector is constant, and using the same arguments as 
before, we get that the first primitive can be performed in constant time. The same holds for the 
second primitive, as we make sure that boundary curves are $\distance$-monotone.
Since an isoline is a closed curve, the third primitive can be performed by fixing a reference 
point on the isoline, and then, given 
any two points, decide according to their clockwise order from the reference point. As each isoline is 
composed of a constant number of curves (by our assumption that each polyhedra is of constant size), 
the operation can be performed in constant time.

Fixing a cell $\sigma$, the algorithm uses a sweep-line approach by sweeping an isoline through $\sigma$.
For each critical point $p$, let its \emph{critical value} be $\distance(p)$.
We sort all critical points according to their critical values, and go over the critical points in an 
increasing order. As the sweep progresses, we maintain a list of boundary curves that the 
sweep isoline is currently intersecting. 
The key observation is that when the sweep progresses between two isolines, and since the boundary 
curves do not intersect in their interior, the order of intersection points of the boundary curves with 
the isolines remains the same, with the exception that it might get shifted. However, we can fix the 
reference point in the third primitive so that it remains consistent with respect to the order on the 
previous isolines.
This allows for efficient lookup, insertion, and deletion of curves by maintaining a balanced search 
tree of the curves, sorted by the third primitive.

The algorithm runs in two stages. The first stage creates the unbounded curve $\psi$ by simulating 
combinatorially a curve that has the desired properties, namely, that it is an unbounded curve that 
attains a minimum in $o$ and that is monotone otherwise.
Note that $\psi$ is composed of two $\distance$-monotone unbounded segments. 
We describe how to simulate a single segment using a sweep, and will repeat the procedure for the 
creation of the two segments. 
We start by picking a random intersection point for $\psi$ on the first isoline that we encounter in the 
sweep. Then, at the next isoline, we assume that $\psi$ acts monotonously and pick a random intersection 
point such that $\psi$ does not intersect any curve in the tree. For example, if the current intersection 
point is located between curves $c_1$ and $c_2$ at the current isoline, then we pick the next intersection 
point so that it is still located between $c_1$ and $c_2$ on the next isoline. We update the arrangement 
according to the choice of points by splitting the face in which the curve passes.
A problem may occur when $c_1$ and $c_2$ merge, but as this must happen at a critical point, 
we can simply assume that $\psi$ intersects both $c_1$ and $c_2$ at that critical point, 
and update the arrangement accordingly. At the end of the process we obtain an arrangement that contains 
all the combinatorial information for the cuts induced by $\psi$. Note that as each new edge induced by $\psi$ 
attains a minimum over $\distance$ in its lower endpoint, we also have the critical value of the edge.

The second stage of the algorithm introduces the cuts that are induced by isoline segments.
We again perform a sweep over $\sigma$. 
When the sweep isoline hits a critical point, we look for the two curves 
to the sides of the intersection point in the tree, add the isoline segment bounded by those two 
curves to the arrangement, and split the faces accordingly. We note that it might be that the isoline 
segment goes in the interior of a hole. In order to handle such cases, we keep track for any two 
consecutive curves in the tree whether the area between them belongs to the interior of $\sigma$ or 
not.

Notice that each sweep goes over $O(\pi(n))$ critical values. At each critical value, a constant number 
of lookups, insertions and deletions is performed on the balanced search tree of curves. Also, the size of 
the tree is bounded by the size of the Voronoi diagram. Therefore, for each critical value, we perform 
$O(\log{\pi(n)})$ operations, obtaining the following result.

\begin{theorem}
The offset filtration of convex, disjoint polyhedra in generic position can be computed in 
time $O(\pi(n)\log(\pi(n)))$, excluding the computation time of the Voronoi diagram.
\end{theorem}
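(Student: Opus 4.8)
The plan is to charge the cost of each phase of the construction described above to the complexity $\pi(n)$ of $\voronoi(\setP)$. Since the Voronoi diagram itself is given, the first step to account for is the preprocessing that cuts every boundary curve of every $2$-cell into $\distance$-monotone pieces. By the generic-position assumption and the constant bound on the complexity of each site, each trisector has only $O(1)$ local extrema (Lemma~\ref{lem:local_extrema}), so this step introduces only $O(1)$ new vertices per $1$-cell, runs in total time $O(\pi(n))$, and keeps the arrangement of complexity $O(\pi(n))$.

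Next I would bound the work per $2$-cell $\sigma$. Each of the two sweep stages over $\sigma$ processes a set of event points: the reference points chosen for $\psi$ and the intersections of isolines through critical points with the boundary curves of $\sigma$. Summed over all $2$-cells, the number of such events is proportional to the number of critical points over all trisectors appearing on cell boundaries, which is $O(\pi(n))$ by the same charging argument that bounds the size of the stratified refinement. At each event the algorithm performs a constant number of locate, insert, and delete operations on the balanced search tree storing the boundary curves currently crossed by the sweep isoline, ordered by the third primitive; each operation costs $O(\log \pi(n))$ since the tree never holds more than $O(\pi(n))$ curves, and the three primitives themselves run in $O(1)$ time by the constant-complexity assumption. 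Hence both stages together cost $O(\pi(n)\log\pi(n))$, and since each $\psi$-cut and each isoline-segment cut enlarges the arrangement by only a constant number of cells, the resulting stratified refinement $\mathcal{A}$ has $O(\pi(n))$ cells.

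Finally, from $\mathcal{A}$ I would build the co-arrangement $\mathcal{A}^\ast$ and label each co-cell with the critical value of its primal cell. Because $\critical{\sigma^\ast}=\critical{\sigma}$, and because the $\psi$-cuts already expose, as a by-product, the minimum of $\distance$ on each newly created edge, the critical values are available during the sweep; assembling $\mathcal{A}^\ast$ with its filtration values is linear in the size of $\mathcal{A}$, i.e.\ $O(\pi(n))$. By Theorem~\ref{thm:coarrangement}, the filtered cell complex $\mathcal{A}^\ast$ so obtained has the same barcode as the offset filtration, and the total running time is $O(\pi(n)\log\pi(n))$, excluding the Voronoi computation, as claimed.

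I expect the main obstacle to be the bookkeeping that keeps the sweep correct rather than the arithmetic of the time bound: one must argue that the reference point used by the cyclic-order primitive can be fixed once and then shifted consistently as the sweep passes from one isoline to the next, so that the balanced search tree remains a valid search structure throughout; and one must handle the degenerate events where two adjacent boundary curves merge at a critical point, or where an isoline segment enters the interior of an inner boundary ("hole") of $\sigma$. These are precisely the situations the algorithm description above treats, and checking that each is resolved by $O(1)$ tree updates is what pins down the per-event cost; once that is in place, the remainder of the proof is a straightforward summation over the $O(\pi(n))$ events.
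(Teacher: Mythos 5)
Your proposal follows the paper's own argument essentially verbatim: a per-cell sweep over the $O(\pi(n))$ critical points with a balanced search tree of boundary curves ordered by the cyclic primitive, giving $O(\log\pi(n))$ per event, plus the observations that each primitive is $O(1)$, that each cut adds only $O(1)$ cells, and that the critical values of the new edges fall out of the sweep. The bookkeeping issues you flag (consistent shifting of the reference point between isolines, merging curves at critical points, segments entering holes) are exactly the ones the paper addresses, so the proposal is correct and matches the intended proof.
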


Together with the result of~\cite{Agarwal97computingenvelopes}, we obtain an upper bound of 
$O(n^{3+\eps})$ on the expected time.

\section{Convex shapes vs.\ point samples: Experimental comparison}
\label{sec:experiments}

Computing Voronoi diagrams of polyhedra in space is a difficult problem.
While efforts to compute it are underway, so far only restricted cases have been completed; see e.g., \cite{hsh-constructing}.
We therefore restrict our experiments to the planar case.
Still, we show that already in the plane approximating the shapes by point samples introduces 
noise that is hard to distinguish from real small features.

\paragraph{Implementation details} We implemented the restricted nerve algorithm for the two-dimensional case 
as described in Section~\ref{sec:restricted_barcodes_2d}. 
The algorithm requires computing the combinatorial structure of the Voronoi diagram, 
and the critical value of each feature of the diagram, namely what is the point on the feature that attains 
the minimal distance to the sites defining the feature.
We solve the first task by computing the Delaunay graph for all line segments belonging to the input polygons,
using \textsc{Cgal}'s 2D Segment Delaunay Graphs package~\cite{karavelas-cgal}. We restrict our attention
to features of the graph for which no two defining segments belong to 
the same polygon and remove duplicates.
For the critical values, we explicitly compute the actual curves and vertices of the Voronoi diagram
and compute the minimal distance to their nearest polygonal sites. 

For comparison purposes, we also implemented the unrestricted nerve filtration described in 
Section~\ref{sec:barcodes_of_shapes}.
For the computation of critical values, we proceed in a brute-force manner, that is, we
take the minimum distance over all pairs or triplets of line segments of the input polygons.

We also implemented an approximation of the barcode through point samples:
Fixing some $\eps>0$, we calculate a finite point set whose Hausdorff-distance
to the input polygonal shapes is at most $\eps$. 
We achieve that by placing a grid of side length $\sqrt{2}\eps$ in the plane
and taking as our sample the centers of all grid cells which are intersected by some input polygon.
We can compute this sample efficiently by taking the Minkowski sum of each polygon and a square with side length $\sqrt{2} \eps$, 
and performing batch point location queries for a grid of points in the bounding rectangle of that polygon.
We compute the alpha-filtration on the point samples, using \textsc{Cgal}'s 2D Triangulation package \cite{yvinec-2d}.
The size of the resulting filtration is linear in the number of sampled points.

In all variants, after computing the filtration, we obtain the barcode using the PHAT 
library~\cite{bkrw-phat}.

\paragraph{Time analysis}
\begin{table*}[t,b]
\footnotesize
\begin{center}
	\begin{tabular}{|c c||c c c c c||}
		\hline
		& &\multicolumn{5}{c||}{Number of vertices}\\
		Approach 			
		&	 &	$42$ 	&	$213$	& $1060$	& $2104$ & $4217$\\
		\hline
		{\multirow{4}{*}{Restricted nerve}}
		&	Filtration time		&	0.042 	&	0.237	&	1.285		&	2.716	&	5.693	\\
		&	Persistence time	&	0	 	&	0		&	0	 	&	0	&	0 	\\
		&	Total time			&	\textbf{0.042}	&	\textbf{0.237}	&	\textbf{1.285}	&	\textbf{2.716}	&	\textbf{5.693} 	\\
		&	Filtration size		&	45		&	267	&	1473		&	2963	&	5949	\\
		\hline
		{\multirow{4}{*}{Unrestricted nerve}}
		&	Filtration time		&	0.602 	&	69.13	&	9920		&	-		&	-		\\
		&	Persistence time	&	0	 	&	0.033	&	23.11	 	&	-		&	-	 	\\
		&	Total time			&	\textbf{0.602}	&	\textbf{69.16}	&	\textbf{9943}	&	\textbf{-}		&	\textbf{-} 		\\
		&	Filtration size		&	175		&	20875	&	2604375		&	-		&	-		\\
		\hline
		{\multirow{4}{*}{Point sample ($\eps = 1$)}}
		&	Filtration time		&	0.013 	&	0.047	&	0.217		&	0.435	&	0.805	\\
		&	Persistence time	&	0	 	&	0		&	0.001	 	&	0.001	&	0.003	\\
		&	Total time			&	\textbf{0.013}	&	\textbf{0.047}	&	\textbf{0.217}	&	\textbf{0.437}	&	\textbf{0.808} 	\\
		&	Filtration size		&	803		&	2101	&	9021		&	14235	&	22393	\\
		\hline
		{\multirow{4}{*}{Point sample ($\eps = 0.5$)}}
		&	Filtration time		&	0.023 	&	0.07	&	0.312		&	0.594	&	1.045	\\
		&	Persistence time	&	0 		&	0		&	0.003		&	0.006	&	0.01 	\\
		&	Total time			&	\textbf{0.024}	&	\textbf{0.07}	&	\textbf{0.315}	&	\textbf{0.6}	&	\textbf{1.055} 	\\
		&	Filtration size		&	2577	&	6707	&	28273		&	45635	&	72275	\\
		\hline
		{\multirow{4}{*}{Point sample ($\eps = 0.1$)}}
		&	Filtration time		&	0.304 	&	0.765	&	3.22		&	5.106	&	7.975	\\
		&	Persistence time	&	0.007 	&	0.016	&	0.081	 	&	0.132	&	0.213 	\\
		&	Total time			&	\textbf{0.310}	&	\textbf{0.781}	&	\textbf{3.3}	&	\textbf{5.238}	&	\textbf{8.188} 	\\
		&	Filtration size		&	48741	&	120359	&	505833		&	792395	&	1226827	\\
		\hline
	\end{tabular}		
\end{center}
\caption{\sf Running time and filtration size with respect to input size. The results are averaged over 5 runs.
Some results for the unrestricted nerve approach are omitted as execution did not finish within a reasonable time. 
Times are measured in seconds, and are highlighted in bold font.}
\label{tbl:time_comprisons}
\end{table*}

We compare the three approaches for several inputs of increasing sizes, and report on their running times
and the size of the filtration. 
For generating the input, we considered a square of side length~100. We added random polygons inside the square using 
the following repetitive process. We randomly pick a point inside the square, and create a rectangle centered at that 
point with a random width and height. We then randomly select 5 points inside that rectangle and take the polygon to be 
their convex hull. If the polygon does not intersect any of the polygons that were previously added, we add it to our input.
An illustration of an input is shown in Figure \ref{fig:quality_of_barcode}.
All experiments were run on a 3.4GHz Intel Core i5 processor with 8GB of memory. 

The experimental results are shown in Table~\ref{tbl:time_comprisons}.
We observe that computing the filtration takes much more time than converting
the filtration into a barcode. This is in sync with the common observation that despite
the worst-case cubical complexity, the barcode computation scales well in practice.
Moreover, we get the expected result that the unrestricted nerve yields large filtrations
and high running times compared to the restricted nerve approach. In particular, 
for sufficiently large inputs, the time to compute the barcode
from the unrestricted filtration is an order of magnitude higher than the total time in the restricted case,
so even a smarter way for computing critical values will not be helpful.

Comparing the running time of the restricted nerve approach with that of the point sampling approach is difficult as
it depends on the choice of~$\eps$. What approximation quality is reasonable generally depends on the application and 
the input at hand. However, we can observe that the overhead of computing the Voronoi diagrams of convex polygons
instead of points is rather small, and we get the exact barcode in the same time as it would take to compute 
a ``reasonable'' approximation of the input. 
Of course, it should be admitted that our simple point-sampling approach could be significantly improved
by sampling only the boundary of polygons instead, and even more by making the sample density adaptive to the 
\emph{local feature size}~\cite{ruppert-delaunay} of the polygons. This however would require post-processing of the barcode to filter out topological features in the interior of polygons.

\paragraph{Quality of the barcodes}

\begin{figure}[t,b]
\centering
	\subfloat[][Polygons used as input]{
		\includegraphics[width=0.25\textwidth] {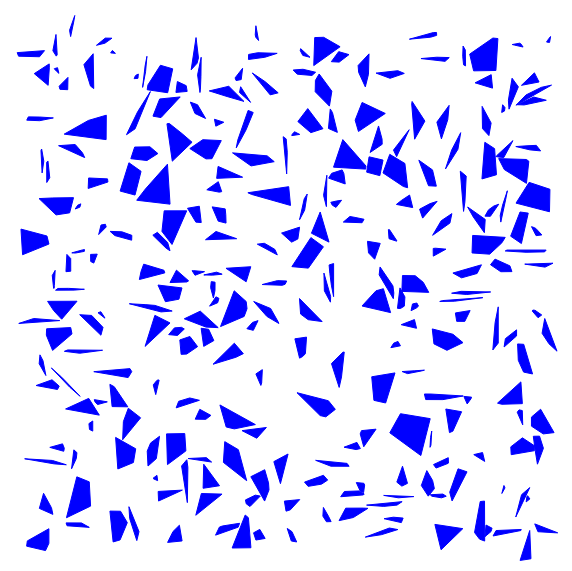}
		\label{subfig:exact_input}
	}
	\hspace{1cm}
	\subfloat[][Exact barcode of the polygons]{
		\includegraphics[width=0.25\textwidth] {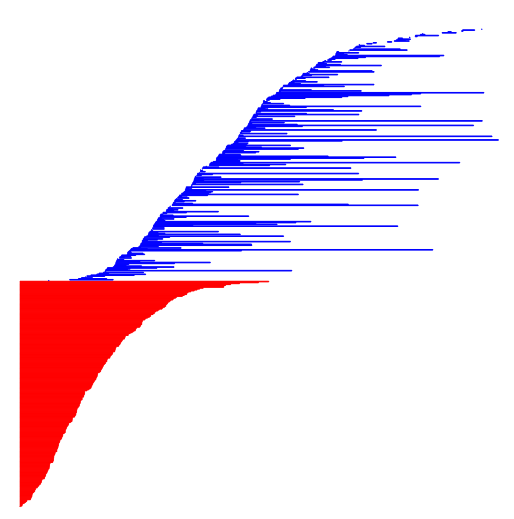}
		\label{subfig:exact_barcode}
	}
	
	\subfloat[][A point set approximation of the input polygons]{
		\includegraphics[width=0.25\textwidth]{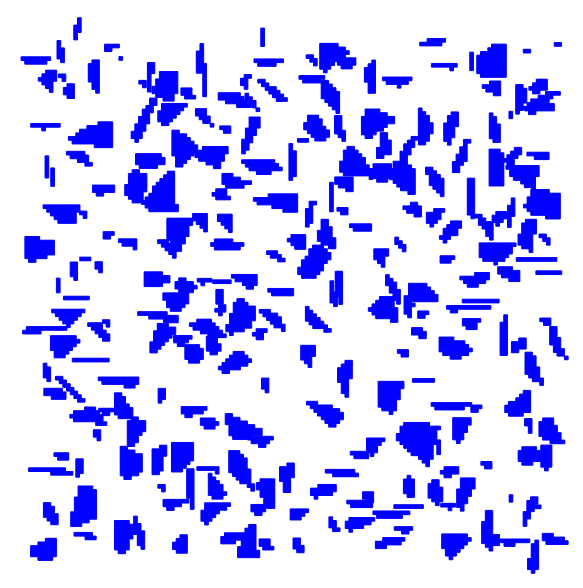}
		\label{subfig:approx_input}
	}
	\hspace{1cm}
	\subfloat[][Barcode of the point set]{
		\includegraphics[width=0.25\textwidth]{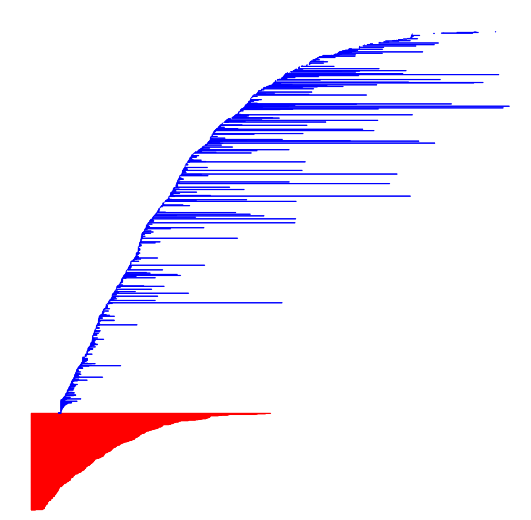}
		\label{subfig:approx_barcode}
	}
	\caption{
	A comparison between an exact barcode and an approximated barcode. 
	An illustration of polygons used as input for our experiments is shown in~\protect\subref{subfig:exact_input}. 
	There are~250 polygons with a total of~1060 vertices, inside a square of side length~100. The exact barcode of the
	polygons, computed using the restricted nerve method, is shown in~\protect\subref{subfig:exact_barcode}. Red bars represent 
	connected components and blue bars represent holes. It can be observed that all connected components are born
	at offset~$0$, since no connected components are created as the offset increases. When increasing the offset,
	connected components merge and therefore die, and holes are created and then die as they get filled.
	\protect\subref{subfig:approx_input} shows an illustration of a point set approximation of the input polygons, with
	$\eps = 0.5$. Each point is displayed as a pixel of size $\sqrt{2}\eps$. In~\protect\subref{subfig:approx_barcode}
	we see the barcode of the approximation point set, which is an approximation of the exact barcode of the polygons. The approximated barcode contains many more bars compared to the exact barcode. Noise, in the form of short bars, can be 
	observed throughout the entire barcode.
	}
	\label{fig:quality_of_barcode}
	\vspace{-3mm}
\end{figure}

The barcode obtained using the point sampling approach approximates the actual barcode of the input. 
The stability theorem~\cite{ceh-stability} ensures that the two barcodes have a distance of at most~$\eps$.
Still, the question arises how much noise is introduced by the approximation. 
We remind the reader about our motivation to produce barcodes of shape: We want to identify offset values
that minimize the number of short-lived homological features.
Figure~\ref{fig:quality_of_barcode} shows an exact barcode and an approximated barcode for the
same input. We see that the exact barcode contains some short bars throughout the entire barcode. The
approximated barcode contains many more such bars, some originating from real features in the input, and some  
are artifacts of the approximation. 
It seems (at best) very difficult to identify the real features from the approximate barcode,
which speaks in favor of using the exact approach that we have taken in this paper 
in such types of applications.

\section{Conclusion}
\label{sec:conclusion}

In this paper, we have defined cellular filtrations that yield
the same barcode as the offset filtrations of convex polygons
in $\R^2$, and convex polyhedra in $\R^3$. Our filtrations generalize
the alpha filtrations for point sets and have a size proportional
to the size of the Voronoi diagram of the involved sites.
While the approach in $\R^2$ simply consists of taking the nerve 
of the Voronoi regions, we had to cut lower-dimensional cells
in $\R^3$ to obtain such a filtration.

We restrict to the case of disjoint polyhedra for simplicity;
in the light of our intended application to 3D printing, an extension
to interior-disjoint polyhedra is desirable.
Assuming disjointness, however, is not a serious restriction,
as we can conceptually think of each polyhedron 
to be shrunk by an infinitesimal value, yielding a barcode that
is arbitrarily close to the exact one. The details of this approach
are left for an extended version of this paper.

We suspect that our approach
from Section~\ref{sec:higher_dim} generalizes to $\R^d$
for $d\geq 3$
and yields a filtration of same asymptotic complexity
as the Voronoi diagram (for a constant dimension $d$).
The proof of this statement, however, requires several non-trivial
extensions of our results, for instance, a generalization of pseudodisk property (Theorem~\ref{thm:pseudodisk_maintext})
in arbitrary dimensions.

We are in the process of implementing our refinement approach for the case of lines in $\R^3$,
using the implementation of~\cite{hsh-constructing}.
An interesting question in this context is whether for this special case, the Voronoi diagram
is already stratified. It is not even clear to us whether in this case,
an analogue of Theorem~\ref{thm:1-barcode_equivalence} holds, that is, whether the the nerve
of the Voronoi regions yields an equivalent filtration.

\paragraph{Acknowledgment} 
The authors thank Micha Sharir for suggesting the outline of a proof of the pseudo-sphere property in three-dimensional space.
The authors thank Michael Sagraloff for helpful discussions during this work.

\newpage

\begin{appendix}

\section{Proof of Theorem~\ref{thm:1-barcode_equivalence}: Tisk-theory}
\label{app:monster_proof}

The main goal of this section is to give a proof of Theorem~\ref{thm:1-barcode_equivalence}. 

For that, we first slightly abstract from offsets and Voronoi regions, and define a class of objects that includes all of them.

\begin{definition}
A \emph{piecewise-algebraic path} is a simple path in $\R^2$ which consists
of a finite number of arcs, where each arc is a semi-algebraic curve.
A \emph{piecewise-algebraic loop} is such a path that is homeomorphic to a
circle.
A set $D\subset\R^2$ is a \emph{bounded tisk} if $D$ is the closed bounded region 
induced by a piecewise-algebraic loop. $D$ is an \emph{unbounded tisk} if
it is the closed region induced by an unbounded piecewise-algebraic path.
Two tisks $A$, $B$ with boundary curves $a$, $b$, are called \emph{interior-disjoint} if $A\cap B=a\cap b$.
\end{definition}

We remark that every Voronoi region as well as any restricted offset is a tisk,
because their boundaries consist of line segments, circular arcs, and parabolic arcs.

We consider the intersection of two interior-disjoint tisks $A$ and $B$. Let $a$ and $b$ denote the
boundary curves of $A$ and of $B$ respectively. The curve $a$ is partitioned into interior-disjoint
segments, alternating between segments that belong to $a\cap b$ (such a segment may degenerate to a point), 
and segments that belong
to $a\setminus b$. The symmetric property holds for $b$. 
The partition points are denoted by $V_{ab}$.
Note that $V_{ab}$ is empty if and only if $A$ and $B$ are non-intersecting, or both are unbounded
and their union is the whole space. Note also that by choosing an arbitrary orientation on $a$
or on $b$, we can define a total order on $V_{ab}$. These definitions are illustrated
in Figure~\ref{fig:interior_disjoint_tisks}.

\begin{figure}[htb]
\centering
\includegraphics[height=4cm]{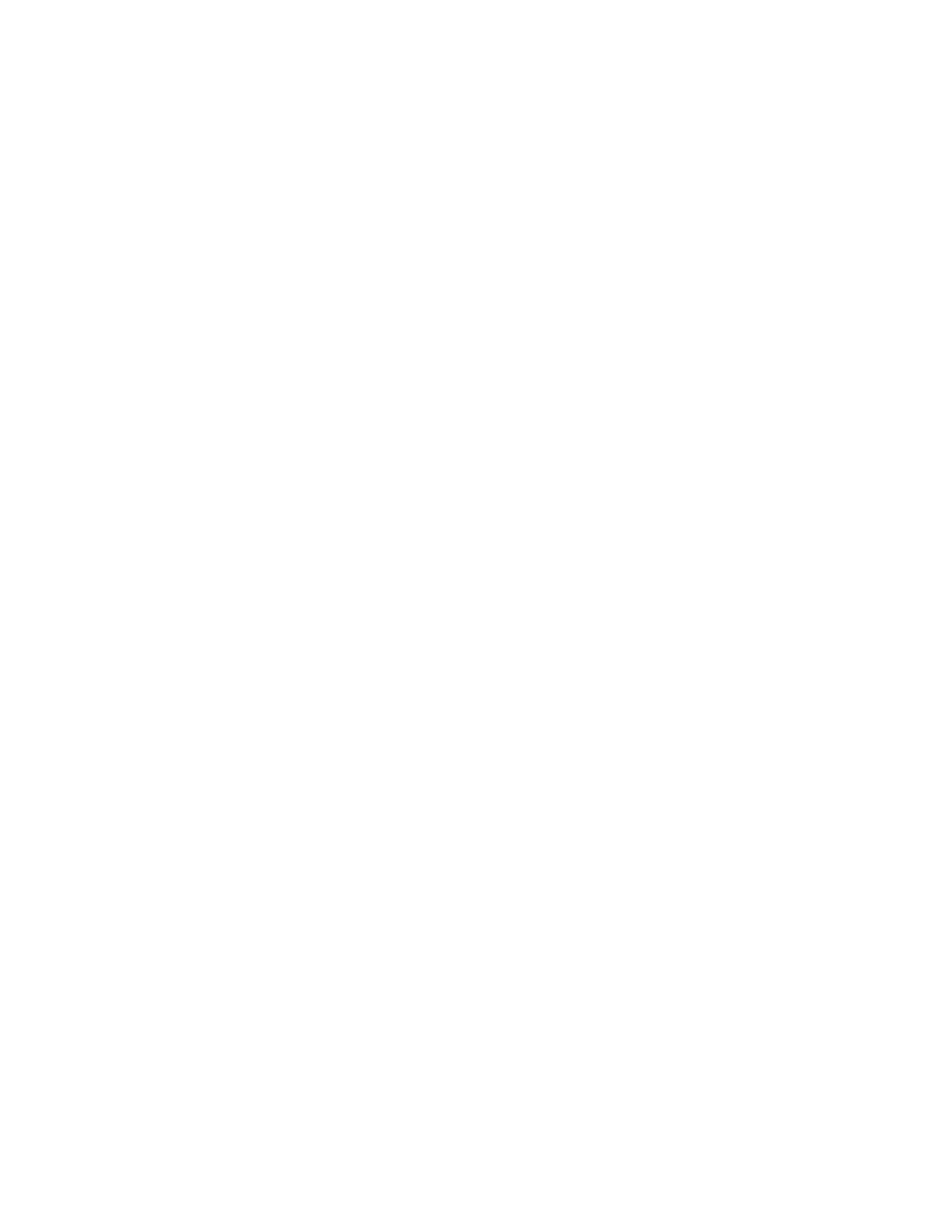}
\hspace{1cm}
\includegraphics[height=4cm]{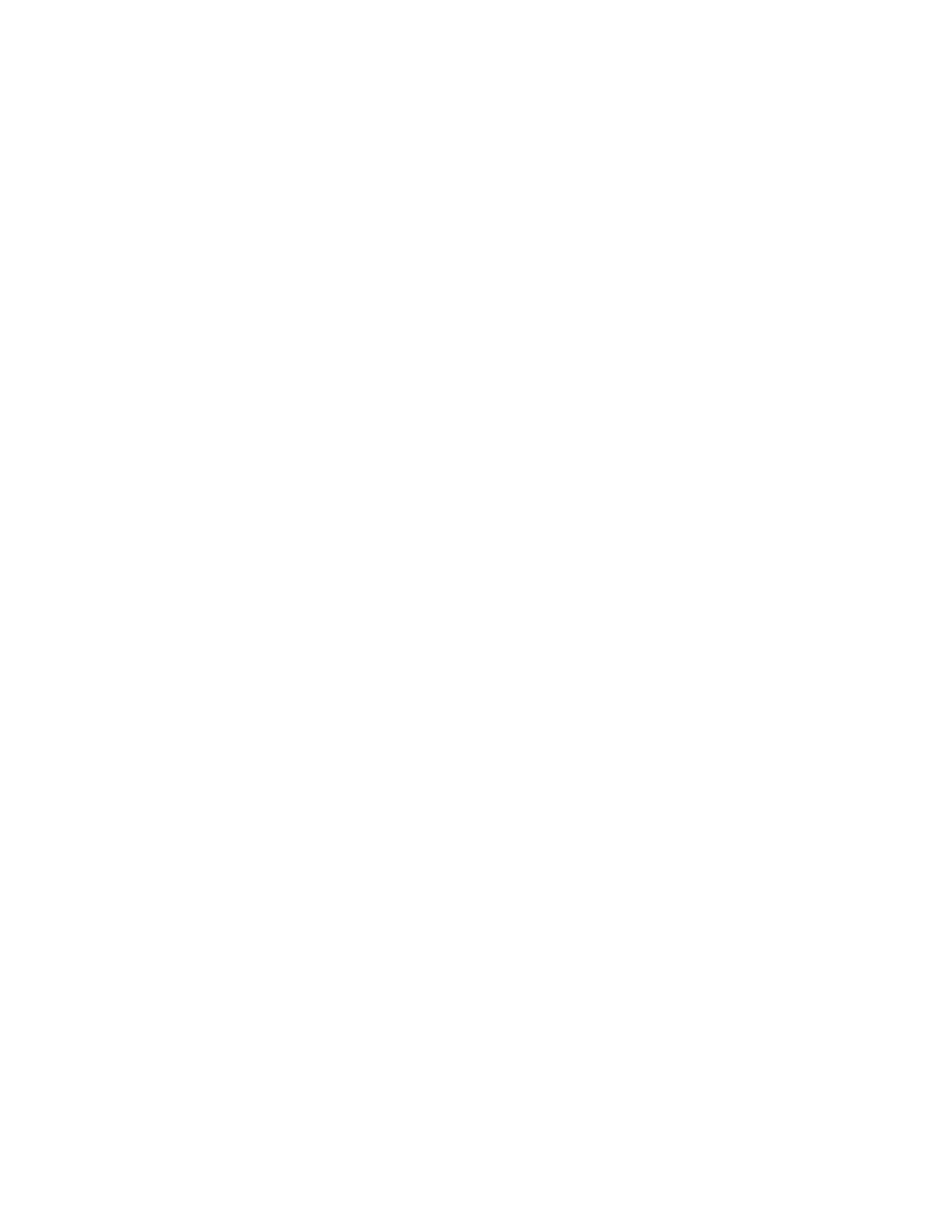}
\caption{Examples of interior-disjoint intersecting tisks. 
On the left, their complement is composed into three surrounded regions and one unbounded region.
On the right, the complement is composed into two surrounded regions and two unbounded regions.
The points on $V_{ab}$ are highlighted, and a possible ordering is given.}
\label{fig:interior_disjoint_tisks}
\end{figure}

\begin{definition}
The connected components of $\Clos(\R^2\setminus{A\cup B})$ are called \emph{complementary regions}
of $(A,B)$. A \emph{surrounded region} is a complementary region that is bounded.
\end{definition}

Note that this definition of surrounded regions agrees with the definition given in Section~\ref{sec:restricted_barcodes}
for the special case that tisks are Voronoi regions.
Every surrounded region is a tisk whose boundary loop involves two consecutive points of $V_{ab}$
connected by one segment in $a\setminus b$ and one segment in $b\setminus a$.
An unbounded complementary region might or might not be a tisk, depending on the whether 
$A$ and $B$ are bounded. For instance, if both $A$ and $B$ are bounded, 
there is only one unbounded complementary region, and it is the complement of a tisk.
We can easily see that every surrounded region of $(A,B)$ contains exactly two points of $V_{ab}$.
Every unbounded complementary region contains one or two points of $V_{ab}$; more precisely,
if there is only one unbounded region, it contains two points of $V_{ab}$, and if there
are two such regions, they contain one point of $V_{ab}$ each.
See again Figure~\ref{fig:interior_disjoint_tisks} for illustrations of these concepts.

\begin{lemma}
Assume that $a\cap b=A\cap B$ is non-empty.
Then, it is contractible if and only if
$(A,B)$ does not induce a surrounded region.
\end{lemma}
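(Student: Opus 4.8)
The plan is to analyze the structure of $A\cap B$ in terms of the points of $V_{ab}$ and the boundary arcs connecting them. Recall that $a\cap b$ consists of the segments in the common boundary, and these segments are glued together exactly at points of $V_{ab}$ — more precisely, two of these segments meet at a point $v\in V_{ab}$ if and only if, traversing $a$ around $v$, the curve switches from $a\cap b$ to $a\setminus b$ on one side and stays in $a\cap b$ on the other; equivalently, $v$ is shared by exactly one $a\cap b$-segment. So combinatorially $a\cap b$ is a one-dimensional complex whose pieces are the $a\cap b$-segments and whose ``branch behavior'' is governed by which points of $V_{ab}$ are endpoints of these segments. I would first establish that $A\cap B = a\cap b$ is a disjoint union of arcs and isolated points (no loops and no genuine branch points), so that its connected components are all contractible individually, and hence $A\cap B$ is contractible if and only if it has exactly one connected component.

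The heart of the argument is then a counting/parity statement: the number of connected components of $a\cap b$ is one more than the number of surrounded regions of $(A,B)$. This is exactly the observation already recorded informally in Section~\ref{sec:restricted_barcodes} (``the intersection of two restricted offsets consists of $k$ connected components if and only if the complement of their union induces $k-1$ bounded regions''), now to be proved at the level of tisks. First I would set up the cyclic (or linear, in the unbounded case) order on $V_{ab}$ induced by orienting $a$; the segments of $a\setminus b$ and of $a\cap b$ alternate along this order, and likewise the segments of $b\setminus a$ and $b\cap a$ alternate along $b$. Each complementary region's boundary loop uses one $a\setminus b$-segment, one $b\setminus a$-segment, and the two points of $V_{ab}$ joining them (for a surrounded region), and one sees that the complementary regions partition the ``gaps'' between consecutive points of $V_{ab}$. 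A connected component of $a\cap b$ is a maximal run of consecutive $V_{ab}$-points joined by $a\cap b$-segments; between two consecutive components there must lie a complementary region (the place where the curves separate), and going around once, the number of ``separating'' complementary regions one passes is exactly the number of components. Accounting for the unbounded region(s) — which contribute the ``extra'' one in the formula — gives that $A\cap B$ is connected precisely when there are no surrounded regions.

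Concretely the key steps, in order, are: (1) show $A\cap B = a\cap b$ is a finite disjoint union of arcs and points, in particular every component is contractible; (2) introduce the alternating structure of segments along $a$ and along $b$ and the induced order on $V_{ab}$; (3) show each complementary region corresponds to a ``gap'' between consecutive $a\cap b$-components, with the unbounded region(s) accounting for the offset of one; (4) conclude that the number of components of $A\cap B$ equals (number of surrounded regions) $+1$, so $A\cap B$ contractible $\iff$ connected $\iff$ no surrounded region. I expect the main obstacle to be step (3): carefully matching complementary regions to the gaps between $a\cap b$-components while correctly handling the several cases (both tisks bounded vs.\ one or both unbounded, hence one or two unbounded complementary regions), and ensuring degenerate segments (points of $a\cap b$ that are isolated, or $a\setminus b$-segments that degenerate) do not break the bijection. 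A clean way to organize this is to argue by a discrete ``rotation'' around the cyclic order of $V_{ab}$, tracking at each point whether $a$ and $b$ locally coincide or separate, so that components of $a\cap b$ and separating complementary regions interleave; the bounded/unbounded dichotomy then only affects whether this cyclic bookkeeping closes up or has a single ``cut''.
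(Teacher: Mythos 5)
Your proposal is correct, but it takes a genuinely different route from the paper's proof. The paper's argument is direct and very short: for the direction ``non-contractible $\Rightarrow$ surrounded region,'' it notes that non-contractibility forces at least two components of $a\cap b$, and then exhibits consecutive $V_{ab}$-points joined by both an $a\setminus b$-segment and a $b\setminus a$-segment, so that the resulting Jordan loop bounds a surrounded region; for the converse, contractibility forces $|V_{ab}|=2$ with both points on the unbounded complementary region, leaving nothing to bound a surrounded region. You instead propose to first prove the quantitative statement that the number of components of $a\cap b$ equals the number of surrounded regions plus one (the observation the paper records only informally in Section~\ref{sec:restricted_barcodes}) and then read off the equivalence. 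Both are valid; yours establishes a strictly stronger fact and makes the mechanism transparent, at the cost of a more elaborate case analysis for the bounded/unbounded tisk combinations and for degenerate segments. One remark that would shorten your step (3) considerably: the count $k = s+1$ drops out of an Euler-characteristic computation, namely $\chi(A\cup B)=\chi(A)+\chi(B)-\chi(A\cap B)=2-k$ together with $\chi(A\cup B)=1-s$ since $A\cup B$ is connected (as $A\cap B\neq\emptyset$) and planar; this sidesteps the gap-matching bookkeeping and the bounded/unbounded dichotomy almost entirely, though it is less constructive than the rotation argument you sketch.
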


\begin{proof}
``$\Leftarrow$'' follows directly from the definition: Assume that $a\cap b$ is non-contractible.
Then, $a\cap b$ decomposes into at least two connected components, and there exist 
consecutive points $v,v'\in V_{ab}$ which are connected by segments in $a\setminus b$ and $b\setminus a$.
The tisk enclosed by this loop is a surrounded region.

For ``$\Rightarrow$'', contractibility implies that $V_{ab}$ consists of exactly two points, 
and both lie in the unbounded complementary region, so there are no
points in $V_{ab}$ left to form a surrounded region.
\end{proof}

Next, we consider a collection $\mathcal{S}$ of pairwise interior-disjoint tisks. 
We assume generic position, that is, no more than three tisks intersect
in a common point.
Three tisks $A,B,C$ of $\mathcal{S}$ 
with boundary curves $a,b,c$ intersect in at most two points; this follows from the observation
that $a\cap b\cap c\subseteq V_{ab}$ (because any intersection
of $c$ in the interior of a segment of $a\cap b$ would imply that $C$
is not interior disjoint to $A$ or $B$), combined with the fact that $C$
must be contained in some complementary region of $(A,B)$, each of which
contains at most $2$ points of $V_{ab}$.
\begin{figure}[htb]
\centering
\includegraphics[width=4cm]{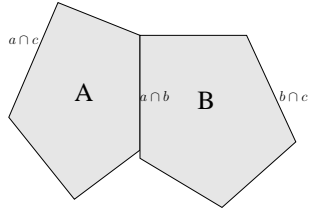}
\caption{Illustration of the proof of Lemma~\ref{lem:edges_suffice_for_nerve_theorem}. 
The set $C$ has to be bounded by the outer cycle, which implies that either $C$ has an inner hole
(contradicting the tisk-property) or it contains $A$ and $B$ (contradicting interior-disjointness).}
\label{fig:one_intersection_point}
\end{figure}

\begin{lemma}\label{lem:edges_suffice_for_nerve_theorem}
If any pair in $\mathcal{S}$ is empty or contractible, then every triple in $\mathcal{S}$
is empty or a single point. 
\end{lemma}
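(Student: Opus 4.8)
The plan is to argue by contradiction: suppose some triple $A,B,C\in\mathcal{S}$ has a non-empty intersection that is not a single point. We already know from the discussion preceding the lemma that $a\cap b\cap c\subseteq V_{ab}$ consists of at most two points, and that $C$ must lie inside one of the complementary regions of $(A,B)$. Since every pair in $\mathcal{S}$ is empty or contractible, the previous lemma tells us $(A,B)$ induces no surrounded region, so there are only unbounded complementary regions, and $C$ lives inside one of them. If $A\cap B\cap C$ is not a single point, then it must contain two points of $V_{ab}$ (it cannot contain an interior segment of $a\cap b$ without violating interior-disjointness of $C$ with $A$ or $B$); call them $v,v'$.

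The next step is to look at the boundary loop of the region that $C$ occupies. The two points $v,v'\in V_{ab}$ are joined by a segment of $a\setminus b$ and a segment of $b\setminus a$, forming a closed curve $\gamma$; since $(A,B)$ has no surrounded region, the bounded side of $\gamma$ is \emph{not} contained in $\Clos(\R^2\setminus(A\cup B))$, so in fact the complementary region containing $C$ is the \emph{unbounded} side of $\gamma$ (this is the situation of Figure~\ref{fig:one_intersection_point}). Now $C$ is a tisk whose boundary $c$ passes through both $v$ and $v'$. I would then examine how $c$ can connect $v$ to $v'$: either $c$ stays in the unbounded complementary region, in which case $C$ together with $\gamma$ encloses the bounded side of $\gamma$, forcing $C$ to contain $A\cap B$-adjacent material and in particular to have a hole (contradicting that $C$ is a tisk, i.e., simply connected), or $C$ wraps around so as to contain $A$ and $B$ in its interior, contradicting interior-disjointness of $C$ with $A$ and $B$. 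Either way we reach a contradiction, so $A\cap B\cap C$ is empty or a single point.

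The main obstacle I anticipate is making the dichotomy ``$C$ has a hole versus $C$ contains $A,B$'' fully rigorous: one has to use the Jordan curve theorem carefully for the piecewise-algebraic loops $\gamma$ and $c$, track which side of each curve is bounded, and rule out degenerate configurations where a segment of $a\cap b$ degenerates to a point or where $v=v'$. The bookkeeping about which complementary region of $(A,B)$ contains $C$, and the fact that $c$ must enter and leave that region exactly through $v$ and $v'$, is where the argument needs to be done with some care; the illustration in Figure~\ref{fig:one_intersection_point} essentially captures the one essential case, and the remaining cases should reduce to it after choosing orientations and using genericity. Everything else — the containment $a\cap b\cap c\subseteq V_{ab}$ and the bound of two intersection points — has already been established, so the proof should be short once the topological dichotomy is pinned down.
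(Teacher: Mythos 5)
Your outline is in the right spirit for the case where both $A$ and $B$ are bounded, and the figure you cite does capture that situation, but there are two genuine gaps. First, the closed curve $\gamma$ you build from a segment of $a\setminus b$ and a segment of $b\setminus a$ only exists when \emph{both} $A$ and $B$ are bounded: if $A$ is unbounded, $a$ is an unbounded path, $a\cap b$ is a compact arc with endpoints $v_1,v_2$, and $a\setminus b$ is the union of two disjoint unbounded rays, not an arc joining $v_1$ to $v_2$, so there is no loop $\gamma$ to argue with. The paper disposes of the unbounded case first and separately: since $c$ can only meet $a$ inside $\Clos(a\setminus b)\cup V_{ab}$ and $a\cap c$ must contain both $v_1$ and $v_2$, an unbounded $A$ would force $a\cap c$ to have two connected components (one per ray), contradicting the contractibility hypothesis. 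Your argument never reaches for this, so the unbounded case is simply not covered.

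Second, even with both $A,B$ bounded, your dichotomy ``$c$ stays outside so $C$ has a hole'' versus ``$C$ wraps around $A,B$'' is not exhaustive. There is a third configuration you haven't ruled out: $c$ could be a simple loop through $v$ and $v'$ lying entirely in the unbounded complementary region, whose bounded side is a crescent that touches $\gamma$ only at $v$ and $v'$; such a $C$ encloses neither a hole nor $A\cup B$, yet $A\cap B\cap C=\{v,v'\}$ would still consist of two points. What excludes this configuration is not the Jordan-curve bookkeeping you anticipate, but the contractibility hypothesis applied to $a\cap c$ and $b\cap c$: $a\cap c$ is a connected arc, a subset of $\Clos(a\setminus b)$, and contains both $v_1$ and $v_2$, so it must be all of $\Clos(a\setminus b)$; similarly $b\cap c=\Clos(b\setminus a)$, hence $c\supseteq\gamma$, hence $c=\gamma$ since both are simple closed curves. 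Only then does the ``$C$ contains $A$ and $B$'' contradiction kick in. Your proposal never invokes contractibility of $a\cap c$ or $b\cap c$ at this stage, which is precisely the ingredient that makes the case analysis close; the Jordan-curve argument alone does not.
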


\begin{proof}
Consider a triple $A,B,C$ with boundary curves $a,b,c$ and non-empty intersection. 
We assume that $a\cap b$
is contractible. That implies that $|V_{ab}|\leq 2$. Wlog, we can assume that $|V_{ab}|=2$,
because otherwise, $a\cap b\cap c\subseteq V_{ab}$ contains of at most one point
and we are done. Let $V_{ab}=\{v_1,v_2\}$. There exists a segment of $a\cap b$
connecting $v_1$ to $v_2$. Assume first that $A$ is unbounded: then, there exist two
segments in $a\setminus b$, one emenating at $v_1$, one at $v_2$. Since $c$ cannot intersect
$a$ in the interior of $a\cap b$, $a\cap c$ must be a subset of $a\setminus b$.
However, $a\cap c$ contains $v_1$ and $v_2$, therefore, it consist of at least two connected
components. This is a contradiction to the assumption that $a\cap c$ is contractible.
A symmetric argument implies that $B$ must be bounded.

We are left with the case that both $A$ and $B$ are bounded. Then $a\setminus b$
and $b\setminus a$ are segments that connect $v_1$ and $v_2$. Since $a\cap c$
and $b\cap c$ are contractible, it follows that the loop of $c$ is the union of
these segments. However, that loop encloses both $A$ and $B$, which contradicts
the interior disjointness the tisks.
\end{proof}

\begin{figure}[htb]
\centering
\includegraphics[width=4cm]{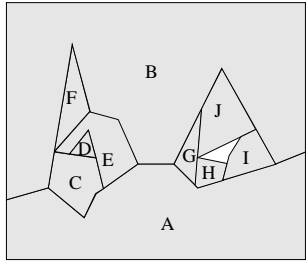}
\caption{$A$ and $B$ induce two surrounded regions. The left one has the four members $C,D,E,F$,
the right one the four members $G,H,I,J$. The left region is filled but not simple, because $C$
and $E$ induce a nested surrounded region. This nested region, in turn, is filled and simple.
The right region is simple, but not filled; note that the uncovered hole is not a surrounded region.}
\label{fig:members}
\end{figure}

For the upcoming definitions, see also Figure~\ref{fig:members}.
For a surrounded region $R$ induced by $A$ and $B$,
we call a tisk $S$ in $\mathcal{S}$ a \emph{member} of $R$
if $S\subseteq R$. Note that a surrounded region might have an arbitrary number of 
members, including no member at all.
We let $M_R$ denote the set of members of $R$, and 
$M_R^{\mathrm{ext}}:=M_R\cup\{A,B\}$ the set of \emph{extended members}.
We call a surrounded region $R$ \emph{filled}, if the union of its members equals $R$
(in this case, $R$ must have at least one member).
We define a \emph{simple surrounded region} of $\mathcal{S}$ to be a surrounded region
such that no other surrounded region is contained in it. 

The following property
will be of special importance; we will refer to it as the ``guarding principle''.

\begin{lemma}
For a collection of tisks $\mathcal{S}$ in generic position and a surrounded region $R$,
there is no intersection of a member of $R$ with an element of $\mathcal{S}\setminus M_R^{\mathrm{ext}}$.
\end{lemma}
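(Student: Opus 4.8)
The plan is to argue by contradiction. Suppose $R$ is the surrounded region induced by two tisks $A,B$, that $S\in M_R$, that $T\in\mathcal{S}\setminus M_R^{\mathrm{ext}}$, and that some point $x$ lies in $S\cap T$; I will derive that $T\subseteq R$, whence $T\in M_R\subseteq M_R^{\mathrm{ext}}$, a contradiction.

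The first step is to place everything relative to $A$ and $B$. Set $U:=\R^2\setminus(A\cup B)$. Since $A,B,S,T$ all lie in the pairwise interior-disjoint family $\mathcal{S}$, the interiors $\mathrm{int}(S)$ and $\mathrm{int}(T)$ avoid both $A$ and $B$, hence lie in $U$; being interiors of tisks they are connected, so each is contained in a single connected component of $U$. A short check, using the fact recorded above that $R$ is itself a tisk with boundary loop contained in $a\cup b$, shows that $\mathrm{int}(R)$ is exactly one such component $U_0$, and that $S\subseteq R$ forces $\mathrm{int}(S)\subseteq U_0$. Consequently it suffices to prove that $\mathrm{int}(T)$ also lies in $U_0$: then $T=\Clos(\mathrm{int}(T))\subseteq\Clos(U_0)=R$, the desired contradiction.

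To get this I would analyse the situation locally at $x$. Interior-disjointness of $S$ and $T$ gives $x\in s\cap t$ (with $s=\partial S$, $t=\partial T$), and since each of $S,T$ equals the closure of its interior, every small open ball $D$ centred at $x$ meets both $\mathrm{int}(S)$ and $\mathrm{int}(T)$, and those two pieces lie in $D\cap U$. It then suffices to show that $D\cap U$ is connected for $D$ small enough, since then a path inside $D\cap U\subseteq U$ joins a point of $\mathrm{int}(S)$ to a point of $\mathrm{int}(T)$, forcing them into the same component $U_0$. I would split on which of the curves $a,b$ pass through $x$. If neither does, $D\cap U=D$. If exactly one does, say $a$ (so $x\notin b$), then for $D$ small $D\cap B=\emptyset$ while $D\cap A$ is the closed half-disk lying on one side of the simple arc $D\cap a$, so $D\cap U$ is an open half-disk; connected in either case. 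Finally, if $x\in a\cap b$, then the four tisks $A,B,S,T$ all contain $x$, and they are pairwise distinct ($T\notin\{A,B\}$ by hypothesis, $S\neq A,B$ since $\mathrm{int}(A),\mathrm{int}(B)$ miss $R\supseteq S$, and $S\neq T$ because $T\notin M_R$), contradicting general position. This rules out the last case and completes the argument.

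The delicate part is not the combinatorics but the local point-set topology: one must use carefully that a tisk equals the closure of its interior and is locally a closed half-disk along its boundary arc, that the complementary regions of $(A,B)$ are precisely the closures of the connected components of $\R^2\setminus(A\cup B)$, and that general position forbids four tisk-boundaries through a common point. Each of these is routine, but they are exactly the places where a hand-wavy argument could slip; once they are established the remainder is elementary planar topology. An alternative, slightly more global route is to first conclude that $T$ lies entirely in one complementary region $R'$, then argue $R'=R$, but this forces one to treat separately the shared boundary of $R$ and $R'$, so the localized version above seems cleaner.
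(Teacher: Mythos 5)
Your proof is correct, and it follows the same core strategy as the paper's: localize at an offending intersection, observe that interior-disjointness with $A$ and $B$ forbids the intersection from sitting anywhere except on $a\cap b$, and then invoke the genericity assumption to get the four-tisks-at-a-point contradiction. The paper states this a bit more tersely (working from the observation that the external tisk must meet $\partial R$, and that this meeting point must be in $a\cap b$, before bringing in the member as a fourth tisk), and your version is cleaner in one respect: by analyzing directly at a point $x\in S\cap T$ you make explicit that $\mathrm{int}(S)$ and $\mathrm{int}(T)$ would land in the same component of $\R^2\setminus(A\cup B)$ unless $x\in a\cap b$, whereas the paper leaves it somewhat implicit why the intersection of the member and the external tisk happens to coincide with the point where the external tisk touches $\partial R$. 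The case split on whether $a$, $b$ pass through $x$ and the local connectivity of $D\cap U$ are exactly the ``routine but crucial'' topological facts the paper's phrase ``it is easy to see'' is gesturing at, so your writeup fills in what the paper omits without introducing any new ideas.

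One minor imprecision: in the ``exactly one of $a,b$ passes through $x$'' case you call $D\cap U$ an open half-disk, but if $x$ is a vertex of the piecewise-algebraic curve $a$, the angle need not be $\pi$; what matters (and what you already use) is just that $D\setminus A$ is the complement of a single closed sector and hence connected. That does not affect the validity of the argument.
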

\begin{proof}
Let $R$ be induced by the tisks $A$ and $B$, and let $a$, $b$ their boundaries.
Assume by contradiction the existence of such an intersection. 
Then, there must be a tisk $C\in M_R^{\mathrm{ext}}$ that intersects the boundary of $R$.
It is easy to see that this intersection cannot take place within $a\setminus b$
(because $A$ and $C$ are interior-disjoint and $a$ is simple), and the same way
for $b\setminus a$. It follows that $C$ intersects the boundary of $R$ at $v\in a\cap b$. 
However, by assumption, $C$ intersects a fourth tisk $D\in M_R$ at point $v$.
So, $v$ is an intersection of four tisks which contradicts our genericity assumption.
\end{proof}

An equivalent statement is that if $R$ is induced by $A$ and $B$, the set $\{A,B\}$ constitutes
a separator in $\nerve\mathcal{S}$, separating the elements within $R$ and the elements
outside of $R$.

We now state the first main theorem that will be needed for the proof of Theorem~\ref{thm:1-barcode_equivalence}.
As usual, we let $|\mathcal{S}|$ denote the underlying space of $\mathcal{S}$, which is the union over all tisks in $\mathcal{S}$.

\begin{theorem}
Let $\mathcal{S}$ be a collection of tisks in generic position such that every surrounded region is filled.
Then $H_1(\nerve(\mathcal{S}))=H_1(|\mathcal{S}|)$.
\end{theorem}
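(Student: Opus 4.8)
The plan is to argue by induction on the number of tisks in $\mathcal{S}$. If $\mathcal{S}$ has no surrounded region, then by the lemma characterising contractibility of pairwise intersections, every pair in $\mathcal{S}$ has empty or contractible intersection, hence by Lemma~\ref{lem:edges_suffice_for_nerve_theorem} every triple (and, a fortiori, every larger subfamily) has empty intersection or a single point. Thus $\mathcal{S}$ is a good cover and the Nerve Theorem (Theorem~\ref{thm:nerve_theorem}) gives $H_1(|\mathcal{S}|)=H_1(\nerve(\mathcal{S}))$ directly. This is the base of the induction.

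For the inductive step, suppose $\mathcal{S}$ has a surrounded region and pick an innermost one, i.e.\ a simple surrounded region $R$, induced by two tisks $A,B$, with member set $M_R$ and extended member set $M_R^{\mathrm{ext}}=M_R\cup\{A,B\}$. Since $R$ is filled, $\bigcup M_R = R$, and $A\cup R$ is again a tisk: it is a topological disk obtained by gluing the disk $A$ and the disk $R$ along the single boundary arc they share. Define $\mathcal{S}':=(\mathcal{S}\setminus M_R^{\mathrm{ext}})\cup\{A',B\}$ with $A':=A\cup R$. I would first check that $\mathcal{S}'$ is again admissible: it consists of strictly fewer, pairwise interior-disjoint tisks in generic position, all of whose surrounded regions are still filled. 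The guarding principle (that $\{A,B\}$ separates $M_R$ from the rest) is exactly what makes this local: enlarging $A$ to $A'=A\cup R$ affects only intersections taking place inside $R$ or on $\partial R$, and $\mathrm{int}(R)$ is covered solely by the deleted members while $\partial R\subseteq A\cup B$; consequently $A'\cap T=A\cap T$ for every $T\in\mathcal{S}\setminus M_R^{\mathrm{ext}}$, no new incidences or multiple-intersection points are created, and the only surrounded region of $(A,B)$ that disappears (by being filled) is $R$ itself. Moreover $|\mathcal{S}'|=|\mathcal{S}|$ is immediate from $A'=A\cup\bigcup M_R$.

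It remains to transport first homology between the two nerves, i.e.\ to show $H_1(\nerve(\mathcal{S}))=H_1(\nerve(\mathcal{S}'))$. Here I would use the decomposition $\nerve(\mathcal{S})=X\cup Y$, where $X:=\nerve(\mathcal{S}\setminus M_R)$ is the full subcomplex spanned by the non-members and $Y:=\nerve(M_R^{\mathrm{ext}})$; by the guarding principle every simplex of $\nerve(\mathcal{S})$ lies in $X$ or in $Y$, and $X\cap Y$ is the single edge $\{A,B\}$, which is contractible. The key computation is $H_1(Y)=H_1(\nerve(M_R^{\mathrm{ext}}))=0$: the family $M_R^{\mathrm{ext}}$ is a good cover except that $A\cap B$ has two components; after cutting one of them open one obtains an honest good cover whose underlying space is $A\cup B\cup R$, which is a disk (the annulus $A\cup B$ with its hole $R$ filled in by $M_R$), so its nerve has trivial $H_1$; undoing the cut only adds back $2$- and higher-dimensional simplices, which cannot create a $1$-cycle. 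A Mayer--Vietoris argument with $H_1(X\cap Y)=0$, $H_1(Y)=0$ and $H_0(X\cap Y)\hookrightarrow H_0(X)\oplus H_0(Y)$ then yields $H_1(\nerve(\mathcal{S}))\cong H_1(X)$, the isomorphism being realised by the simplicial map $\phi$ collapsing $M_R$ onto the vertex $A$ (well-defined precisely because of the separation property). On the other hand, the same locality that gave $A'\cap T=A\cap T$ shows $\nerve(\mathcal{S}')\cong X$ under the natural vertex bijection. Combining, $H_1(\nerve(\mathcal{S}))\cong H_1(X)\cong H_1(\nerve(\mathcal{S}'))$, and by the induction hypothesis this equals $H_1(|\mathcal{S}'|)=H_1(|\mathcal{S}|)$, completing the step.

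The main obstacle I anticipate is the bookkeeping in the inductive step: verifying that $\mathcal{S}'$ really remains a family of interior-disjoint tisks with all surrounded regions filled, and in particular that absorbing $R$ into $A$ neither destroys the tisk property of $A'$ nor spawns a new surrounded region, together with the homotopy-type computation $H_1(\nerve(M_R^{\mathrm{ext}}))=0$. That computation is the point where the argument genuinely uses that $R$ is filled and simple, and also the point where it would fail for $H_2$, since cutting a component of $A\cap B$ open can destroy a $2$-simplex.
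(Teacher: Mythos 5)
Your proposal is correct and follows the same overall plan as the paper: peel off innermost (simple) surrounded regions one at a time, absorbing each into one of its surrounders via $A':=A\cup R$, showing both the underlying space and the nerve homology are preserved, and finishing with the Nerve Theorem once no surrounded regions remain. You also rely on the same key computation, $H_1(\nerve(M_R^{\mathrm{ext}}))=0$ (Lemma~\ref{lem:extended_members_homology}). The one genuine departure is in how you transport $H_1$ across the absorption step. The paper proves this via Lemma~\ref{lem:simple_iso_lemma}, which constructs the simplicial map $\phi_R$ explicitly and shows injectivity by hand, rerouting any cycle around $R$ using the edge $\{A,B\}$. You instead observe that the guarding principle gives a clean decomposition $\nerve(\mathcal{S})=X\cup Y$ with $X=\nerve(\mathcal{S}\setminus M_R)$, $Y=\nerve(M_R^{\mathrm{ext}})$, and $X\cap Y$ the single edge $\{A,B\}$, and then read off $H_1(\nerve(\mathcal{S}))\cong H_1(X)$ from Mayer--Vietoris. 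This is a tidier argument and buys you a shorter proof of the isolated isomorphism statement; what the paper's more hands-on construction buys in return is an explicit, simplex-level map $\phi_R$ whose naturality is needed later in Theorem~\ref{thm:connection}, where one must show the isomorphisms commute with the inclusion maps as $\alpha$ varies. If you were to continue with the MV route toward the filtration-level statement, you would still need to verify that the MV isomorphism agrees with the one induced by $\phi_R$ on chains, so the two approaches end up requiring comparable work globally. Your anticipated ``main obstacle'' (checking that $\mathcal{S}'$ remains a sane configuration) is handled in the paper exactly as you sketch: by the guarding principle $A'$ meets a tisk outside $M_R^{\mathrm{ext}}$ iff $A$ did, so no new incidences arise, and the surrounding forest just loses a leaf.
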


The idea of the proof is to ``clear out'' surrounded regions one after the other by removing the members 
within a surrounded region and charging the surrounded region to one of the two tisks that surround it.
These operations do not change the underlying space. Moreover, as we will show, there cannot be any
non-trivial $1$-homology in the nerve of a surrounded region. Finally, by the guarding principle,
removing the member of a surrounded region does not affect the connectivity of remaining tisks.
These properties will be enough to ensure the isomorphism. We give the details of the single steps next.

\begin{figure}[htb]
\centering
\includegraphics[width=8cm]{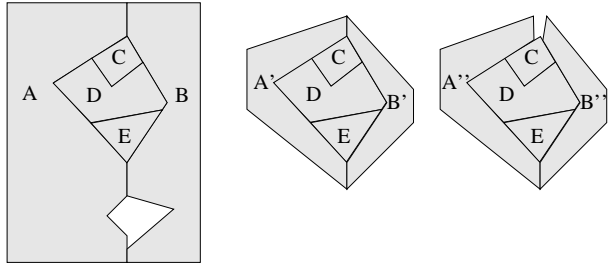}
\caption{Illustration of the transformation in the proof of Lemma~\ref{lem:extended_members_homology}.
On the left, we see a simple surrounded and filled region $R$ with members $C$, $D$, $E$.
In the middle, we shrink $A$ and $B$ to local neighborhoods around $R$, ignoring possible further
surrounded regions. On the right, we disconnect one of the two intersections of $A'$ and $B'$
without disconnecting $(A',C)$ or $(B',C)$. The obtained subdivision has only contractible intersections.}
\label{fig:surrounded_surgery}
\end{figure}

\begin{lemma}\label{lem:extended_members_homology}
Let $R$ be a surrounded region that is simple and filled. Then, $H_1(\nerve M_R^{\mathrm{ext}})=0$.
\end{lemma}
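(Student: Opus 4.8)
The plan is to reduce the claim to the nerve theorem by massaging $M_R^{\mathrm{ext}}$ into an \emph{honest} good cover whose underlying space is a disk, and then tracking exactly which simplices are lost in the process. First I would reduce to a local picture: since $R$ is simple and filled, the boundaries $a$ and $b$ of $A$ and $B$ meet the boundary of $R$ in exactly two points $v_1,v_2$ of $V_{ab}$, and between those points $a\setminus b$ and $b\setminus a$ are single arcs. I replace $A$ and $B$ by tisks $A'$ and $B'$ obtained by shrinking $A$ and $B$ to thin neighborhoods of the arcs of $\partial R$ they contribute, chosen so that $A'$, $B'$ still meet every member $C\in M_R$ that they originally met (this is possible because each such $C$ meets $\partial R$), and so that $A'\cap B'$ is still the two points $\{v_1,v_2\}$ — or, after the shrinking, two small contractible arcs around $v_1$ and $v_2$. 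This operation does not change $\nerve M_R^{\mathrm{ext}}$: by the guarding principle the only incidences involving $A$ or $B$ inside $R$ are the ones we preserve, and incidences among members of $M_R$ are untouched. Crucially the new underlying space $|M_R'^{\mathrm{ext}}|$ is still all of $R$ (fillation by members plus the thin strips $A',B'$), hence a topological disk, since $R$ is a bounded tisk.

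Next I would perform the ``surgery'': cut $A'$ or $B'$ open at one of the two contractible intersection arcs, say the one near $v_2$, turning the pair $(A',B')$ from a two-component intersection into a one-component (hence contractible) intersection, while being careful that no member $C$ that touched the deleted arc gets disconnected from $A'$ or $B'$ — which is arrangeable since the arcs are small and each $C$ that reaches $v_2$ also reaches the interior of $R$ where it still meets the uncut parts of $A'$ and $B'$. After this cut, every pair in the modified collection has empty or contractible intersection, so by Lemma~\ref{lem:edges_suffice_for_nerve_theorem} every triple is empty or a point, hence contractible, and we have a genuine good cover. The underlying space has changed only by removing (a neighborhood of) one arc from the disk $R$, so it is still contractible. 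The nerve theorem (Theorem~\ref{thm:nerve_theorem}) then gives $H_1$ of this modified nerve equal to $H_1$ of a contractible space, i.e.\ $0$.

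Finally I would compare the modified nerve with $\nerve M_R^{\mathrm{ext}}$ itself. Cutting the $v_2$-intersection removes exactly the simplices that required that intersection: the edge $\{A,B\}$ is kept (it survives through the $v_1$-intersection), but any triangle $\{A,B,C\}$ whose witnessing point was $v_2$ disappears; by genericity (no four tisks meet, and $A,B,C$ meet in at most two points) at most one such triangle exists. So $\nerve M_R^{\mathrm{ext}}$ and the modified nerve differ by at most one $2$-simplex glued along an already-present edge, an operation that cannot create or destroy $1$-cycles up to boundaries — attaching a $2$-cell can only kill $H_2$ or $H_1$, and removing it from an object with $H_1=0$ keeps $H_1=0$ (the removed triangle was filling at most a boundary). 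Hence $H_1(\nerve M_R^{\mathrm{ext}})=0$ as claimed; I would remark here that this lost triangle is precisely why the argument does not upgrade to $H_2$.

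\textbf{Main obstacle.} The delicate point is the surgery step: showing that one of the two intersection arcs of $A'$ and $B'$ can be cut open \emph{without} severing any $A'$--$C$ or $B'$--$C$ incidence for members $C\in M_R$, so that the only effect on the nerve is the loss of a single triangle. This requires using that $R$ is \emph{simple} (so members cannot hide inside a nested surrounded region, forcing their contact with $A'$, $B'$ to be ``visible'' from the interior) together with the filledness of $R$ and the generic-position bound that $A,B,C$ meet in at most two points. Getting the local geometry of this cut right, and checking that the resulting collection really is interior-disjoint so Lemma~\ref{lem:edges_suffice_for_nerve_theorem} applies, is where the real work lies; the rest is bookkeeping with the nerve theorem.
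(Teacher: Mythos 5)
Your proposal follows essentially the same route as the paper's proof: localize $A$ and $B$ to a neighborhood of $R$, cut one of the two intersection components open while preserving incidences with the adjacent member, invoke Lemma~\ref{lem:edges_suffice_for_nerve_theorem} and the nerve theorem on the resulting good cover of a disk, and observe that restoring the at most one lost triangle cannot create $1$-homology. The argument is correct and matches the paper's, including the remark about why it fails for $H_2$.
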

\begin{proof}
Let the surrounded region $R$ be induced by $(A,B)$, 
and we set $M:=M_R^{\mathrm{ext}}$ for notational convenience.
First of all, note that we can restrict $A$ and $B$ to a local neighborhood around $R$
without changing the nerve of $M$. Formally, replace $A$ by $A'$, which is the intersection
of $A$ with an $\eps$-offset of $R$ (for $\eps>0$ small enough), and same for $B$.
We set $M':=M_R\cup\{A',B'\}$. Clearly, $\nerve M=\nerve M'$.
See Figure~\ref{fig:surrounded_surgery} (middle).

We want to prove the claim using the Nerve theorem; however, the intersection of $A'$ and $B'$
is non-contractible. We perform another local surgery to avoid this problem:
Let $v$ be one of the two intersection points of $A\cap B\cap R$, and let $C\in M_R$
be the member adjacent to this point ($C$ exists because $R$ is filled, and $C$ is unique
because only three tisks intersect in one point).
We can separate $A'$ and $B'$ locally around $v$ with a small distance while leaving the
pairwise intersections with $C$ intact (again, because we assume non-degeneracy, $A'\cap C$ is a
non-degenerate segment). Let $A''$, $B''$ be the replacements, and let $M'':=M_R\cup\{A'',B''\}$; 
note that $A''$ and $B''$ are still connected at the second intersection point of $A\cap B\cap R$;
it follows that $\nerve M''$ has the same edges of $\nerve M$; in fact, the nerves are the same,
except that the triangle $ABC\in\nerve M$ might or might not have a counterpart in $\nerve M''$.
See Figure~\ref{fig:surrounded_surgery} (right).

It is enough to show that $H_1(\nerve M'')=0$.
Any pair of tisks in $M''$ has a non-contractible intersection:
we explicitly constructed $A''$ and $B''$ to have non-contractible intersection, 
and if any pair in $M''$
had a non-contractible intersection, it would introduce a surrounded region inside $R$
which contradicts the assumption that $R$ is simple. 
By Lemma~\ref{lem:edges_suffice_for_nerve_theorem}, 
this implies $M''$ is a good partition 
and the Nerve theorem applies. 
So, $\nerve M''$ is homotopically equivalent to $|M''|$,
which is a topological disk because $R$ is filled.
It follows that $H_1(\nerve M'')=0$, as required.
\end{proof}

Recall that $M_R$ is the set of member of a surrounded region $R$
and let $\mathcal{S}_R$ denote the set $\mathcal{S}\setminus M_R$. 
We define a map
\[\phi_R:\mathcal{S}\rightarrow\mathcal{S}_R\]
mapping each member of $R$ to $A$, and any other tisk to itself.
We have the induced simplicial map
\[\nerve(\mathcal{S})\rightarrow\nerve(\mathcal{S}_R),\]
which maps a simplex $\sigma=(S_0,\ldots,S_k)\in\nerve(\mathcal{S})$
with $S_i\in\mathcal{S}$ to the simplex $(\phi_R(S_0),\ldots,\phi_R(S_k))$.
Slightly abusing notation, we also write $\phi_R$ for this map on the nerve level.
Note that $\phi_R(S_i)$ might be equal to $\phi_R(S_j)$, so $\phi_R$ may map a $k$-simplex to an $\ell$-simplex with $\ell\leq k$.

We need to argue that $\phi_R$ is well-defined, that is, $\phi_R(\sigma)\in\nerve(\mathcal{S}_R)$. For that, observe
that if $\sigma$ does not contain any member of $R$, it stays in the nerve when removing $R$, and $\phi_R(\sigma)=\sigma$.
On the other hand, if $\sigma$ contains any member of $R$, it can only contain extended members of $R$ by the guarding principle. 
Therefore, $\phi_R(\sigma)=(A)$ or $\phi_R(\sigma)=(A,B)$, and both are in $\nerve(\mathcal{S}_R)$.

Being a simplicial map, $\phi_R$ induces a map
\[\phi_R^\ast: H_1(\nerve(\mathcal{S}))\rightarrow H_1(\nerve(\mathcal{S}_R))\]
of homology groups. We show that this map is an isomorphism if
$R$ is simple and filled.

\begin{figure}[htb]
\centering
\includegraphics[width=6cm]{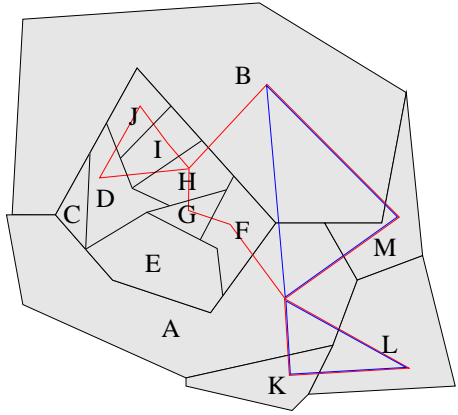}
\caption{Illustration of the transformation in the proof of Lemma~\ref{lem:simple_iso_lemma}.
$A$ and $B$ induce a simple and filled surrounded region $R$ with members $C,D,\ldots,J$.
The red curve represents a cycle $c$ in the $\nerve(\mathcal{S})$: the edges in the cycle
are defined by the intersections that the curve crosses. In this example, the cycle
consists of three loops, each of a different type: The loop $DJ+JI+IH+HD$ consists
of extended members of $R$ (type (1)) is transformed into the trivial cycle. The loop
$AK+KL+LA$ consists of elements of $\mathcal{S}\setminus M_R$ and is thus preserved. The
third loop $AF+FG+GH+HB+BM+MA$ is of mixed type. The path $AF+FG+GH+HB$ consists only
of extended members and is replaced by $AB$. The blue curve shows the resulting cycle $c'$.}
\label{fig:cycle_transform}
\end{figure}

\begin{lemma}\label{lem:simple_iso_lemma}
If a surrounded region $R$ is simple and filled, the map $\phi_R^\ast$
is an isomorphism.
\end{lemma}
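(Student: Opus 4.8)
The plan is to identify $\phi_R^\ast$ as the inverse of an inclusion-induced map and to compute the latter with Mayer--Vietoris. First I would record an elementary retraction: let $\iota\colon\nerve(\mathcal{S}_R)\hookrightarrow\nerve(\mathcal{S})$ be the inclusion of subcomplexes (every intersecting subset of $\mathcal{S}_R$ is one of $\mathcal{S}$). Since $\phi_R$ is the identity on the vertices lying in $\mathcal{S}_R$, the composition $\phi_R\circ\iota$ is the identity simplicial map of $\nerve(\mathcal{S}_R)$, hence $\phi_R^\ast\circ\iota^\ast=\mathrm{id}$ on $H_1(\nerve(\mathcal{S}_R))$. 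Thus it suffices to prove that $\iota^\ast\colon H_1(\nerve(\mathcal{S}_R))\to H_1(\nerve(\mathcal{S}))$ is an isomorphism, for then $\phi_R^\ast=(\iota^\ast)^{-1}$.

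Next I would use the guarding principle to split the nerve. Any simplex $\sigma\in\nerve(\mathcal{S})$ containing no member of $R$ lies in $\nerve(\mathcal{S}_R)$; any $\sigma$ containing a member of $R$ satisfies $\sigma\subseteq M_R^{\mathrm{ext}}$ by the guarding principle, hence lies in $\nerve(M_R^{\mathrm{ext}})$. Therefore $\nerve(\mathcal{S})=\nerve(\mathcal{S}_R)\cup\nerve(M_R^{\mathrm{ext}})$ as a union of two subcomplexes, and their intersection equals $\nerve(\mathcal{S}_R\cap M_R^{\mathrm{ext}})=\nerve(\{A,B\})$; since $R$ is induced by $A$ and $B$ we have $A\cap B\neq\emptyset$, so this intersection is a single closed $1$-simplex, in particular connected with $H_1=0$.

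Now I would feed this into the Mayer--Vietoris sequence for $\nerve(\mathcal{S})=\nerve(\mathcal{S}_R)\cup\nerve(M_R^{\mathrm{ext}})$. By Lemma~\ref{lem:extended_members_homology}, $H_1(\nerve(M_R^{\mathrm{ext}}))=0$, and from the proof of that lemma $\nerve(M_R^{\mathrm{ext}})$ is homotopy equivalent to the topological disk $A\cup B\cup R$ up to one $2$-simplex, hence connected. After substituting $H_1(\nerve(\{A,B\}))=0$ and $H_1(\nerve(M_R^{\mathrm{ext}}))=0$, the relevant portion reads
\[
0\longrightarrow H_1(\nerve(\mathcal{S}_R))\longrightarrow H_1(\nerve(\mathcal{S}))\xrightarrow{\ \partial\ }H_0(\nerve(\{A,B\}))\longrightarrow H_0(\nerve(\mathcal{S}_R))\oplus H_0(\nerve(M_R^{\mathrm{ext}})).
\]
Since $\nerve(\{A,B\})$ and $\nerve(M_R^{\mathrm{ext}})$ are connected, the last map sends the generator of $H_0(\nerve(\{A,B\}))$ to a pair whose $M_R^{\mathrm{ext}}$-component is a generator, so it is injective; by exactness $\partial=0$. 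The sequence then collapses to $0\to H_1(\nerve(\mathcal{S}_R))\to H_1(\nerve(\mathcal{S}))\to 0$, i.e.\ $\iota^\ast$ is an isomorphism, and we are done.

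I expect the only genuine content to sit inside Lemma~\ref{lem:extended_members_homology} (vanishing of $H_1$ of the extended-member nerve), together with the small check that $\nerve(M_R^{\mathrm{ext}})$ is connected; everything else is formal bookkeeping. This Mayer--Vietoris computation is just the functorial version of the cycle-rerouting picture of Figure~\ref{fig:cycle_transform}: a $1$-cycle of $\nerve(\mathcal{S})$ can meet the ``member zone'' only across the separating edge $\{A,B\}$ (guarding principle), and each excursion into $M_R^{\mathrm{ext}}$ with endpoints in $\{A,B\}$ is replaceable, without changing its homology class, by the edge $\{A,B\}$ or by a trivial path since $H_1(\nerve(M_R^{\mathrm{ext}}))=0$, yielding a homologous cycle supported in $\nerve(\mathcal{S}_R)$. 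One could write the argument out in that elementary form instead, but the only real obstacle---ruling out an ``essential'' excursion---is the same, and in both presentations it is exactly where simpleness and filledness of $R$ enter, via Lemma~\ref{lem:extended_members_homology}.
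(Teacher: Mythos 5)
Your proof is correct, and it takes a more functorial route than the paper's. The paper proves surjectivity of $\phi_R^\ast$ directly from $\mathcal{S}_R\subseteq\mathcal{S}$ and then establishes injectivity by explicit cycle surgery: every $1$-cycle of $\nerve(\mathcal{S})$ is decomposed into simple loops, and each excursion into $M_R^{\mathrm{ext}}$ is rerouted through the edge $\{A,B\}$ or deleted, using Lemma~\ref{lem:extended_members_homology} to certify that each such modification preserves the homology class. You instead observe the retraction identity $\phi_R\circ\iota=\mathrm{id}$, reduce the claim to showing that the inclusion-induced map $\iota^\ast$ is an isomorphism, and compute that via Mayer--Vietoris for the decomposition $\nerve(\mathcal{S})=\nerve(\mathcal{S}_R)\cup\nerve(M_R^{\mathrm{ext}})$ with intersection the single closed edge $\{A,B\}$. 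The two arguments consume exactly the same inputs --- the guarding principle (which gives the decomposition and makes $\{A,B\}$ the only interface) and the vanishing of $H_1(\nerve(M_R^{\mathrm{ext}}))$ --- and, as you note, your exact sequence is just the functorial packaging of the rerouting picture. What your version buys is rigor and economy: the paper's loop decomposition and the claim that each replacement is homologous are left partly informal (``we skip further details''), whereas the exact-sequence bookkeeping is automatic, and the identification $\phi_R^\ast=(\iota^\ast)^{-1}$ is a clean structural statement the paper never makes explicit. What the paper's version buys is a concrete chain-level description of representatives supported in $\mathcal{S}_R$, which is the form reused in the commutativity argument of Theorem~\ref{thm:connection}; since both proofs concern the same map $\phi_R^\ast$ induced by the same simplicial map, this is a matter of presentation rather than substance. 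Two small points to keep straight if you write this up: the injectivity of $H_0(\nerve(\{A,B\}))\rightarrow H_0(\nerve(\mathcal{S}_R))\oplus H_0(\nerve(M_R^{\mathrm{ext}}))$ needs only that $\nerve(M_R^{\mathrm{ext}})$ is non-empty (the class of a point is non-zero in unreduced $H_0$), so the connectivity discussion can be dropped; and one should note that $M_R\neq\emptyset$ because $R$ is filled, although the degenerate case $M_R=\emptyset$ would make $\phi_R$ the identity anyway.
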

\begin{proof}
$\mathcal{S}_R\subseteq\mathcal{S}$ implies immediately that the map is surjective.
The \emph{support} of a $d$-chain is the union of all vertices that are boundary vertices
of at least one simplex in the chain.
For injectivity, we claim for any $1$-cycle $c$ in $\nerve\mathcal{S}$, there is an homologous cycle
that is only supported by vertices in $\mathcal{S}_R$. Indeed, this statement implies injectivity:
Assume that $[c_1], [c_2]\in H_1(\nerve(\mathcal{S}))$ are such that $\phi_R^\ast([c_1])=\phi_R^\ast([c_2])$.
Then, we can replace the representatives $c_1$, $c_2$ by $c_1', c_2'$ supported  by $\mathcal{S}_R$.
Since $\phi_R$ is the identity on $\mathcal{S}_R$, we have
\[[c_1]=[c_1']=\phi_R^\ast([c_1'])=\phi_R^\ast([c_2'])=[c_2']=[c_2].\]
To prove the remaining claim, we fix a $1$-cycle $c$ in $\nerve\mathcal{S}$. 
$c$ decomposes into a collection of ``simple'' loops, that is, loops in which every vertex is traversed only once.
There are three possibilities for such a loop: Its support lies (1) entirely in $M_R^{\mathrm{ext}}$,
(2) entirely in $\mathcal{S}\setminus M_R$, or (3) contains elements of both $M_R$ and 
$\mathcal{S}\setminus M_R^{\mathrm{ext}}$. In the latter case, both $A$ and $B$ must be in the support
(because $A$ and $B$ are the only entry points into $M_R^{\mathrm{ext}}$ from the outside by the guarding principle),
and the loop splits into two parts at $A$ and $B$, one that is supported entirely by $M_R^{\mathrm{ext}}$
and one entirely supported by $\mathcal{S}\setminus M_R$.
See also Figure~\ref{fig:cycle_transform}.

We construct $c'$ from $c$ as follows: Iterating over all loops of $c$, we remove loops of type (1),
and leave loops of type (2) unchanged. For loops of type (3), we replace the subpath
supported by $M_R^{\mathrm{ext}}$ with the edge $AB$. Note that after these replacements,
$c'$ is indeed supported by vertices in $\mathcal{S}\setminus M_R=\mathcal{S}_R$. 

It remains to prove that $[c]=[c']$.
We show that every loop transformation yields a homologous cycle.
For type (1), note that the loop is a cycle in $\nerve M_R^{\mathrm{ext}}$. 
By Lemma~\ref{lem:extended_members_homology}, such a cycle is 
null-homologous, so removing the loop does not change
the homology type. For type (2), there is nothing to do.
For type (3), consider the subpath of the loop inside $M_R^{\mathrm{ext}}$;
if it consists only of the edge $AB$, the loop remains unchanged.
Otherwise, the concatenation of the path with the edge $AB$
induces a cycle in  $\nerve(M_R^{\mathrm{ext}})$. 
Using Lemma~\ref{lem:extended_members_homology}, the loop is null-homologous,
thus we can one path by the other without changing the homotopy type.
\end{proof}

We complete the proof of the first main statement next.

\begin{figure}[htb]
\centering
\includegraphics[width=12cm]{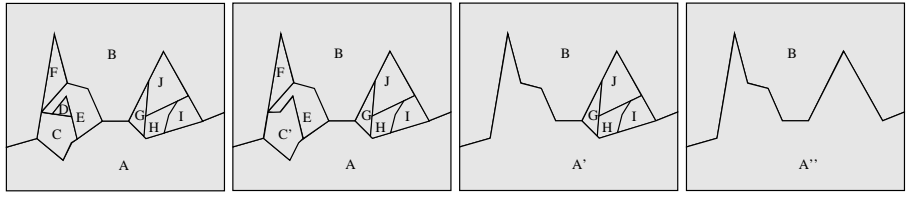}
\caption{Illustration of the transformation in the proof of Theorem~\ref{thm:tisk_theorem}.
The situation is a slight variation of Figure~\ref{fig:members}. We have three surrounded regions,
two induced by $(A,B)$, and one by $(C,E)$ and they are all filled. We first remove
the simple surrounded region induced by $(C,E)$. We charge the hole when removing the region
to one of the surrounders, say $C$, transforming it into $C'$ (2nd figure). This turns
the left surrounded region of $(A,B)$ simple as well. We perform the same operation twice
and transform $A$ into $A'$ (3rd figure) and finally into $A''$ (4th figure). After that,
the collection has no more surrounded region.}
\label{fig:eating_surrounded_regions}
\end{figure}

\begin{theorem}
\label{thm:tisk_theorem}
Let $\mathcal{S}$ be a collection of tisks in generic position such that every surrounded region is filled.
Then $H_1(\nerve(\mathcal{S}))=H_1(|\mathcal{S}|)$.
\end{theorem}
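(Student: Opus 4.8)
The plan is to prove the statement by induction on the number of surrounded regions induced by pairs of $\mathcal{S}$, peeling them off one at a time from the inside out.

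For the base case, suppose $\mathcal{S}$ induces no surrounded region. Then the characterization of contractible pairwise intersections (the lemma stating that $A\cap B$ is contractible precisely when $(A,B)$ induces no surrounded region) shows that every pair of tisks in $\mathcal{S}$ intersects in the empty set or in a contractible set. Lemma~\ref{lem:edges_suffice_for_nerve_theorem} then upgrades this to the statement that every triple is empty or a single point, and general position forbids a common point of four tisks. Hence $\mathcal{S}$ is a good cover, and the Nerve Theorem (Theorem~\ref{thm:nerve_theorem}) gives $H_1(|\mathcal{S}|)=H_1(\nerve(\mathcal{S}))$.

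For the inductive step, assume $\mathcal{S}$ induces at least one surrounded region and choose one, $R$, that is minimal under inclusion; then $R$ is simple, and by hypothesis it is filled. Let $A,B$ be the pair inducing $R$ and let $M_R$ be its members. I would form a new collection $\mathcal{S}'$ by deleting $M_R$ and \emph{charging $R$} to one of the two surrounders, i.e.\ replacing (say) $A$ by $A':=A\cup R$. Since $\mathrm{int}(R)$ is disjoint from everything outside $M_R^{\mathrm{ext}}$ by the guarding principle (and since $R$ is filled, $R=\bigcup M_R$), the set $A'$ is again a tisk interior-disjoint from the rest, $\mathcal{S}'$ is again in general position, and $|\mathcal{S}'|=|\mathcal{S}|$. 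On the nerve side, deleting $M_R$ replaces $\nerve(\mathcal{S})$ by $\nerve(\mathcal{S}\setminus M_R)$, and Lemma~\ref{lem:simple_iso_lemma} (whose proof rests on Lemma~\ref{lem:extended_members_homology}) says that $\phi_R^\ast\colon H_1(\nerve(\mathcal{S}))\to H_1(\nerve(\mathcal{S}\setminus M_R))$ is an isomorphism; moreover, enlarging $A$ to $A'$ does not change the nerve at all, because by the guarding principle $R$ meets no surviving tisk other than $B$, so $A$ and $A'$ have the same intersection pattern with $\mathcal{S}\setminus M_R$ (the intersection with $B$ merely grows by an arc and stays non-empty). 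Hence $H_1(\nerve(\mathcal{S}))\cong H_1(\nerve(\mathcal{S}'))$. Provided $\mathcal{S}'$ again satisfies the hypotheses of the theorem and has strictly fewer surrounded regions, the induction hypothesis then yields
\[
H_1(\nerve(\mathcal{S}))\;\cong\;H_1(\nerve(\mathcal{S}'))\;=\;H_1(|\mathcal{S}'|)\;=\;H_1(|\mathcal{S}|),
\]
which closes the induction.

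The main obstacle is exactly this last proviso. Since enlarging $A$ to $A'$ only shaves a boundary-attached disk off each complement component, no new surrounded region can appear and $R$ itself is absorbed, giving the strict decrease; but preserving \emph{filledness} is delicate. A surrounded region $U$ of $\mathcal{S}'$ is again a surrounded region of $\mathcal{S}$, yet it may have lost members lying in $M_R$, and one must check that every point it loses this way is either no longer in $U$ (having been absorbed into $A'$) or is still covered by a surviving member. The structural facts driving this are that any surrounded region containing $R$ also contains all of $M_R$, that distinct surrounded regions of the same pair are interior-disjoint, and that the two surrounders of $R$ lie on the boundary of every such enclosing region (so that, read as closed regions, they remain members after the surgery). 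This is also where one must charge $R$ to whichever of $A$, $B$ is a member of the larger surrounded regions containing it — as in the cascade of Figure~\ref{fig:eating_surrounded_regions} — and argue that such a choice always exists, and where one must dispose of the tangency configurations that general position still permits. The degree-$0$ statement is proved the same way and is easier.
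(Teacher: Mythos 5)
Your proposal is correct and follows essentially the same route as the paper: peel off a simple (innermost) filled surrounded region, charge it to one of its surrounders so that the underlying space and the nerve are both preserved (the latter up to the isomorphism of Lemma~\ref{lem:simple_iso_lemma}), iterate until no surrounded regions remain, and finish with Lemma~\ref{lem:edges_suffice_for_nerve_theorem} and the Nerve Theorem. If anything, you are more explicit than the paper about the one point it glosses over, namely that the surgery preserves the invariant that all remaining surrounded regions are filled.
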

\begin{proof}
We assume that $\mathcal{S}$ induces at least one surrounded region.
In that case, it must also induce a simple one, and we let $R$ denote it
and $(A,B)$ be the pair in $\mathcal{S}$ that causes $R$.
By Lemma~\ref{lem:simple_iso_lemma}, because $R$ is filled, we have
$H_1(\nerve(\mathcal{S}))=H_1(\nerve(\mathcal{S}_R))$.
Now, we set $A':=A\cup R$ and consider $\mathcal{S}_R'$,
where we replace $A$ with $A'$ (Figure~\ref{fig:eating_surrounded_regions}). 
We note that
$\nerve \mathcal{S}_R'=\nerve \mathcal{S}_R$, because the extension
of $A$ did not introduce any new connection. This implies in particular that
$H_1(\nerve(\mathcal{S}))=H_1(\nerve(\mathcal{S}_R'))$
Moreover, $|\mathcal{S}_R'|=|\mathcal{S}|$, because
$A'$ occupies exactly the space that has been occupied by the members of $R$.
Finally, the surrounded regions of $\mathcal{S}_R'$
are equal to the surrounded regions of $\mathcal{S}$, except that the simple region
$R$ was removed. 
In other words, representing the surrounded region by a forest where $R_1$ is an ancestor of $R_2$
if $R_1\subseteq R_2$, the surrounding forest of $\mathcal{S}_R'$ equals the surrounding forest
of $\mathcal{S}$ with one leaf removed.

By iterating this construction, we find a collection $\mathcal{S}^\ast$ of tisks such that
$H_1(\nerve(\mathcal{S}))=H_1(\nerve(\mathcal{S}^\ast))$, $|\mathcal{S}|=|\mathcal{S}^\ast|$,
and $\mathcal{S}$ does not induce any surrounding region.
The last property, however, implies that all pairwise intersections are contractible.
Therefore, Lemma~\ref{lem:edges_suffice_for_nerve_theorem} applies and we can use the Nerve Theorem, 
which states that $\nerve(\mathcal{S}^\ast)$ is homotopically equivalent to $|\mathcal{S}^\ast|$.
In particular, their homology groups are isomorophic. Putting everything together, we have that
\[H_1(\nerve(\mathcal{S}))=H_1(\nerve(\mathcal{S}^\ast))=H_1(|\mathcal{S}^\ast|)=H_1(|\mathcal{S}|)\]
\end{proof}

Finally, we consider filtrations of tisks according to the following definition.

\begin{definition}\label{def:tisk-filtration}
A family of tisks $(S^\alpha)_{\alpha\geq 0}$ is called a \emph{tisk-inclusion}
if $S^\alpha\subseteq S^{\alpha'}$ for $\alpha\leq \alpha'$.
A collection of tisk-inclusions $(\mathcal{S}^\alpha)_{\alpha\geq 0}=(\{S_1^\alpha,\ldots,S_N^\alpha\})_{\alpha\geq 0}$
is called a \emph{tisk-filtration}. For each $\alpha\geq 0$, it defines a collection of tisks
at scale $\alpha$. 
A tisk-filtration is called \emph{sane} if for every $\alpha$, 
the collection of tisks is interior-disjoint, in generic position, and all its surrounded regions are filled.
\end{definition}

For a sane tisk-filtration $(\mathcal{S}^\alpha)_{\alpha\geq 0}$,
Theorem~\ref{thm:tisk_theorem} establishes an isomorphism between
the first homology group of $\nerve(\mathcal{S}^\alpha)$ and $|\mathcal{S}^\alpha|$ for every $\alpha>0$.
On the other hand, for $\alpha_1\leq\alpha_2$, we have natural inclusions from 
$|\mathcal{S}^{\alpha_1}|$ to $|\mathcal{S}^{\alpha_2}|$ as well as inclusions on their nerves.
We show next that these inclusions commute with the isomorphisms. 

\begin{theorem}\label{thm:connection}
For a sane tisk-filtration $(\mathcal{S}^\alpha)_{\alpha\geq 0}$ and $\alpha_1\leq\alpha_2$,
the diagram
\[
\xymatrix{
H_1(|\mathcal{S}^{\alpha_1}|) \ar @{^{(}->}[r] & H_1(|\mathcal{S}^{\alpha_2}|) \\
H_1(\nerve(\mathcal{S}^{\alpha_1})) \ar[u]^{\psi^\ast} \ar @{^{(}->}[r] &  H_1(\nerve(\mathcal{S}^{\alpha_2})) \ar[u]^{\psi^\ast}
}
\]
commutes, where $\psi^\ast$ is the isomorphism as constructed in Theorem~\ref{thm:tisk_theorem}.
\end{theorem}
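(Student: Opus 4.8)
The plan is to make the isomorphism $\psi^\ast=\psi^\ast_\alpha$ of Theorem~\ref{thm:tisk_theorem} explicit on $1$-cycles and then to compare the two ways around the square directly inside the underlying space $|\mathcal{S}^{\alpha_2}|$. By construction, $\psi^\ast_\alpha=\theta^\ast_\alpha\circ\Pi^\alpha_\ast$, where $\Pi^\alpha\colon\nerve(\mathcal{S}^\alpha)\to\nerve(\mathcal{S}^{\ast,\alpha})$ is the simplicial reduction map obtained by composing the maps $\phi_R$ of Lemma~\ref{lem:simple_iso_lemma} over the simple surrounded regions that get peeled off, $\mathcal{S}^{\ast,\alpha}$ is the resulting surrounded-region-free collection (a good cover by Lemma~\ref{lem:edges_suffice_for_nerve_theorem}, with $|\mathcal{S}^{\ast,\alpha}|=|\mathcal{S}^\alpha|$), and $\theta^\ast_\alpha$ is the Nerve-Theorem isomorphism. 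Recall that $\theta^\ast_\alpha$ sends a $1$-cycle $c$ of $\nerve(\mathcal{S}^{\ast,\alpha})$ to the class of its \emph{geometric realization} $\gamma(c)\subseteq|\mathcal{S}^{\ast,\alpha}|$: a closed curve obtained by choosing a point in $S_i\cap S_j$ for every edge $\{S_i,S_j\}$ of $c$ and a path inside $S_i$ joining the chosen points on the incident edges for every vertex $S_i$, the resulting class being independent of the choices. Two further remarks: $\Pi^\alpha$ restricts to the identity on $\nerve(\mathcal{S}^{\ast,\alpha})$, regarded (as in the proof of Theorem~\ref{thm:tisk_theorem}) as the full subcomplex of $\nerve(\mathcal{S}^\alpha)$ spanned by the $\alpha$-free sites; and $\Pi^\alpha_\ast([z])$ equals the class of any fully reduced cycle $z'$, supported on $\alpha$-free sites, that is homologous to $z$ in $\nerve(\mathcal{S}^\alpha)$. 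Hence $\psi^\ast_{\alpha_1}([z])=[\gamma(z')]$ for any such $z'$.

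Now fix $\alpha_1\le\alpha_2$. Since each tisk grows with the scale, $\nerve(\mathcal{S}^{\alpha_1})\subseteq\nerve(\mathcal{S}^{\alpha_2})$ and $|\mathcal{S}^{\alpha_1}|\subseteq|\mathcal{S}^{\alpha_2}|$, and for a $1$-cycle $z$ of $\nerve(\mathcal{S}^{\alpha_1})$ the composite through $H_1(|\mathcal{S}^{\alpha_1}|)$ sends $[z]$ to the class in $H_1(|\mathcal{S}^{\alpha_2}|)$ of the loop $\gamma(z')$, which is still a legitimate geometric realization of $z'$ at scale $\alpha_2$ because each $\alpha_1$-site is contained in the corresponding $\alpha_2$-site; whereas the composite through $H_1(\nerve(\mathcal{S}^{\alpha_2}))$ sends $[z]$ to $\psi^\ast_{\alpha_2}$ of the class of the image $\bar z$ of $z$, namely to $[\gamma(\bar w)]$ for a fully reduced cycle $\bar w$ with $\bar w\sim\bar z$ in $\nerve(\mathcal{S}^{\alpha_2})$. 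Because $z\sim z'$ in $\nerve(\mathcal{S}^{\alpha_1})$, we get $z'\sim\bar z\sim\bar w$ in $\nerve(\mathcal{S}^{\alpha_2})$, so the square follows once we establish the assertion $(\star)$: \emph{homologous $1$-cycles of $\nerve(\mathcal{S}^{\alpha_2})$ have homologous geometric realizations in $|\mathcal{S}^{\alpha_2}|$}, i.e.\ geometric realization of $1$-cycles descends to a well-defined homomorphism $H_1(\nerve(\mathcal{S}^{\alpha_2}))\to H_1(|\mathcal{S}^{\alpha_2}|)$ --- which is not automatic, since $\mathcal{S}^{\alpha_2}$ is not a good cover.

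To prove $(\star)$ I would push the $\alpha_2$-peeling into the underlying space. The key lemma is that an elementary reduction step changes the realization only by a null-homologous loop in $|\mathcal{S}^{\alpha_2}|$: a step rerouting a sub-path that enters a simple, \emph{filled} surrounded region $R$ (surrounded by tisks $A$, $B$) replaces a portion running through $A$, a chain of members of $R$, and $B$, by a portion running only through $A$ and $B$; inserting auxiliary arcs along $\partial R$, the difference loop decomposes as a sum of a loop contained in $A$, a loop contained in $R$, and a loop contained in $B$, each of which bounds --- $A$ and $B$ are tisks, hence simply connected, and $R$, being filled, is a topological disk contained in $|\mathcal{S}^{\alpha_2}|$; and a deleted loop supported entirely on $M_R^{\mathrm{ext}}$ has its realization inside $|M_R^{\mathrm{ext}}|=A\cup B\cup R$, which has trivial $H_1$ by Lemma~\ref{lem:extended_members_homology}. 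Thus for any $1$-cycle $c$ of $\nerve(\mathcal{S}^{\alpha_2})$ with fully reduced form $c^\ast$ we obtain $\gamma(c)\simeq\gamma(c^\ast)$ in $|\mathcal{S}^{\alpha_2}|$. Moreover, if $c_1\sim c_2$ in $\nerve(\mathcal{S}^{\alpha_2})$ then $c_1^\ast\sim c_2^\ast$ in the good subcomplex $\nerve(\mathcal{S}^{\ast,\alpha_2})$ --- because $\Pi^{\alpha_2}_\ast$ is a well-defined homomorphism, $\Pi^{\alpha_2}$ restricts to the identity on that subcomplex, and $c_i\sim c_i^\ast$ in $\nerve(\mathcal{S}^{\alpha_2})$ --- whence $\gamma(c_1^\ast)\simeq\gamma(c_2^\ast)$ in $|\mathcal{S}^{\ast,\alpha_2}|=|\mathcal{S}^{\alpha_2}|$ by the Nerve Theorem applied to the good cover $\mathcal{S}^{\ast,\alpha_2}$. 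Chaining these three homotopies yields $(\star)$, and with it the commutativity of the square, via $[\gamma(z')]=[\gamma(\bar w)]$.

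I expect the main obstacle to be the careful execution of this rerouting lemma: choosing the auxiliary arcs and all geometric realizations consistently, handling unbounded tisks $A$ or $B$ and degenerate (pointlike) intersection arcs, keeping track of orientations so that the difference cycle really splits into the three loops in $A$, $B$, and $R$, and verifying that $\bar w$ is obtained from $\gamma(z')$ through exactly such a finite sequence of moves across the (possibly newly created at scale $\alpha_2$) surrounded regions. This is the geometric counterpart of ``investigating the internals of the nerve isomorphism'' in the spirit of~\cite[Lemma~3.4]{co-towards}; an alternative, more algebraic route would be to prove directly the naturality of the Nerve-Theorem isomorphism under the simplicial reduction maps, i.e.\ the generalization of Theorem~\ref{thm:nerve_inclusion_commute} to simplicial maps announced in the introduction.
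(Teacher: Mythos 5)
Your proposal takes a genuinely different route from the paper's proof. The paper also decomposes $\psi^\ast = \theta^\ast\circ\phi^\ast$ and reduces to showing that the middle square commutes, but it does so by replacing the closed tisks by open covers and invoking the Hatcher/Chazal--Oudot machinery: the auxiliary spaces $\Delta X_i\subset X_i\times\Delta^{n_i-1}$, the projections $p_i$, the collapse maps $q_i$, barycentric subdivisions $\Gamma_i$, and an explicit check that the maps $h_i$ (on chain groups) commute with the simplicial reduction. You instead give a concrete description of the Nerve-Theorem isomorphism on $1$-cycles as geometric realization, and then prove a stronger claim $(\star)$: that realization already descends to a well-defined map $H_1(\nerve(\mathcal{S}^{\alpha_2}))\to H_1(|\mathcal{S}^{\alpha_2}|)$ even though $\mathcal{S}^{\alpha_2}$ is not a good cover. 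Commutativity then follows almost formally, since the $\alpha_1$-realization of $z'$ is also a legitimate $\alpha_2$-realization. This is more elementary and avoids the $\Delta X$ formalism, at the cost of being specific to $H_1$; the paper's approach would, in principle, port to higher $p$.

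However, there is one concrete misstep in your sketch of $(\star)$. For a deleted type-(1) loop you claim its realization lies in $|M_R^{\mathrm{ext}}| = A\cup B\cup R$ and that this set ``has trivial $H_1$ by Lemma~\ref{lem:extended_members_homology}.'' That lemma asserts $H_1(\nerve M_R^{\mathrm{ext}})=0$, and its proof first shrinks $A,B$ to local neighborhoods $A',B'$ of $R$ precisely because $A\cup B\cup R$ need \emph{not} be simply connected: if $(A,B)$ induce a second surrounded region besides $R$, that region is a hole of $A\cup B\cup R$. So the null-homology has to be argued either in $A'\cup B'\cup R$ (which \emph{is} a disk, and to which the realization can be pushed since $A,B$ are simply connected) or directly in $|\mathcal{S}^{\alpha_2}|$ (using that the other surrounded regions are filled by sanity). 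Either fix works, but as written the step is wrong. Similarly, the decomposition of the type-(3) difference loop into loops lying in $A$, $R$, and $B$ is asserted without an argument and needs the same kind of care with $A',B'$ versus $A,B$, with orientations, and with realizations that a priori use distinct intersection points on each side of the rerouting. You flag these as the ``main obstacle,'' and I agree that is where the real work lies; but the $|M_R^{\mathrm{ext}}|$ claim is not just a detail to execute --- it is stated incorrectly and must be repaired before the argument can be carried through.
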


\begin{proof}
The proof is lengthy and requires
us to study the isomorphisms induced by the Nerve theorem, in a similar spirit as in~\cite{co-towards}.
We let $R^{(1)}_1,\ldots,R^{(1)}_{h_1}$ denote the the surrounded regions of $\mathcal{S}^{\alpha_1}$
and $R^{(2)}_1,\ldots,R^{(2)}_{h_2}$ those of $\mathcal{S}^{\alpha_2}$. Clearly, $h_1\leq h_2$
and we can label the surrounded regions such that $R^{(1)}_i=R^{(2)}_i$ for $1\leq i\leq h_1$.
Recall that in the construction of $\psi^\ast$ (Theorem~\ref{thm:tisk_theorem}), 
we repeatedly remove tisks from surrounded regions and charge their area to one of its surrounding tisks.
After removing all surrounded regions, we arrive at a collection $(\mathcal{S}^{\alpha_i})^\ast$
from $\mathcal{S}^{\alpha_i}$ with the same underlying space, and a simplicial map 
\[\phi_i: \nerve(\mathcal{S}^{\alpha_i})\rightarrow \nerve((\mathcal{S}^{\alpha_i})^\ast),\]
which induces an isomporphism $\phi_i^\ast$ between the $1$-st homology groups.
Moreover, because $(\mathcal{S}^{\alpha_i})^\ast$ has no surrounded regions, the Nerve theorem defines an isomorphism 
\[\theta_i^\ast: H_i(\nerve (\mathcal{S}^{\alpha_i})^\ast)\rightarrow H_i(|\mathcal{S}^{\alpha_i})^\ast|).\]
Finally, there is a natural simplicial map
\[\gamma: \nerve((\mathcal{S}^{\alpha_1})^\ast)\rightarrow \nerve((\mathcal{S}^{\alpha_2})^\ast)\]
where we map a tisk that is not surrounded at $\alpha_1$, but surrounded at $\alpha_2$
to one of its surrounding tisks (in other words, $\gamma$ can be seen as the restriction
of $\phi_2$ to $\nerve((\mathcal{S}^{\alpha_1})^\ast)$).
Putting everything together, we have the following diagram
\begin{eqnarray}
\label{eqn:big_diagram}
\xymatrix{
H_1(|\mathcal{S}^{\alpha_1}|) \ar[r]^{\subseteq} & H_1(|\mathcal{S}^{\alpha_2}|) \\
H_1(|(\mathcal{S}^{\alpha_1}|)^\ast) \ar[r]^{\subseteq} \ar @{=}[u] & H_1(|(\mathcal{S}^{\alpha_2})^\ast|)  \ar @{=}[u] \\
H_1(\nerve (\mathcal{S}^{\alpha_1})^\ast) \ar[r]^{\gamma^\ast} \ar[u]^{\theta_1^\ast} & H_1(\nerve (\mathcal{S}^{\alpha_2})^\ast)  \ar[u]^{\theta_2^\ast}\\
H_1(\nerve \mathcal{S}^{\alpha_1}) \ar[r]^{\subseteq} \ar [u]^{\phi_1^\ast} & H_1(\nerve \mathcal{S}^{\alpha_2})  \ar[u]^{\phi_2^\ast}\\
}
\end{eqnarray}
where $\subseteq$ means that the corresponding map is induced by inclusion.
We show that every square in this diagram commutes. This is immediately clear for the upper square
from the top. For the lower square, we observe that the corresponding diagram on the simplicial level
already commutes, by definition of $\gamma$.

For the middle square, we need to investigate the isomorphism $\theta_i^\ast$ in more detail. 
(Note that we cannot apply Theorem~\ref{thm:nerve_inclusion_commute} because $\gamma^\ast$ is not induced by inclusion)
We still want to follow the approach from
\cite[Lemma 3.4]{co-towards}, \cite[Sec. 4G]{hatcher}.
A technical difficulty is that these results require \emph{open} covers of the underlying space, while we cover with
closed spaces. However, we can just replace any tisk in $(\mathcal{S}^{\alpha_i})^\ast$ by an offset of itself with a sufficiently small $\eps$-value
and restrict the offset to the underlying space. This yields an open cover 
$\mathcal{U}^{\alpha_i}$, with same underlying space and the same nerve
as $(\mathcal{S}^{\alpha_i})^\ast$~-- this is possible because we assume our tisks to be bounded by finitely many algebraic sets which rules out pathological
cases where tisks come arbitrary close to each other without intersecting. With that, we have the diagram
\begin{eqnarray*}
\xymatrix{
H_1(|(\mathcal{S}^{\alpha_1}|)^\ast) \ar[r]^{\subseteq} & H_1(|(\mathcal{S}^{\alpha_2})^\ast|)\\
H_1(|\mathcal{U}^{\alpha_1}|) \ar[r]^{\subseteq} \ar @{=}[u] & H_1(|\mathcal{U}^{\alpha_2}|)  \ar @{=}[u] \\
H_1(\nerve(\mathcal{U}^{\alpha_1})) \ar[u]^{\theta_1^\ast} \ar[r]^{\gamma^\ast} &  H_1(\nerve(\mathcal{U}^{\alpha_2})) \ar[u]^{\theta_2^\ast}\\
H_1(\nerve (\mathcal{S}^{\alpha_1})^\ast) \ar[r]^{\gamma^\ast} \ar @{=}[u] & H_1(\nerve (\mathcal{S}^{\alpha_2})^\ast)  \ar @{=}[u]\\
}
\end{eqnarray*}

where the upper and lower square obviously commute, and we only need to show that the middle square commutes.
Set $X_i:=|\mathcal{U}^{\alpha_i}|$ for convenience,
and let $U^{(i)}_1,\ldots,U^{(i)}_{n_i}$ denote the elements in the open cover $\mathcal{U}^{\alpha_i}$.
Let $\Delta^{n_i-1}$ denote the standard simplex of dimension $n_i-1$.
We define a space $\Delta X_i\subseteq X_i\times\Delta^{n_i-1}$ as follows:
Any non-empty subset $\sigma\subset\{1,\ldots,n_i\}$ defines a simplex $[\sigma]$ of $\Delta^{n_i-1}$ choosing the corresponding vertex set.
Also, $\sigma$ induces a (possibly empty) set $U^{(i)}_\sigma=\bigcap_{j\in\sigma}U^{(i)}_j$. We set
\[\Delta X_i:=\bigcup_{\emptyset\neq\sigma\subseteq\{1,\ldots,n_i\}} U^{(i)}_\sigma\times[\sigma].\]
Our next goal is to define a map that connects $\Delta X_1$ and $\Delta X_2$. Note first that
$\gamma:\nerve((\mathcal{S}^{\alpha_1})^\ast)\rightarrow \nerve((\mathcal{S}^{\alpha_2})^\ast)$
is defined through a vertex map from one nerve to the other, by identifying surrounded tisks
with one of their surrounders. By assigning indices and tisks, $\gamma$ can be encoded as a 
map $\gamma: \{1,\ldots,n_1\}\rightarrow \{1,\ldots,n_2\}$. Note that the individual tisks
are only growing when surrounded regions are eliminated; therefore, we have that
$U^{(1)}_k \subseteq U^{(2)}_{\gamma(k)}$. $\gamma$ also extends to a surjective map
from $\Delta^{n_1-1}$ to $\Delta^{n_2-1}$ in a natural way, and we have that
$U^{(1)}_\sigma\subseteq U^{(2)}_{\gamma(\sigma)}$. Therefore, the map
\[\xi: \Delta X_1\rightarrow \Delta X_2\]
which maps $(x,v)$ to $(x,\gamma(v))$ is well-defined.

Finally, let $\Gamma_i$ denote the barycentric subdivision of $\nerve \mathcal{U}^{\alpha_i}$
and note that $\gamma$ also extends to a map $\gamma':\Gamma_1\rightarrow\Gamma_2$ in a natural way.
Now we consider the following diagram

\[
\xymatrix{
X_1 \ar @{^{(}->}[r] & X_2\\
\Delta X_1 \ar[r]^\xi \ar[u]^{p_1}\ar[d]^{q_1} & \Delta X_2 \ar[u]^{p_2} \ar[d]^{q_2}\\
\Gamma_1 \ar[r]^{\gamma'} & \Gamma_2
}
\]
where $p_i$ is the natural projection from $\Delta_i$ to $X_i$
and $q_i$ is the map obtained by contracting every $U^{(i)}_\sigma$ to a single point, say $x_0$.
Both squares commute: For the first square, this is immediately clear,
because a point $(x,v)\in \Delta X_1$ is mapped to $x\in X_2$, regardless of how to follow the diagram.
For the second square, let $(x,v)\in\Delta X_1$ and note that $q_2(\xi(x,v))=q_2(x,\gamma'(v))=(x_0,\gamma'(v))$
and $\gamma'(q_1(x,v))=\gamma'(x_0,v)=(x_0,\gamma'(v))$.

We consider the diagram
\[
\xymatrix{
C_k(\Gamma_1) \ar[r]^{\gamma'_p} & C_k(\Gamma_2)\\
C_k(\nerve \mathcal{U}^{\alpha_1}) \ar[u]^{h_1} \ar[r]^{\gamma_p} & C_k(\nerve \mathcal{U}^{\alpha_2} \ar[u]^{h_2})
}
\]
of $k$-chain groups, where $\gamma_p$, $\gamma'_p$ are chain maps for dimension $p$ induced by $\gamma$ and $\gamma'$ (see~\cite[p.72]{munkres}). 
$h_i$ is the chain map defined by mapping a $k$-simplex $\sigma\in \nerve \mathcal{U}^{\alpha_1}$
to the chain of $k$-simplices that are incident to the vertex $\hat{\sigma}$ representing $\sigma$ in $\Gamma_i$.
It is again straight-forward to see that the diagram commutes: Fix a $k$-simplex $\sigma$. If $\sigma$ is contracted,
that is, two of its boundary vertices are identified, every simplex of $\Gamma_1$ incident to $\hat{\sigma}$ 
is also contracted. It follows that  $\gamma'_p(h_1(\sigma))=0=h_2(\gamma_p(\sigma))$. On the other hand,
of $\sigma$ is not contracted, every simplex in the barycentric subdivision that is incident to $\hat{\sigma}$
is mapped to a non-trivial $k$-chain, and it is easy to verify that $\gamma'_p(h_1(\sigma))=h_2(\gamma_p(\sigma))$
also in this case.

Summarizing the previous two steps, we have the commutative diagram

\[
\xymatrix{
H_1(X_1) \ar[r]^{i} & H_1(X_2)\\
H_1(\Delta X_1) \ar[r]^\xi \ar[u]^{p_1^\ast}\ar[d]^{q_1^\ast} & H_1(\Delta X_2) \ar[u]^{p_2^\ast} \ar[d]^{q_2^\ast}\\
H_1(\Gamma_1) \ar[r]^{\gamma'} & H_1(\Gamma_2)\\
H_1(\nerve \mathcal{U}^{\alpha_1}) \ar[u]^{h_1^*} \ar[r]^\gamma & H_1(\nerve \mathcal{U}^{\alpha_2} \ar[u]^{h_2^\ast})
}
\]

According to~\cite[Prop.4G.2 and 4G.3]{hatcher}, the maps $p_i$ and $q_i$ are isomorphisms, 
and according to~\cite[Thm.13.3 and 17.2]{munkres}, $h_i$ is an isomorphism. It follows that
with $\theta_i^\ast:=p_i^\ast\circ (q_i^\ast)^{-1}\circ h_i^\ast$, the middle square of (\ref{eqn:big_diagram})
commutes, which completes the proof.
\end{proof}

We apply the previous theorem on the case of restricted offset filtrations and prove Theorem~\ref{thm:1-barcode_equivalence}:

\begin{theorem}
For convex polygonal sites in $\R^2$, the $0$- and $1$-barcode of the restricted nerve filtration are equal to the $0$- and $1$-barcode
of the offset filtration, respectively.
\end{theorem}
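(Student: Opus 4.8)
The plan is to derive Theorem~\ref{thm:1-barcode_equivalence} as a formal consequence of the abstract machinery on tisk-filtrations, namely Theorem~\ref{thm:tisk_theorem} and Theorem~\ref{thm:connection}. First I would verify that the family $(\setQ_\alpha)_{\alpha\geq 0}$ of restricted offsets of convex polygonal sites is a \emph{sane tisk-filtration} in the sense of Definition~\ref{def:tisk-filtration}. Each restricted offset $Q^k_\alpha=P^k_\alpha\cap V^k$ is a tisk, since its boundary consists of line segments, circular arcs, and parabolic arcs, and $Q^k_\alpha\subseteq Q^k_{\alpha'}$ for $\alpha\leq\alpha'$, so we have a tisk-filtration. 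Interior-disjointness at each scale holds because $Q^i_\alpha\cap Q^j_\alpha\subseteq V^i\cap V^j$ lies on the common (bisector) part of $\partial V^i\cap\partial V^j$, hence on the boundary curves of both tisks; generic position follows from the standing assumption that no point has more than three closest sites (so at most three restricted offsets can share a point) together with the genericity of the sites. The one substantive requirement of saneness is that every surrounded region is \emph{filled} at every scale, and this is exactly the content of the Lemma proved in Section~\ref{sec:restricted_barcodes_2d}, whose proof uses connectivity of sublevel sets on a bisector (Theorem~\ref{thm:pseudodisk_maintext}) and convexity of the Euclidean distance function.

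With saneness established, Theorem~\ref{thm:tisk_theorem} yields, for every $\alpha\geq 0$, an isomorphism $\psi^\ast:H_1(\nerve(\setQ_\alpha))\to H_1(|\setQ_\alpha|)$. Next I would note that $|\setQ_\alpha|=|\setP_\alpha|$: the inclusion $\subseteq$ is immediate from $Q^k_\alpha\subseteq P^k_\alpha$, and conversely if $\distance(x,P^j)\leq\alpha$ for some $j$, then for a closest site $P^k$ we have $\distance(x,P^k)\leq\alpha$ and $x\in V^k$, hence $x\in Q^k_\alpha$. Thus $H_1(\nerve(\setQ_\alpha))\cong H_1(|\setP_\alpha|)$ for all $\alpha$. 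Since $Q^k_\alpha$ grows monotonically with $\alpha$, the restricted nerve filtration is genuinely induced by inclusion, so Theorem~\ref{thm:connection} applies directly (no passage to ``free sites'' is needed here) and shows that these isomorphisms commute with the inclusion-induced maps along the filtration. Feeding the resulting ladder of commuting squares into the persistence equivalence theorem~\cite[p.159]{eh-computational} gives equality of the $1$-barcodes of the restricted nerve filtration and the offset filtration. The $0$-barcode requires none of the tisk machinery: for any cover by closed, path-connected sets the connected components of the nerve biject with those of the union, so $H_0(\nerve(\setQ_\alpha))\cong H_0(|\setQ_\alpha|)=H_0(|\setP_\alpha|)$ compatibly with inclusions, and persistence equivalence again yields equality of the $0$-barcodes.

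I expect the main obstacle to be the verification of saneness, and specifically the ``filled'' condition: one must show that whenever two restricted offsets surround a region $R$ at scale $\alpha$, that region is already entirely covered by the restricted offsets of the sites it contains. This is where convexity of the distance function and the pseudodisk property (Theorem~\ref{thm:pseudodisk_maintext}) are used, and it is the one step that genuinely exploits the planar, convex-polygon structure of the input; everything downstream is a formal application of the abstract tisk theorems. I would also remark that the restriction to $H_0$ and $H_1$ is essential — as the ``ghost sphere'' example shows, the analogous statement fails already for $H_2$, which is why the construction of Section~\ref{sec:higher_dim} is needed in $\R^3$.
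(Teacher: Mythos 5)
Your proposal is correct and follows essentially the same route as the paper's Appendix~A proof: verify that the restricted offsets form a sane tisk-filtration (with the only substantive content being the ``filled'' condition, proved via Theorem~\ref{thm:pseudodisk_maintext} and convexity of the distance function), then apply Theorems~\ref{thm:tisk_theorem} and~\ref{thm:connection} together with $|\setQ_\alpha|=|\setP_\alpha|$ and the persistence equivalence theorem. The one cosmetic difference is your treatment of $H_0$, where you give a direct path-component argument rather than the paper's appeal to a $0$-dimensional analogue of Lemma~\ref{lem:simple_iso_lemma}; both are fine.
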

\begin{proof}
For the $1$-barcode, it is enough to show that the restricted offset filtration yields a sane tisk-filtration 
according to Definition~\ref{def:tisk-filtration}.
By assumption, the restricted offsets are in generic position and interior-disjoint. It remains to show that for any $\alpha$,
any surrounded region is filled. 

Consider a region $R$ surrounded by $A$ and $B$ with boundary curves $a$ and $b$. Let $v_1$, $v_2$ denote the points
on the boundary of $R$ that lie in $a\cap b$. 
Assume wlog that $d(v_1)\leq d(v_2)=:w$. We have to show that the union of the restricted $w$-offset sites cover $R$.
For that, it suffices to show that the unrestricted $w$-offsets of $A$ and $B$ cover $R$, what we show next.
It can easily be seen that the bisector of $A$ and $B$ has a segment within $R$
that connects $v_1$ and $v_2$. With Theorem~\ref{thm:pseudodisk_maintext}, we have that $d(x)\leq w$ for all $x$ on that bisector segment.
Moreover, for any $x$ on the part of $a\setminus b$ that bounds $R$, we must have $d(x)\leq w$ as well. Combining these two properties,
the ``half-region'' of $R$ bounded by $a\setminus b$ and the bisector segment satisfies $d(x)\leq w$ on its boundary and by convexity
of the distance function, $d(x)\leq w$ in the whole region. Applying the same argument on the other half-region, we get the result.

The result for the $0$-barcode is obtained by proving that an analogue version of Lemma~\ref{lem:simple_iso_lemma}
also holds for $0$-homology. The proof for that is similar, but simpler than for the $1$-homology case. 
We omit further details.
\end{proof}

\section{Pseudo-disk property in three dimensions}
\label{app:pseudodisk}

Recall the following famous result on the intersection of Minkowski sums of convex objects~\cite{klps-union}, \cite[Thm.13.8]{dutch}
\begin{theorem}
Let $P_1$, $P_2$ be two convex interior-disjoint polygons and let $R\subset\R^2$ be another convex object. Then, $\partial(P_1\oplus R) \cap \partial(P_2\oplus R)$
consists of at most $2$ points.
\end{theorem}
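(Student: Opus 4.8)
The plan is to prove, after a general‑position perturbation under which the $K_i:=P_i\oplus R$ are strictly convex and $\partial K_1,\partial K_2$ cross transversally in finitely many points, that they cross at most twice (the general case then follows by a standard limiting argument). Equivalently, I want to show that $\partial K_1\setminus K_2$ has at most one connected component. The object that survives Minkowski addition is the function $g:S^1\to\mathbb{R}$, $g(u):=h_{P_1}(u)-h_{P_2}(u)=h_{K_1}(u)-h_{K_2}(u)$ (support functions add under $\oplus$, so the term $h_R$ cancels), and the whole argument revolves around the set $\{u\in S^1:g(u)>0\}$.

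The key step, and the only place the hypothesis $P_1\cap P_2=\emptyset$ enters, is the claim that $\{u:g(u)>0\}$ is connected. Rewriting $g(u)>0$ as $\max_{p\in P_1}\langle p,u\rangle>h_{P_2}(u)$, this set equals $\bigcup_{p\in P_1}B_p$, where $B_p:=\{u\in S^1:\langle p-q,u\rangle>0\ \text{for all }q\in P_2\}$ is an intersection of open half‑circles, hence an open arc of $S^1$; crucially $B_p\neq\emptyset$ for \emph{every} $p\in P_1$, since $p\notin P_2$ and a point lies in a closed convex body iff its support function dominates the linear form it induces. Now the set $\{(p,u)\in P_1\times S^1:\langle p,u\rangle>h_{P_2}(u)\}$ is open in the product, has non‑empty connected fibers over each $p$, and sits over the connected base $P_1$; such a set is connected (if it split into two open pieces, each fiber would lie in one of them, partitioning $P_1$ into two non‑empty open sets), so its projection $\{g>0\}$ is connected. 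This argument breaks when $P_1\cap P_2\neq\emptyset$ -- then $B_p=\emptyset$ for $p$ in the intersection -- mirroring the fact that the theorem itself fails there.

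To finish, let $\Psi:\partial K_1\to S^1$ be the outer‑normal map, and for $u\in S^1$ let $\phi_1(u)$ be the point of $K_1$ maximizing $\langle\cdot,u\rangle$. If $g(u)>0$ then $\langle\phi_1(u),u\rangle=h_{K_1}(u)>h_{K_2}(u)$, hence $\phi_1(u)\notin K_2$, so $\{g>0\}\subseteq\Psi(\partial K_1\setminus K_2)=\bigsqcup_j\Psi(C_j)$, a disjoint union of open arcs over the components $C_j$ of $\partial K_1\setminus K_2$. Conversely, for each $C=C_j$ take the point $y^\ast\in C$ maximizing $\mathrm{dist}(\cdot,K_2)$ over $\overline{C}$; it lies in the \emph{open} arc $C$, because the two endpoints of $C$ are crossing points and hence at distance $0$ from $K_2$. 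Letting $z^\ast\in K_2$ be the nearest point of $K_2$ to $y^\ast$ and $u^\ast:=(y^\ast-z^\ast)/\lVert y^\ast-z^\ast\rVert$, the supporting line of $K_2$ at $z^\ast$ gives $h_{K_2}(u^\ast)=\langle z^\ast,u^\ast\rangle$, whence $g(u^\ast)=h_{K_1}(u^\ast)-\langle z^\ast,u^\ast\rangle\geq\langle y^\ast-z^\ast,u^\ast\rangle=\lVert y^\ast-z^\ast\rVert>0$; and a first/second‑order argument (the gradient of the convex function $\mathrm{dist}(\cdot,K_2)$ at $y^\ast$ is $u^\ast$, which must be normal to $\partial K_1$ at the interior maximum $y^\ast$, while the curvature term excludes the inward normal) shows $u^\ast=\Psi(y^\ast)$, so every $\Psi(\overline{C_j})$ meets $\{g>0\}$. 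Since $\{g>0\}$ is connected and contained in $\bigsqcup_j\Psi(C_j)$, it lies inside a single $\Psi(C_{j_0})$, whose closure is disjoint from $\Psi(C_j)$ for $j\neq j_0$ and meets $\overline{\Psi(C_j)}$ only in images of crossing points, which lie in no $\Psi(C_i)$; this forces there to be exactly one component, i.e.\ at most two crossings. The main obstacle is precisely the connectedness claim of the second paragraph: it is the conceptual crux and the sole place where the disjointness hypothesis is used, while the rest is routine convex‑geometry bookkeeping -- modulo a little care with the general‑position reductions and with the identification $u^\ast=\Psi(y^\ast)$ at possible vertices of $\partial K_1$.
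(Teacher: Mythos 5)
The paper does not prove this statement itself---it only cites it from~\cite{klps-union} and~\cite{dutch}---so there is no in-paper proof to match against. Your argument is correct and follows the same core strategy as the classical proof and as the paper's own three-dimensional generalization in Appendix~B: reduce to the support-function difference $g(u)=h_{P_1}(u)-h_{P_2}(u)$ (in which $R$ cancels), show that the set of directions where $P_1$ is ``more extreme'' is a single arc, and transfer this back to the boundaries via the Gauss/normal map. Where you diverge is in how you establish that $\{g>0\}$ is connected: the textbook proofs and the paper's 3D argument derive this from a separating hyperplane between $P_1$ and $P_2$ (in 3D the paper sweeps a great circle through the $z$-axis and uses the planar result on each slice), whereas you fiber $\{(p,u):\langle p,u\rangle>h_{P_2}(u)\}$ over the connected base $P_1$ with nonempty connected fibers $B_p$ and project. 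That is a clean and self-contained alternative; its cost is that you genuinely need $P_1\cap P_2=\emptyset$ (so that $B_p\neq\emptyset$ for \emph{every} $p\in P_1$), while the theorem is stated for \emph{interior-disjoint} polygons---if $p\in P_1\cap\partial P_2$ then $B_p=\emptyset$ and your product-space argument has empty fibers. You invoke a limiting argument only for the general-position/transversality reduction, but you should also state explicitly that the touching case is handled by shrinking $P_1$ by $(1-\eps)$ towards an interior point and letting $\eps\to 0$ (noting that in the limit ``at most two points'' may degenerate to a tangency, which is what the cited sources allow). The remaining fuzzy spot you flag yourself---identifying $u^\ast$ with the outward normal at the interior maximizer $y^\ast$, including at vertices of $\partial K_1$---is genuine but routine once strict convexity is ensured by perturbation.
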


(Partial) extensions of the theorem have been given, for instance, in~\cite{as-pipes}.
We will restrict our attention to the case that $R$ is the unit ball.
In this appendix, we will prove to the following 
generalization to the three-dimensional case~--we are indebted to Micha Sharir who sketched 
the basic construction idea of the proof~\cite{ShPC14}.
\begin{theorem}
Let $P_1$, $P_2$ be two convex disjoint polyhedra in $\R^3$ 
in general position and let $B$ be the unit ball. 
Then, $\partial(P_1\oplus B) \cap \partial(P_2\oplus B)$
is either empty, a single point, or homeomorphic to a closed cycle.
\end{theorem}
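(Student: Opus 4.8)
Since the theorem fixes $B$ to be the unit ball, by rescaling it is equivalent — and more useful for the rest of the paper — to prove that for every $\alpha>0$ the set $\partial(P_1\oplus\alpha B)\cap\partial(P_2\oplus\alpha B)$ is empty, a single point, or a closed cycle. The first move is to recast the problem on the bisector. Write $\pi_i$ for the nearest-point projection onto $P_i$; it is single-valued and nonexpansive, and $\distance(\cdot,P_i)$ is $C^1$ on $\R^3\setminus P_i$ with gradient the outward unit field $\hat u_i(x)=(x-\pi_i(x))/\distance(x,P_i)$. Put $g:=\distance(\cdot,P_1)-\distance(\cdot,P_2)$ and let $\beta:=g^{-1}(0)$ be the bisector. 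A point of $\beta$ has equal, hence positive, distance to both sites, so $\beta\cap(P_1\cup P_2)=\emptyset$ and $g$ is $C^1$ near $\beta$; moreover $\nabla g=\hat u_1-\hat u_2$ never vanishes on $\beta$, since $\hat u_1(x)=\hat u_2(x)$ would force $\pi_1(x)=\pi_2(x)$, contradicting disjointness. Hence $\beta$ is a $C^1$ embedded surface, and because $\partial(P_i\oplus B)=\{\,\distance(\cdot,P_i)=1\,\}$ we get $\partial(P_1\oplus B)\cap\partial(P_2\oplus B)=f^{-1}(1)$, where $f:=\distance(\cdot,P_1)|_\beta=\distance(\cdot,P_2)|_\beta$ is a $C^1$ function on $\beta$.

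The core of the argument is a gradient-like flow on $\beta$. I would use the locally Lipschitz vector field $v:=-(\hat u_1+\hat u_2)$. A one-line computation gives $\langle\nabla g,v\rangle=-(\|\hat u_1\|^2-\|\hat u_2\|^2)=0$, so $v$ is tangent to $\beta$, while $\langle\hat u_1,v\rangle=\langle\hat u_2,v\rangle=-(1+\langle\hat u_1,\hat u_2\rangle)\le 0$, with equality exactly when $\hat u_1=-\hat u_2$. That equality forces $2x=\pi_1(x)+\pi_2(x)$ with the segment $\pi_1(x)\pi_2(x)$ orthogonal to supporting hyperplanes of both sites, i.e.\ $(\pi_1(x),\pi_2(x))$ realizes $\distance(P_1,P_2)$; by the general-position hypothesis this pair is unique, so $v$ has a single zero, the midpoint $o$ of the shortest segment, and $f(o)=\distance(P_1,P_2)/2=\min f$. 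Thus $f$ is a strict Lyapunov function for $v$ with unique equilibrium $o$. Since the sublevel sets $B_\alpha:=\{x\in\beta:f(x)\le\alpha\}$ are exactly $\beta$ intersected with the convex body $(P_1\oplus\alpha B)\cap(P_2\oplus\alpha B)$, they are compact; hence the flow of $v$ is forward-complete and, by LaSalle, every orbit converges to $o$. Reparametrizing the flow produces a deformation retraction of $\beta$ — and of each $B_\alpha$ — onto $\{o\}$. In particular $\beta$ is a contractible boundaryless surface, so $\beta\cong\R^2$, and every $B_\alpha$ is contractible.

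It then remains to identify $f^{-1}(1)$, splitting on $\distance(P_1,P_2)$. If $\distance(P_1,P_2)>2$ then $f>1$ throughout and $f^{-1}(1)=\emptyset$. If $\distance(P_1,P_2)=2$ then $1=f(o)$ is the strict global minimum, attained only at $o$, so $f^{-1}(1)=\{o\}$. If $\distance(P_1,P_2)<2$ then $1>f(o)$ and the only critical value of $f|_\beta$ is $f(o)$, so $1$ is a regular value; therefore $B_1$ is a compact $C^1$ $2$-manifold with nonempty boundary $\partial B_1=f^{-1}(1)$, and it is also contractible. Since the only compact contractible surface with nonempty boundary is the disk, $B_1\cong D^2$ and $f^{-1}(1)=\partial B_1$ is a closed cycle, as claimed. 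Replacing $1$ by an arbitrary $\alpha>0$ in this last paragraph recovers the auxiliary facts used in the body (the $\alpha$-isolines on the bisector are closed cycles encircling $o$, and the $B_\alpha$ are contractible).

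The step I expect to be the main obstacle is the regularity bookkeeping underpinning the flow argument: first, confirming that $\beta$ really is a $C^1$ (in fact locally $C^{1,1}$ away from the sites) surface although the $P_i$ are merely polyhedra with bisector a patchwork of quadrics — this hinges on the continuity and nonexpansiveness of the projections $\pi_i$, which make $\hat u_1-\hat u_2$ continuous and nonzero across patch boundaries; second, the completeness of the $v$-flow and the convergence of every orbit to the unique zero $o$, with enough uniformity that the retraction $H(x,1)=o$ is continuous (LaSalle gives convergence; local uniformity follows since $o$ is globally asymptotically stable on $\beta$); and third, the careful use of general position to guarantee that $o$ is the \emph{only} zero of $v$ — without it, parallel facing features of $P_1$ and $P_2$ would produce a whole plateau of minima of $f$ and the level sets could fail to be circles. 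A more hands-on alternative is to analyze $f$ and its level sets directly on each of the finitely many quadric patches of $\beta$, but the Lyapunov-function formulation is what keeps the case analysis under control.
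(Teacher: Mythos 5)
Your proof is correct (modulo the regularity bookkeeping you flag, which is real but routine), and it takes a genuinely different route from the paper's. The paper argues on the Gauss sphere: it separates $P_1$ and $P_2$ by a plane, partitions the unit sphere $U$ of directions into the set $E_1$ where $P_1$ is ``more extreme,'' $E_2$ where $P_2$ is, and the equally-extreme set $E_{12}$; it shows $E_{12}$ is a circle by slicing with planes through the $z$-axis and reducing to the two-dimensional statement, transports this circle to $\partial\bigl((P_1\oplus B)\cap(P_2\oplus B)\bigr)$ via an $\eps$-smoothing and the outward-normal homeomorphism, identifies its image with $\partial(P_1\oplus B)\cap\partial(P_2\oplus B)$ through a supporting-direction argument, and then needs a separate paragraph ruling out self-intersections of the resulting curve via tangent planes and genericity. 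Your approach instead works intrinsically on the bisector $\beta$: the field $v=-(\hat u_1+\hat u_2)$ is a gradient-like flow for $f=\distance(\cdot,P_1)|_\beta$ with a unique stationary point $o$ at the midpoint of the shortest segment, so each compact sublevel set is contractible, $1$ is a regular value when $\distance(P_1,P_2)<2$, and $B_1$ is a compact contractible surface with boundary, hence a disk, so $f^{-1}(1)\cong S^1$. What your version buys: the simplicity/non-self-intersection of the curve and its separation of $\beta$ come for free from the regular-value theorem plus surface classification, rather than needing a separate genericity argument; the three cases (empty, point, circle) fall out of a single trichotomy on $\distance(P_1,P_2)$; and, as you note, you get the auxiliary facts the paper uses later (isolines are circles around $o$, sublevel sets of the bisector are contractible) directly, which in the paper are deduced as corollaries. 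What the paper's version buys: it avoids any smoothness assumptions on $\beta$ or any dynamical-systems machinery, staying entirely within elementary convex geometry of supporting planes, and its Gauss-map setup is the natural one if one wants to generalize $B$ beyond the ball to an arbitrary convex ``summand.'' Both proofs use general position in the same essential place — to guarantee that the minimizing pair (your unique zero $o$; the paper's implicit use that the self-intersection argument does not degenerate) is unique.

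One small point worth tightening in your writeup: for the deformation retraction of all of $\beta$ (as opposed to a compact $B_\alpha$) onto $\{o\}$, continuity at $t=1$ after reparametrization is not automatic on a non-compact domain; but you do not need this for the theorem itself — contractibility of the compact $B_1$, which you do get cleanly from LaSalle on a compact invariant set, already forces $B_1\cong D^2$ and finishes the argument. Also, as with the paper, the argument tacitly assumes $P_1,P_2$ are bounded (so $(P_1\oplus\alpha B)\cap(P_2\oplus\alpha B)$, and hence $B_\alpha$, is compact); stating that explicitly would be cleaner.
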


\begin{proof}
Since $P_1$ and $P_2$ are convex and interior-disjoint, there is a plane $h$ that separates the interior of the two sets.
We can assume w.l.o.g. that $h$ passes through the origin and that $h$ equals to $xy$-plane after a suitable translation and rotation.
We assume wlog that $P_1$ is above and $P_2$ is below $h$.
We fix the unit sphere $U$. Any point on $U$ defines a direction $v$ in $\R^3$. 
The direction $v$ defines an ordering of all planes that are normal to $v$. We say the a plane $e$ is \emph{further than} a plane $e'$
if $e'$ comes after $e$ when we go in direction $v$. Let $e_1$ be the furthest plane normal to $v$ that intersects $P_1$. The same way, $e_2$ is the furthest plane for $P_2$.
We call $P_1$ \emph{more extreme} than $P_2$ in direction $v$ if $e_1$ is further than $e_2$. We call them \emph{equally extreme} if $e_1$ and $e_2$ coincide.
Clearly, $U$ partitions into $E_1$, the set of directions that are more extreme for $P_1$, $E_2$, the set of directions more extreme for $P_2$,
and $E_{12}$, the directions which are equally extreme.

First of all, note that $E_1$ and $E_2$ are non-empty,
because in the upwards direction $z$, $P_1$ is more extreme, and in the downwards direction $-z$, $P_2$ is more extreme.
We show that $E_{12}$ is a closed cycle. For that, note that $h$
cuts out the equator from $U$. Letting $v$ denote a vector on the equator, let $C$ be the plane spanned by $v$ and $z$.
The projections $\hat{P}_1$, $\hat{P}_2$ of $P_1$, $P_2$ to $C$ are convex sets in the plane. Fixing a direction $c$ in $C$, we can observe
that $c$ is more extreme for $P_1$ than $P_2$ if and only if $c$ is more extreme for $\hat{P}_1$ than for $\hat{P}_2$,
and the same is true for equally extreme points.
For the planar case, however, it is known that there are precisely $2$ equally extreme directions, cutting the unit circle into two parts,
one of which gives all directions where $P_1$ is more extreme, and one where $P_2$ is more extreme. When varying $v$, the equally extreme
directions vary continuously, so the family of all such planes traces out a cycle of directions where $P_1$ and $P_2$
are equally extreme, cutting $U$ into two disks, one more extreme for $P_1$ and one more extreme for $P_2$.

Now, consider the set $K:=\partial(P_1\oplus B \cap P_2\oplus B)$. $K$ is either empty, a single point, or a $3$-dimensional convex shape.
We assume that it is $3$-dimensional, since the statement follows easily otherwise.
We consider the $\eps$-expansion $K_\eps:=K\oplus B_\eps$ of $K$, where $\eps>0$ arbitrary. 
Every point on $\partial K_\eps$ originates from a unique point on $\partial K$; 
we let $\phi:\partial K_\eps\rightarrow \partial K$ denote the corresponding map.
Also, $\partial K_\eps$ is differentiable. Therefore, every point on the boundary has a unique tangent plane and an associated
outward normal vector. Vice versa, every direction appears as an outward normal vector exactly once because $K_\eps$ is convex.
This implies the existence of a homeomorphism $\psi:U\rightarrow \partial K_\eps$. In particular, the cycle $E_{12}$
maps to a cycle $\gamma=\psi(E_{12})$ on $\partial K_\eps$.
Furthermore, for $p\in\partial K$, we call $v\in U$ a \emph{supporting direction} for $p$ if $K$ is completely contained in the half-space 
that contains $p-v$ and is bounded by the plane through $p$ that is normal to $v$. Then, $v$ is a supporting direction for $p$ if and only if there exists 
a $p'\in \partial K_\eps$ such that $\phi(p')=p$ and $v$ is the outward normal vector of $p'$.

Set $I:=\partial(P_1\oplus B) \cap \partial(P_2\oplus B)$.
Remember that we want to show that $I$ is empty, a point, or a cycle. 
We argue that $I=\phi(\gamma)$. Let $p\in I$.
Therefore, by the basic properties of Minkowski sums, 
there exists a supported direction $v_1$ by $p$ such that $p=p_1+v_1$ with $p_1\in P_1$
the most extreme point in $v_1$ direction. Since $p\in (P_2\oplus B)$ as well, 
it follows that $P_1$ is not more extreme than $P_2$ in direction $v_1$.
The same way, $p=p_2+v_2$, with $p_2\in P_2$ and $v_2$ also a supported direction by $p$,
and $P_2$ is not more extreme than $P_1$ in direction $v_2$.
Now, there exist points $p_1',p_2'\in\partial K_\eps$ with $\phi(p_1')=\phi(p_2')=p$
and the outward normal vectors of $p_1'$, $p_2'$ are $v_1$ and $v_2$, respectively.
Since $P_1$ is not more extreme than $P_2$ in direction $v_1$, we have that 
$p_1'\in \psi(E_2)\cup\gamma$. The same way, $p_2'\in\psi(E_1)\cup\gamma$.
If either of the points lies on $\gamma$, we are done. Otherwise,
since the preimage of $p$ under $\phi$ is connected, there is a path from $p_1'$
to $p_2'$ in the preimage, and this path has to cross $\gamma$. The intersection
point then asserts that $p\in\phi(\gamma)$. The opposite directions follows by
a similar argument.

Since $\phi$ is continuous, 
the previous argument shows that $I$ is closed curve on $\partial K$, but we have to exclude
the case that $I$ is self-intersecting.
Assume for a contradiction the presence of a self-intersection at $p\in\partial K$
and consider a sufficiently small neighborhood around $p$ on $\partial K$.
By assumption, there are at least $4$ arcs of $I$ leaving $p$;
for simplicity, assume that there are precisely $4$ arcs
(the argument easily extends to the case of higher singularities). 
These arcs cut the neighborhood into $4$ sectors.
Each sector is a part of $\partial K$ which either belongs to
$\partial(P_1\oplus B)$ or to $\partial(P_2\oplus B)$
(it can not belong to both, because then, the whole sector would belong to $I$).
By generic position, the sectors are alternating between 
$\partial(P_1\oplus B)$ and $\partial(P_2\oplus B)$.
Now, since $\partial(P_1\oplus B)$ is differentiable and $p$ is part of it,
there is a unique tangent plane $T_1$ at $p$ for $\partial(P_1\oplus B)$.
The same applies for the set $\partial(P_2\oplus B)$, and we let $T_2$ denote
the tangent plane with respect to this set.
Because of the alternating sector property, it follows that $T_1=T_2$.
Let $v$ be the direction normal to $T_1$. It follows that there exist
points $p_1\in P_1$ and $p_2\in P_2$ such that $p_1+v=p=p_2+v$.
This implies that $p_1=p_2$, contradicting the assumption that $P_1$
and $P_2$ are disjoint.
\end{proof}

\end{appendix}
\end{document}